\theoremstyle{plain}
\newtheorem{thm}{Theorem}[section]
\newtheorem{lem}[thm]{Lemma}
\newtheorem{prop}[thm]{Proposition}
\theoremstyle{remark}
\newtheorem{rem}[thm]{Remark}
\newtheorem{defi}[thm]{Definition}
\newcommand{\R}{{\mathbb R}}
\newcommand{\re}{\operatorname{Re}}
\newcommand{\Tr}{\operatorname{Tr}}
\newcommand{\1}{\chi}
\newcommand{\Ai}{\operatorname{Ai}}
\newcommand{\ext}{\operatorname{ext}}
\def\FF{\mathcal{F}}
\def\C{\mathbb{C}}
\def\R{\mathbb{R}}
\numberwithin{equation}{section}
\newcommand{\RN}[1]{%
	\textup{\uppercase\expandafter{\romannumeral#1}}%
}
\tikzset{->-/.style={decoration={
			markings,
			mark=at position #1 with {\arrow{latex}}},postaction={decorate}}}
\tikzset{-<-/.style={decoration={
			markings,
			mark=at position #1 with {\arrowreversed{latex}}},postaction={decorate}}}
\tikzset{cross/.style={cross out, draw, 
		minimum size=2*(#1-\pgflinewidth), 
		inner sep=0pt, outer sep=0pt}}
\begin{document}

\title[Free energy expansions of a conditional complex Ginibre ensemble]{Free energy expansions of a conditional G{\SMALL in}UE and 
\\ large deviations of the smallest eigenvalue of the LUE
}

\author{Sung-Soo Byun}
\address{Department of Mathematical Sciences and Research Institute of Mathematics, Seoul National University, Seoul 151-747, Republic of Korea}
\email{sungsoobyun@snu.ac.kr}

\author{Seong-Mi Seo}
\address{Department of Mathematics, Chungnam National University, Daejeon 34134, Republic of Korea.}
\email{smseo@cnu.ac.kr}

\author{Meng Yang}
\address{Department of Mathematics, School of Sciences,  Great Bay University, Dongguan 523000, China}
\email{my@gbu.edu.cn}

\begin{abstract}
We consider a planar Coulomb gas ensemble of size $N$ with the inverse temperature $\beta=2$ and external potential $Q(z)=|z|^2-2c \log|z-a|$, where $c>0$ and $a \in \C$. 
Equivalently, this model can be realised as $N$ eigenvalues of the complex Ginibre matrix of size $(c+1) N \times (c+1) N$ conditioned to have deterministic eigenvalue $a$ with multiplicity $cN$. 
Depending on the values of $c$ and $a$, the droplet reveals a phase transition: it is doubly connected in the post-critical regime and simply connected in the pre-critical regime.
In both regimes, we derive precise large-$N$ expansions of the free energy up to the $O(1)$ term, providing a non-radially symmetric example that confirms the Zabrodin-Wiegmann conjecture made for general planar Coulomb gas ensembles.  
As a consequence, our results provide asymptotic behaviour of moments of the characteristic polynomial of the complex Ginibre matrix, where the powers are of order $O(N)$. 
Furthermore, by combining with a duality formula, we obtain precise large deviation probabilities of the smallest eigenvalue of the Laguerre unitary ensemble. 
A key ingredient for the proof lies in the fine asymptotic behaviour of a planar orthogonal polynomial, extending a result of Betola et al \cite{BBLM15}. This result holds its own interest and is based on a refined Riemann-Hilbert analysis using the partial Schlesinger transform.
\end{abstract}

\date{\today}

\maketitle


\section{Introduction}

\subsection{Models and numerology}

In this work, we obtain precise asymptotic behaviour up to the $O(1)$ term in the context of the following interrelated topics. 

\begin{itemize}
    \item Zabrodin-Wiegmann prediction on the partition functions of planar Coulomb gas ensembles: a case study for a conditional complex Ginibre ensemble breaking the rotational symmetry.  
    \smallskip 
    \item Asymptotic behaviour of moments of the characteristic polynomial of the complex Ginibre ensemble. 
    \smallskip 
    \item Large deviation probabilities of the smallest eigenvalue of the Laguerre unitary ensemble. 
\end{itemize}

\noindent Due to a certain duality relation (Proposition~\ref{Prop_equivalence}) these topics are indeed equivalent, and readers may find a particular viewpoint most interesting based on their individual interests.
In the aforementioned topics, a specific phase transition occurs, yielding  distinct geometric/probabilistic implications in each context, see Subsection~\ref{Subsec_phase transition} for details, cf. Figures~\ref{Fig_droplet} and ~\ref{Fig_MPLDP}. 

\medskip 

Let us be more precise in introducing our models and formulations. 

\subsubsection{GinUE and its characteristic polynomials}
We begin with the complex Ginibre ensemble (GinUE) $\textbf{\textup{G}}_N$, an $N\times N$ matrix whose entries are given by independent centered complex Gaussian random variables with variance $1/N$, see \cite{BF22} for recent reviews.  
It is well known that the eigenvalues $\{ z_j \}_{j=1}^N$ of $\textbf{\textup{G}}_N$ follow the joint probability distribution 
\begin{equation} \label{GinUE jpdf}
\frac{1}{Z_N^{ \rm Gin } } \prod_{j>k=1}^N|z_j-z_k|^2 \prod_{j=1}^N e^{-N|z_j|^2} \,dA(z_j), \qquad Z_N^{ \rm Gin } = N! \frac{ G(N+1) }{ N^{N(N+1)/2} }, 
\end{equation}
where $dA(z)=d^2z/\pi$ is the area measure and $G$ is the Barnes $G$-function \cite[Section 5.17]{NIST}. 
Here, $Z_N^{ \rm Gin }$ is the normalisation constant, known as the partition function that makes \eqref{GinUE jpdf} a probability measure. 
As $N \to \infty$, the eigenvalues $\{z_j \}_{j=1}^N$ tend to be uniformly distributed on the unit disk, known as the circular law, see e.g. \cite{Ja23} for a recent progress. 
Note here that we have a simple weight function $e^{-N|z|^2}$ in \eqref{GinUE jpdf}, which enables explicit computations of the GinUE statistics. 
For instance, the evaluation of $Z_N^{\rm Gin}$ follows from Andr\'{e}ief’s formula together with the norm of the associated orthogonal polynomial (which is in this case monomial).

For the GinUE model, we shall investigate the moment of its characteristic polynomial
\begin{equation} \label{GinUE char poly}
\mathbb{E} \Big| \det(\textbf{\textup{G}}_N -z) \Big|^{2\gamma}, \qquad \gamma \ge 0.  
\end{equation}
For a fixed value of $\gamma \ge 0$, the asymptotic behaviour of \eqref{GinUE char poly} was recently obtained by Webb and Wong \cite{WW19} for the bulk regime $|z|<1$, and by Deaño and Simm \cite{DS22} for the edge regime $|z|=1+O(1/\sqrt{N})$.
This asymptotic behaviour can be applied to construct a Gaussian multiplicative chaos measure \cite{RV14}, see also \cite{Lam20}.
We mention that the moment of the characteristic polynomial in Hermitian random matrix theory has been extensively studied, see e.g. \cite{We15, BWW18, LOS18, Ch19, Kra07,CG21,CFWW21}.
Along the similar spirit of \cite{WW19,DS22}, we study the asymptotic behaviour of \eqref{GinUE char poly}, but instead of a fixed $\gamma$, we consider the exponentially varying regime $\gamma=O(N)$. 
Namely, we examine the regime where $\gamma$ is scaled as $\gamma= cN$ for a fixed parameter $c > 0$. 
In this case, a phase transition occurs at the critical value $|z|=\sqrt{c+1}-\sqrt{c}$, and we will investigate all the regimes that arise.

\subsubsection{Partition function of a determinantal Coulomb gas}

As a second formulation, we consider a conditional point process. 
For this purpose, we fix a parameter $c>0$ and consider the GinUE of size $(c+1) N \times (c+1) N$ conditioned to have deterministic eigenvalue $a \ge 0$ with multiplicity $cN$. 
Then the remaining $N$ random eigenvalues $\{z_j\}_{j=1}^N$ follow the distribution 
\begin{equation} \label{iGinUE jpdf}
\frac{1}{Z_N(a,c) } \prod_{j>k=1}^N|z_j-z_k|^2 \prod_{j=1}^N |z_j-a|^{2cN} e^{-N|z_j|^2} \,dA(z_j).   
\end{equation} 
This model is called the induced GinUE \cite{FBKSZ12,AKS23} and its partition function $Z_N(a,c)$ is also called a massive partition function in the quantum chromodynamics related literature \cite{AV03}. 
Note that if $c>0$, the ensemble is rotationally symmetric only for the case $a=0$, and in this case, again explicit computations lead to 
\begin{equation} \label{ZN(0,c) evaluation}
Z_N(0,c)= N! \frac{G(N+cN+1)}{ G(cN+1) } N^{ -(c+\frac12)N^2-\frac{1}{2}N }. 
\end{equation}

From the statistical physics viewpoint, the model \eqref{iGinUE jpdf} can be realised as a planar Coulomb gas with inverse temperature $\beta=2$ (also known as the random normal matrix model \cite{AHM11,CZ98}) and the external potential 
\begin{equation} \label{Q insertion}
Q(z)= |z|^2-2c \log|z-a|.  
\end{equation} 
We mention that potentials of this type are sometimes called Hele-Shaw potentials, see \cite{AT24} for a recent work. 
As $N \to \infty$, the ensemble \eqref{iGinUE jpdf} tends to be uniformly distributed on a certain droplet $S \equiv S_Q$, see Figure~\ref{Fig_droplet}. 
The precise shape of the droplet is characterised in \cite{BBLM15}, and it reveals a topological phase transition, see Subsection~\ref{Subsec_phase transition}.  
In the theory of Coulomb gases, the large-$N$ expansion of the free energy $\log Z_N(a,c)$ is a fundamental topic \cite{ZW06,LS17}, as the coefficients of this expansion provide essential potential theoretic/geometric properties of the model, see Subsection~\ref{Subsec_ZW} for more details. 

\begin{figure}[t]
	\begin{subfigure}{0.3\textwidth}
		\begin{center}			\includegraphics[width=\textwidth]{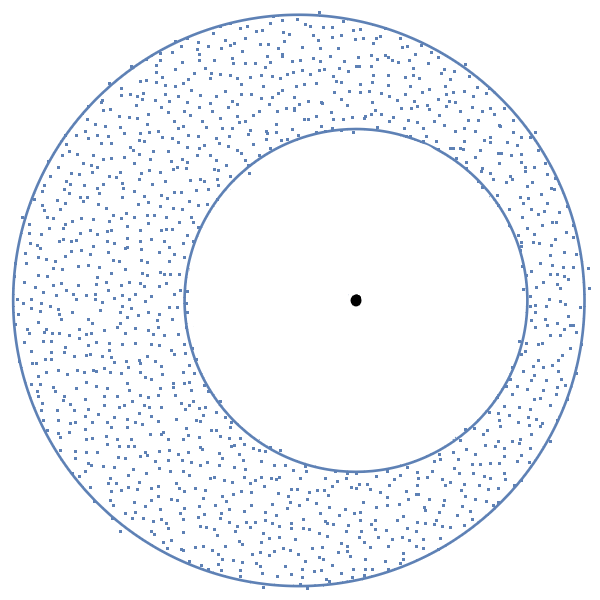}
		\end{center}
		\subcaption{Post-critical; $a=1/4$}
	\end{subfigure}	
  \quad
	\begin{subfigure}{0.3\textwidth}
		\begin{center}	
	\includegraphics[width=\textwidth]{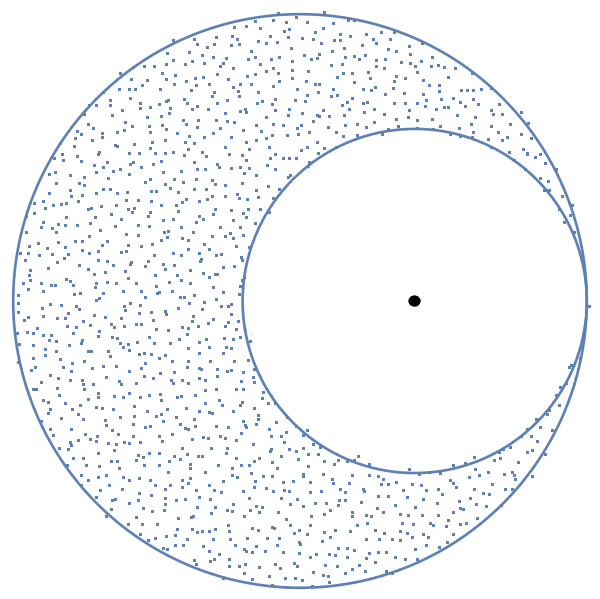}
		\end{center}
		\subcaption{Critical; $a=1/2$}
	\end{subfigure}	
 \quad
	\begin{subfigure}{0.3\textwidth}
		\begin{center}	
	\includegraphics[width=\textwidth]{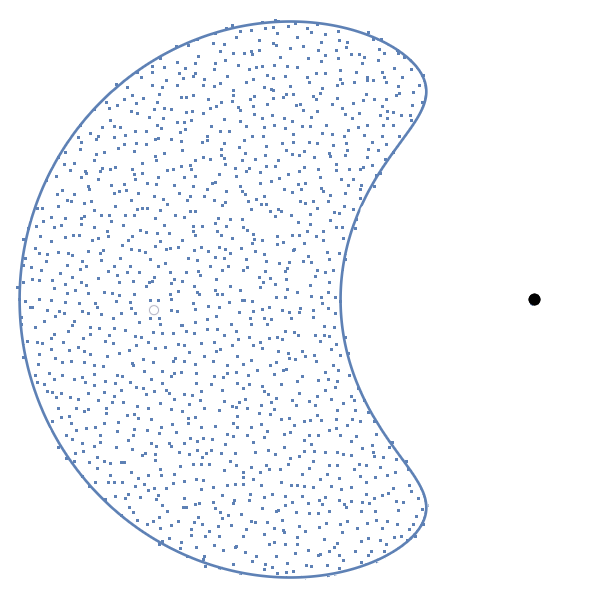}
		\end{center}
		\subcaption{Pre-critical; $a=1$}
	\end{subfigure}	
	\caption{Illustration of the droplet, where $c=9/16$. The black dot indicates the point $a.$ } \label{Fig_droplet}
\end{figure}

\subsubsection{Laguerre unitary ensemble and its smallest eigenvalue}

The third formulation is in the context of a classical Hermitian random matrix model \cite{Fo10}. 
We consider an $N\times N$ Wishart matrix, also known as the Laguerre unitary ensemble (LUE) $\textbf{W}_N= \textbf{R}_N \textbf{R}_N^*, $  where $\textbf{R}_N$ of size $N \times (\alpha+1)N$ is a rectangular complex Ginibre matrix. 
Here $\alpha \ge 0$ is the rectangular parameter.  
Then the joint probability distribution of eigenvalues $ \{ \lambda_j\}_{j=1}^N$ of $\textbf{W}_N$ is proportional to 
\begin{equation} \label{LUE}
 \prod_{j>k=1}^N  |\lambda_j-\lambda_k|^{ 2 } \prod_{j=1}^N \lambda_j^{ \alpha N } e^{ - N \sum_{j=1}^N \lambda_j }, \qquad (\lambda_N>\dots>\lambda_1 >0).  
\end{equation}
It is well known that the as $N \to \infty$, the empirical measure of the LUE follows the Marchenko-Pastur distribution
\begin{equation} \label{MP}
\frac{1}{2\pi}\frac{\sqrt{(\lambda_{+}-x) (x-\lambda_{-})  }}{x}\cdot  \mathbbm{1}_{ [\lambda_{-},\lambda_{+} ] }(x), \qquad   \lambda_{\pm}:=(\sqrt{\alpha+1}\pm 1)^2.
\end{equation} 
In this context, we focus on the statistics of the smallest eigenvalue $\lambda_1$. 
Such a statistic of the LUE, or a more general sample covariance matrix where the Gaussian entries are replaced by an i.i.d. random variable, finds several applications in random matrix theory, see e.g. \cite{WG13,BLX23,GH23} and references therein.

\begin{figure}[t]
	\begin{subfigure}{0.3\textwidth}
		\begin{center}			\includegraphics[width=\textwidth]{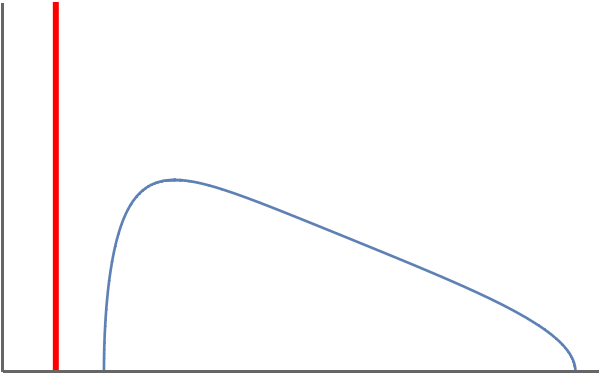}
		\end{center}
		\subcaption{Pulled; $t=\lambda_--1$}
	\end{subfigure}	
  \quad
	\begin{subfigure}{0.3\textwidth}
		\begin{center}	
	\includegraphics[width=\textwidth]{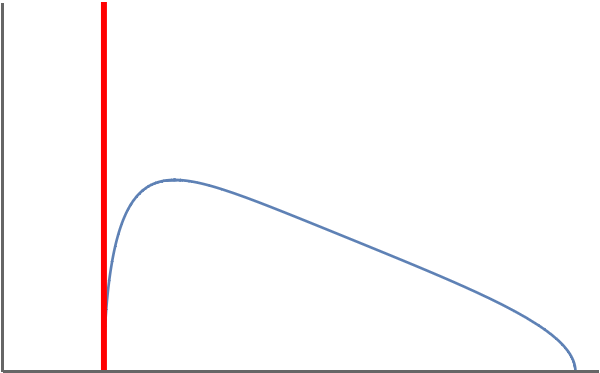}
		\end{center}
		\subcaption{Critical; $t=\lambda_-$}
	\end{subfigure}	
 \quad
	\begin{subfigure}{0.3\textwidth}
		\begin{center}	
	\includegraphics[width=\textwidth]{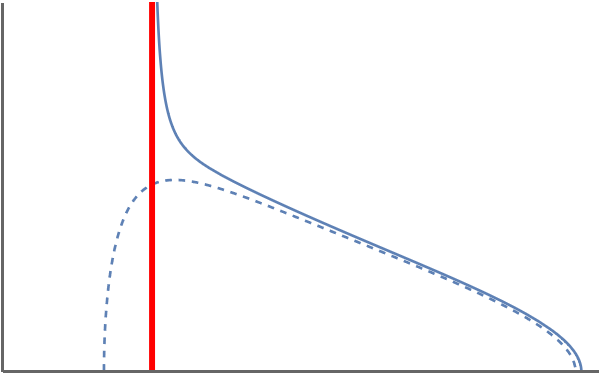}
		\end{center}
		\subcaption{Pushed; $t=\lambda_-+1$}
	\end{subfigure}	
	\caption{Illustration of the Marchenko-Pastur law \eqref{MP} and constrained spectral density \eqref{constrained LUE density}, where $\alpha=5$. Here, the vertical (full red) line indicates the hard wall $x=t$ of the LUE.  
 } \label{Fig_MPLDP}
\end{figure}

\subsubsection{Duality relation}
We now discuss the equivalence of the above three formulations. 
First, notice that by definition, the characteristic polynomial and partition functions are related as  
\begin{align}  \label{rel char poly ZN}
\mathbb{E} \Big| \det(G_N-z) \Big|^{2cN} = \frac{1}{ Z_N^{ \rm Gin } } \int_{ \C^N }\prod_{j>k=1}^N |z_j-z_k|^{2} \prod_{j=1}^{N} |z-z_j|^{2cN} e^{-N\, |z_j|^2 }  \,dA(z_j) =  \frac{ Z_N(|z|,c) }{ Z_N^{ \rm Gin } } .
\end{align}
The following equivalence was introduced in \cite{FR09,NK02}.
This is a restatement of \cite[Proposition 3.1]{DS22} after several transformations, and for the reader's convenience, we provide the details in Section~\ref{Subsec_critical}.

\begin{prop}[\textbf{Duality relation}] \label{Prop_equivalence} 
Let $\lambda_1$ be the smallest eigenvalue of the LUE in \eqref{LUE}. 
For a fixed $c>0$, we put $\alpha=1/c$. 
Then for any $x \in \R$, we have 
\begin{align}
\begin{split} \label{equivalence btw three}
\mathbb{ P }\Big[ \lambda_1 > \frac{x^2}{c} \Big] & =     e^{ -cN^2 x^2 }   \frac{ Z_N(x,c) }{ Z_N(0,c) } \bigg|_{N \to N/c}.
\end{split}
\end{align} 
\end{prop}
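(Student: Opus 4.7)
The claim is essentially a restatement of the classical duality between moments of the complex Ginibre characteristic polynomial and hard-edge gap probabilities for the LUE, due to \cite{FR09,NK02} and recorded in a form close to ours as \cite[Proposition~3.1]{DS22}. My plan is therefore to quote this identity and carry out the parameter translation to arrive at \eqref{equivalence btw three}.

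Concretely, I would first use the elementary identity \eqref{rel char poly ZN} to rewrite the right-hand side of \eqref{equivalence btw three}: after the formal substitution $N \mapsto N/c$, the ratio $Z_{N/c}(x,c)/Z_{N/c}^{\rm Gin}$ becomes the characteristic-polynomial moment $\mathbb{E}\,|\det(G_{N/c}-x)|^{2N}$, where $G_{N/c}$ denotes the $(N/c)\times(N/c)$ GinUE matrix in our normalisation \eqref{GinUE jpdf}. I would then apply the DS22 duality, which expresses this exact moment as an exponential prefactor times the gap probability $\P[\lambda_1 > \tau]$ for the LUE with parameter $\alpha = 1/c$. With the conventions \eqref{GinUE jpdf}--\eqref{LUE}, a direct matching of parameters forces $\tau = x^2/c$ and produces the prefactor $e^{-cN^2 x^2}$ on the nose, which is exactly the left-hand side of \eqref{equivalence btw three}.

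The closing step is to cancel the normalising constants, for which the explicit Barnes-$G$ evaluation \eqref{ZN(0,c) evaluation} combined with the Selberg-type formula for the LUE partition function gives the desired identity. The main obstacle is not conceptual but clerical: our GinUE has entrywise variance $1/N$, whereas \cite{DS22} typically works with a standard (unnormalised) complex Gaussian matrix, and the LUE of \eqref{LUE} is rescaled so that its limiting density is the Marchenko--Pastur law \eqref{MP} on $[\lambda_-,\lambda_+]$ rather than on the conventional interval $[0,4]$. Both rescalings alter the argument of the characteristic polynomial and the eigenvalue variable by powers of $N$, and one has to track the resulting Jacobians and Gaussian prefactors meticulously; any slip by a factor of $N$ would immediately manifest itself as a mismatch in the exponential $e^{-cN^2 x^2}$ or in the power of $N$ in \eqref{ZN(0,c) evaluation}. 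Modulo this bookkeeping, which the paper carries out in Section~\ref{Subsec_critical}, no further input is needed.
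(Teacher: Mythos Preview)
Your proposal is correct and follows essentially the same route as the paper: both invoke \cite[Proposition~3.1]{DS22} to convert the Ginibre characteristic-polynomial moment into an LUE gap probability, then perform the parameter translation and cancel the normalising constants via the Barnes-$G$ evaluations \eqref{ZN reference evaluations}. The only cosmetic difference is the order of operations---the paper applies the duality at size $N$ and relabels $cN\mapsto N$ at the end, whereas you substitute $N\mapsto N/c$ first and then quote the duality---but the content is identical.
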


Due to this duality relation and the well-known asymptotic expansion \eqref{Barnes G asymp} of the Barnes $G$-function, it becomes evident that the asymptotic behaviours of the three aforementioned formulations are equivalent. 
We mention that the duality relation originates from the supersymmetry method, cf. \cite{NK02,Gr16,Fyo18}. 
Remarkably, this relation expresses the integral \eqref{GinUE char poly} over an $N\times N$ non-Hermitian random matrix in terms of an integral over a $\gamma \times \gamma$ Hermitian random matrix. 
In our present case, where $\gamma=cN$, we further make the change of variables $cN \mapsto N$, resulting in the formula \eqref{equivalence btw three} with the parameter $\alpha=1/c$ of the LUE. 
It is noteworthy that such a duality relation finds application in various contexts of random matrix theory, see e.g. \cite{LZ23,SS23,SS23a}.

\subsection{Free energy expansion; Zabrodin-Wiegmann prediction} \label{Subsec_ZW}

Among the above three formulations, let us discuss the free energy expansion from a viewpoint of a more general Coulomb gas theory. 
In general, the partition function of the random normal matrix model is given by 
\begin{align} \label{ZN V general}
Z_{N}^V := \int_{\C^N} \prod_{j>k=1}^N |z_j-z_k|^{2} \prod_{j=1}^{N}  e^{-N \, V(z_j) }  \,dA(z_j),  
\end{align} 
where $V: \C \to \R$ is a given external potential. 

To describe the asymptotic behaviour of $Z_N^V$, we recall some potential theoretic notions \cite{ST97}. 
Given a compactly supported probability measure $\mu$ on $\mathbb{C}$, the weighted logarithmic energy $I_V[\mu]$ associated with the potential $V$ is given by 
\begin{equation} \label{energy}
I_V[\mu]:= \int_{ \mathbb{C}^2 } \log \frac{1}{ |z-w| }\, d\mu(z)\, d\mu(w) +\int_{ \mathbb{C} } V \,d\mu .
\end{equation}
For a general $V$, there exists the unique minimizer $\sigma_V$ called the equilibrium measure. 
Furthermore, due to Frostman's theorem, it is of the form
\begin{equation} \label{eq msr form}
d\sigma_V(z) = \Delta V(z) \,\mathbbm{1}_{S_V}(z) \, dA, \qquad \Delta =\partial \bar{\partial}, 
\end{equation}
where the compact support $S_V$ is called the droplet. 
It is conjectured that if the droplet is \emph{connected}, the partition function $Z_N^V$ has the asymptotic expansion of the form  
\begin{equation} \label{Z expansion}
\begin{split}
\log Z_N^V &= - I_V[\sigma_V] N^2 +\frac12 N \log N + \Big( \frac{\log(2\pi)}{2}-1 - \frac12 \int_\C \log(\Delta V) \,d\sigma_V \Big) N
\\
&\quad + \frac{6-\chi}{12} \log N + \frac{\log(2\pi)}{2} + \chi \, \zeta'(-1) + \FF_V + o(1).
\end{split}
\end{equation}
Here $\chi$ is the Euler characteristic of the droplet and $\zeta$ is the Riemann zeta function. 
We refer the reader to \cite[Section 1.1]{BKS23} and \cite[Section 5.4]{BF22} for the development of the expansion \eqref{Z expansion}. 

Among the variety of literature on the expansion \eqref{Z expansion}, we mention that Lebl\'{e} and Serfaty \cite{LS17} proved the expansion \eqref{Z expansion} up to the order $O(N)$ in the context of a more general $\beta$ ensemble, see also \cite{Se23,BBNY19} for quantitative error bounds. 
The topology-dependence of the $O(\log N)$ term was introduced in the work of Jancovici et al. \cite{JMP94,TF99} through exactly solvable examples such as (induced) Ginibre and spherical ensembles. 
The $O(1)$ term reflects a conformal geometric property of the droplet. More precisely, in \cite{ZW06}, Zabrodin and Wiegmann made use of the Ward's identities in conformal field theory (see e.g. \cite[Appendix 6]{KM13}) and proposed a remarkable prediction, suggesting that the term $\mathcal{F}_V$ can be expressed in terms of the zeta-regularized determinant of the exterior droplet. 
In particular, for a quasi-harmonic potential $V$ (i.e. $\Delta V$ is a constant), this conjecture reads 
\begin{equation}
\FF_V= -\frac12 \log \textup{det}_\zeta ( \Delta_{ \C \setminus S_V } ),
\end{equation}
cf. Remark~\ref{Rem_detLap}. 
For a general potential, one needs an additional term related to the first correction of the global density. 

The expansion of the form \eqref{Z expansion} was obtained in a recent work \cite{BKS23} for a radially symmetric potential $V$ with $\Delta V > 0$ in $\mathbb{C}$. This strictly sub-harmonic assumption is crucial in \cite{BKS23}, as it leads to a droplet that is either a disc or an annulus. The asymptotic behaviour of the partition function $Z_N^V$ has been further investigated in \cite{ACC23} to include the case where the ensemble exhibits a spectral gap (i.e. $V$ can be such that $\Delta V<0$). In particular, it was shown in \cite{ACC23} that if the droplet has multiple components, a non-trivial oscillatory term (a new ``displacement term'') emerges in the $O(1)$ term in the expansion, indicating that the expansion of the form \eqref{Z expansion} does not hold. Furthermore, in \cite{ACC23}, the case with the harmonic measure perturbation of the order $O(1/N)$ in the potential and Fisher-Hartwig singularities \cite{Kl17} has been investigated, which are closely related to the fluctuation theorem \cite{ACC23a}.

\begin{rem}[Potential with a hard edge]
In the above discussions, we have focused on the case that $V$ is supported on the whole complex plane, leading to a droplet with a soft edge. 
In contrast, if $V$ is confined to a subset of the droplet, the expansion of the free energy takes on a notably different form. This hard edge regime finds applications in different contexts including the counting statistics \cite{BC22,Ch22,ACCL22,ACCL23,ABES23} and hole probabilities \cite{Fo92,AS13,Ad18,GN18,Ch23,Ch23a}, cf. Remark~\ref{Rem_sqrtN}.
Moreover, when $V$ is supported on a Jordan curve, the asymptotic behaviour of the free energy has also been investigated, see e.g. \cite{WZ22,CJ23,JV23} and references therein. 
\end{rem}


In this work, we obtain a precise expansion in the form of \eqref{Z expansion} for the potential given by \eqref{Q insertion}. 
To our knowledge, this provides the first non-radially symmetric example (aside from more explicitly solvable models, such as the elliptic GinUE, which is solvable by virtue of classical special function theory) that confirms this conjecture.
Indeed, the elliptic GinUE \cite[Section 2.3]{BF22}, indexed by the non-Hermiticity parameter $\tau \in [0,1)$, is the only known non-radially symmetric example for which the free energy expansion in the form of \eqref{Z expansion} is known. However, in this case, the free energy is simply the same as that of the GinUE up to an additive constant of $ \frac{N}{2} \log(1-\tau^2)$.

Due to the lack of the rotational symmetry, it requires a different approach compared to \cite{BKS23,ACC23}, and we implement the partial Schlesinger transform \cite{BL08} to refine the Riemann-Hilbert analysis in \cite{BBLM15,LY23,KLY23}. 
Furthermore, as explained in the previous subsection, it follows from Proposition~\ref{Prop_equivalence} that this result implies the asymptotic behaviour of moments of the characteristic polynomials of the GinUE, as well as the large-deviation probabilities of the smallest eigenvalue of the LUE.

\subsection{Phase transition} \label{Subsec_phase transition}

As previously mentioned, the model we consider undergoes an interesting phase transition. We now precisely describe such a transition.
Let $S \equiv S_Q$ be the droplet in \eqref{eq msr form} associated with the potential $Q$ given in \eqref{Q insertion}. 
The droplet reveals a topological phase transition at the critical value
\begin{equation} \label{c cri a cri}
c_{\rm cri}:= \frac{(1-a^2)^2}{4a^2}, \qquad a_{ \rm cri }:= \sqrt{c+1}-\sqrt{c} .
\end{equation}
In the post- and pre-critical regimes, the parameters $a$ and $c$ are assumed to be fixed (i.e. independent of $N$).
The droplet is then given as follows \cite[Section 2]{BBLM15}. 
\begin{itemize}
    \item \textbf{Post-critical regime}: $c < c_{ \rm cri }$, i.e. $a< a_{\rm cri}.$
    In this case,  the droplet is given by 
    \begin{equation}
    S=\overline{\mathbb{D}(0,\sqrt{1+c})} \setminus \mathbb{D}(a,\sqrt{c}),
    \end{equation}
    where $\mathbb{D}(z,r)$ is a disk with center $z$ and radius $r>0.$
    \smallskip 
    \item \textbf{Pre-critical regime}\footnote{Compared to \cite{BBLM15}, we have replaced the notations: $\alpha \to q$ and $\rho \to R$.}: $c > c_{ \rm cri }$, i.e. $a > a_{\rm cri}.$  In this case, the droplet is a simply connected domain whose boundary is given by the image of the unit circle under the conformal map 
\begin{equation} \label{f conformal}
f(z)=R\,z-\frac{\kappa}{z-q}-\frac{\kappa}{q}, \qquad R=\frac{1+a^2q^2}{2a q}, \qquad \kappa=\frac{(1-q^2)(1-a^2q^2)}{2aq}.
\end{equation}
Note that $R>0$ is the conformal radius of the droplet. 
Here, $q \equiv q(a)$ satisfies $f(1/q)=a$ and it is given by the  unique solution of the algebraic equation  
  \begin{equation} \label{q equation}
q^6 -\Big( \frac{a^2+4c+2}{2a^2} \Big) q^4+\frac{1}{2a^4} =0  
 \end{equation}
such that $0<q<1$ and $\kappa>0.$ 
\end{itemize}

See Figure~\ref{Fig_droplet} for the shape of the droplet. 
From the above description of the droplet, one can see that the Euler characteristic $\chi$ of the droplet $S_Q$ is given by 
\begin{equation} \label{Euler char droplet}
\chi= \begin{cases}
0 &\textup{for the post-critical case,}
\smallskip
\\
1 &\textup{for the pre-critical case}.
\end{cases}
\end{equation}
This will play an important role in the free energy expansion. 

Let us mention that there are some more examples of the droplets revealing a topological phase transition, see e.g. \cite{By23a,ABK21,BM15,CK22}. 
A notable feature of the phase transition is the emergence of a singular boundary point. In our present case, it is a merging (double) point that falls into the class of Sakai's regularity theory \cite{Sa92,LM16}. 
Such singular boundaries are of particular interest from the viewpoint of the non-standard universality classes, and have been studied for several  different models, see e.g. \cite{BGM18,BK12,MS19,AKMW20,BLY21} and references therein.

\medskip 
Beyond the post- and pre-critical regimes, it is also natural to study the behaviour of the ensemble \eqref{iGinUE jpdf} in the critical regime, see \cite{KLY23} for a recent work on the local statistics. 
Let us first define the critical regime. 

\begin{defi}[Critical scaling regime]
The critical case corresponds to the scaling regime
$$
1-a(a+2\sqrt{c})=O(N^{-2/3}).
$$  
We consider a parameter $s \in \R$ describing the critical regime, see \cite[Eq.(1.36)]{BBLM15}. 
Then the parameter $a$ satisfies 
\begin{align} \label{a cri asymp}
a = a_{\rm cri}- \frac{ ( \sqrt{c+1}-\sqrt{c})^{1/3} }{ 2c^{1/6} (c+1)^{1/6} }\frac{s}{N^{2/3}} +O\Big(\frac{1}{N^{4/3}}\Big).
\end{align} 
As $s$ varies, it parametrises the transition between the critical, post-critical, and pre-critical regimes, with the limits $s \to +\infty$ and $s \to -\infty$ reducing to the post-critical and pre-critical regimes, respectively. See Remark~\ref{Rem_Topoly transition} for a discussion of this transition behaviour in the free energy expansion.
\end{defi}

Turning to the LUE statistics, note that in the setup in Proposition~\ref{Prop_equivalence} with $x=a \ge 0$, it follows from \eqref{MP} that
\begin{equation}
\frac{x^2}{c} <\lambda_- = \Big( \sqrt{\frac{c+1}{c}}-1\Big)^2, \quad \textup{if and only if} \quad a< \sqrt{c+1}-\sqrt{c}=a_{ \rm cri }. 
\end{equation}
Hence, the topological phase transition of the droplet can be naturally interpreted from the viewpoint of lower and upper large deviation probabilities.
This transition is also called the push-to-pulled transition, see Figure~\ref{Fig_MPLDP}. 
Furthermore, while the order $O(N^{-2/3})$ in the critical scaling regime might be less intuitive in the two-dimensional model \eqref{iGinUE jpdf}, it becomes clear if we interpret this in the context of the LUE given the standard square root decay of the Marchenko-Pastur law \eqref{MP} at the soft edge.
In this context, a universal pulled-to-pushed transition of the third order has been observed, see e.g. \cite{MS14,CFLV18}. 
See Remark~\ref{Rem_Topoly transition} for more details.


\section{Main results}

In this section, we present our main results. 
Recall that the partition function $Z_N(a,c)$ is given by 
\begin{align}
 Z_N(a,c) \equiv Z_N^Q  := \int_{\C^N} \prod_{j>k=1}^N |z_j-z_k|^{2} \prod_{j=1}^{N}  e^{-N Q(z_j) }  \,dA(z_j), 
\end{align}
where $Q$ is given by \eqref{Q insertion}.
As discussed in Subsection~\ref{Subsec_ZW}, the logarithmic energy \eqref{energy} should appear in the leading term of the free energy expansion.
We first evaluate the energy associated with the potential $Q$ explicitly. 

\begin{prop}[\textbf{Evaluation of the energy}] \label{Prop_energy eval}
Let $Q$ be given by \eqref{Q insertion}. Then we have the following. 
\begin{itemize}
    \item For the post-critical case, we have  
    \begin{equation} \label{energy post}
I_Q[\sigma_Q]= \mathcal{I}^{ \rm post }(a,c) := \frac{3}{4}+\frac{3c}{2}+\frac{c^2}{2}\log c -\frac{(c+1)^2}{2}\log(c+1) - ca^2. 
\end{equation}
\item For the pre-critical case, we have 
\begin{align}
\begin{split} \label{energy pre}
I_Q[\sigma_Q] =  \mathcal{I}^{ \rm pre }(a,c) & :=   \frac{3}{8} + \frac{a^2}{8} + \frac{3}{8a^2q^4} - \frac{5}{8q^2} + \Big(\frac{3}{4} + \frac{a^2}{8}\Big)a^2q^2 - \frac{3a^4q^4}{8}
\\
&\quad +\log(2aq)+2 c \log(2aq^2) +  \log \frac{ (1+a^2q^2-2a^2q^4)^{c^2} }{ (1+a^2q^2)^{ (c+1)^2 } }   ,
\end{split}
\end{align}
where $q = q(a)$ is given by \eqref{q equation}.
\end{itemize}
Furthermore, for a fixed $c>0$, suppose that $a> a_{ \rm cri }$. Then we have  
\begin{align} \label{energy inequality}
 \mathcal{I}^{ \rm post }(a,c) \le \mathcal{I}^{ \rm pre }(a,c), 
\end{align}
where the equality holds when $a=a_{\rm cri}$. 
\end{prop}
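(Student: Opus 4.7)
The key identity is the one arising from the Euler--Lagrange equation for $\sigma_Q$: on $S_Q$,
\[
U^{\sigma_Q}(z)+\tfrac12 Q(z)=F_Q,\qquad U^{\sigma_Q}(z):=\int\log\frac{1}{|z-w|}\,d\sigma_Q(w),
\]
where $F_Q$ is the modified Robin constant. Integrating this identity against $\sigma_Q$ gives $I_Q[\sigma_Q]=F_Q+\tfrac12\int Q\,d\sigma_Q$. Since $\Delta Q=1$ away from $a$ and $a\notin S_Q$ in both regimes, $d\sigma_Q=\mathbf{1}_{S_Q}\,dA$, so the task reduces to evaluating $F_Q$ and $\int Q\,d\sigma_Q$ in closed form.

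In the post-critical case $\sigma_Q$ decomposes as the signed difference $\mathbf{1}_{\mathbb{D}(0,\sqrt{c+1})}dA-\mathbf{1}_{\mathbb{D}(a,\sqrt{c})}dA$ of two uniform disc measures whose logarithmic potentials are classical. Evaluating $U^{\sigma_Q}+\tfrac12 Q$ at any $z$ with $|z|=\sqrt{c+1}$ produces $F_Q=\tfrac{c+1}{2}(1-\log(c+1))$. The integral $\int_{S_Q}|z|^2\,dA$ is evaluated directly (writing $|z|^2=|z-a|^2+2a\,\re(z-a)+a^2$ on the inner disc), and $-2c\int_{S_Q}\log|z-a|\,dA$ reduces to a radial integral via the mean-value identity $\tfrac{1}{2\pi}\int_0^{2\pi}\log|re^{i\theta}-a|\,d\theta=\max(\log r,\log a)$. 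Assembling these pieces yields \eqref{energy post}.

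In the pre-critical case $\partial S_Q=f(\partial\mathbb{D})$, and since $f$ has real coefficients, the Schwarz symmetry $\overline{f(w)}=f(1/w)$ holds on $|w|=1$. We compute $F_Q$ by substituting $z=f(w)$, $|w|=1$, in the Euler--Lagrange identity. For $\int Q\,d\sigma_Q=\int_{S_Q}|z|^2\,dA-2c\int_{S_Q}\log|z-a|\,dA$, each integrand is rewritten as a $\partial$-derivative (using $|z|^2=\partial(\tfrac12 z^2\bar z)$ and a similar antiderivative for $\log|z-a|$), and Stokes' formula converts the area integrals into contour integrals along $\partial S_Q$. Pulling back via $z=f(w)$ turns these into contour integrals of rational functions of $w$ over $|w|=1$, whose only pole inside the unit disc lies at $w=q$; the residue theorem yields closed-form answers in terms of $R,\kappa,q,a$. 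Eliminating $R,\kappa$ through \eqref{f conformal} and using the algebraic constraint \eqref{q equation} then reduces them to \eqref{energy pre}. The main technical obstacle is precisely this algebraic consolidation of the residue contributions into the stated compact form.

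For the comparison, \eqref{q equation} admits the root $q=1$ at $a=a_{\rm cri}$, at which $\kappa=0$ and $f(w)=\sqrt{c+1}\,w$ parametrises the disc $\mathbb{D}(0,\sqrt{c+1})$. Substituting $q=1$ into \eqref{energy pre} and using the identities $1-a^2=2\sqrt{c}\,a$ and $1+a^2=2\sqrt{c+1}\,a$ valid at $a_{\rm cri}$ causes \eqref{energy pre} to collapse to \eqref{energy post}, giving equality. For $a>a_{\rm cri}$, the envelope theorem provides
\[
\tfrac{d}{da}\mathcal{I}^{\rm pre}(a,c)=2c\int\re\tfrac{1}{z-a}\,d\sigma_Q^{\rm pre}(z),\qquad \tfrac{d}{da}\mathcal{I}^{\rm post}(a,c)=-2ca.
\]
The Cauchy integral on the right is evaluated by the same residue method used in the pre-critical energy computation (or read off from the Schwarz function $z\mapsto\bar z$ on $\partial S_Q^{\rm pre}$), and the resulting bound $\tfrac{d}{da}(\mathcal{I}^{\rm pre}-\mathcal{I}^{\rm post})\ge 0$ on $(a_{\rm cri},\infty)$, combined with the equality at $a_{\rm cri}$, completes the proof of \eqref{energy inequality}.
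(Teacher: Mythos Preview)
Your proposal is correct and follows the same overall architecture as the paper: the Euler--Lagrange decomposition $I_Q[\sigma_Q]=C(a)+\tfrac12\int Q\,d\sigma_Q$, disc-potential calculus in the post-critical case, and Green/Stokes plus residues via the conformal map $f$ in the pre-critical case. Two points of execution differ. First, for $\int_{S_Q}\log|z-a|\,dA$ in the pre-critical regime, the paper does \emph{not} build a Stokes primitive for $\log|z-a|$; instead it identifies this integral as $\re g(a)$ (the Riemann--Hilbert $g$-function evaluated at $a$) and computes $\re g(a)$ by integrating the Schwarz function $\mathcal S(f(w))=f(1/w)$ along $w\in[1,1/q]$. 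Your direct route is viable but needs a little more care than ``a similar antiderivative'' suggests, since $\log|z-a|$ is not holomorphic; also, in the $|z|^2$ residue step the integrand has poles at both $w=q$ and $w=0$, not only at $q$. Second, for the inequality \eqref{energy inequality} the paper skips the envelope theorem and simply differentiates the explicit formulas \eqref{energy post}--\eqref{energy pre} in $a$, using the relation $q'/q=\tfrac{1}{2a}\tfrac{1+a^4q^4-2a^4q^6}{a^4q^6-1}$ obtained from \eqref{q equation}, to get the manifestly positive expression $\mathfrak d_a(\mathcal I^{\rm pre}-\mathcal I^{\rm post})=\tfrac{(1-q^2)^2(1-a^4q^4)}{2aq^4}$. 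Your envelope-theorem argument is equally valid and more conceptual; the paper's version is more computational but yields the exact derivative, which it reuses elsewhere.
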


See Figure~\ref{Fig_energy} for the graphs of the energy. We mention that the inequality \eqref{energy inequality} plays an important role in Theorem~\ref{Thm_LUE} below. 
The evaluation of energy is closely related to the equilibrium measure problem. Using potential theory and conformal mapping methods, Proposition~\ref{Prop_energy eval} will be established in Section~\ref{Section_energy}.

\begin{figure}[t]
    \centering
    \includegraphics[width=0.5\textwidth]{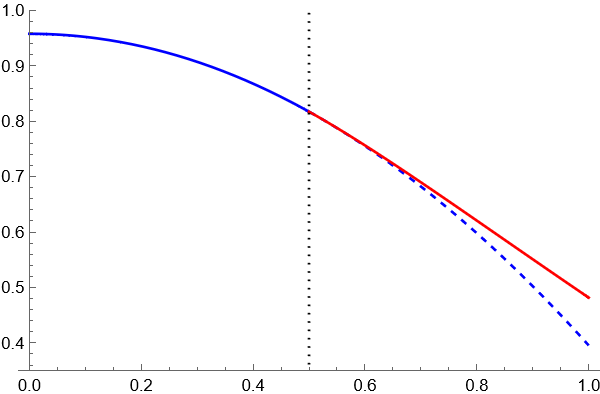}
    \caption{The plot shows the graph of the energy $a \mapsto I_Q[\sigma_Q]$, where $c=9/16$. Here, the dotted vertical line represents $a=a_{\rm cri}=1/2$. The graph (full line) for $a<a_{\rm cri}$ follows \eqref{energy post}, while for $a>a_{\rm cri}$ it follows \eqref{energy pre}. The dotted line for $a>a_{\rm cri}$ is the continuation of the graph \eqref{energy post}. }
    \label{Fig_energy}
\end{figure}

Recall that the Bernoulli number $B_k$ is given in terms of the generating function as 
\begin{equation}
\frac{t}{e^t-1} = \sum_{k=0}^\infty B_k \frac{t^k}{k!}.
\end{equation}
We are now ready to state our main result. 

\begin{thm}[\textbf{Free energy expansion for the post- and pre-critical cases}] \label{Thm_main}
Let $Q$ be given by \eqref{Q insertion}. Then as $N \to \infty,$ we have 
\begin{equation}
\begin{split}  \label{Z expansion main}
\log Z_N(a,c) & = -I_Q[\sigma_Q] N^2 + \frac12 N \log N + \Big( \frac{\log(2\pi)}{2}-1 \Big) N 
\\
&\quad + \frac{6-\chi}{12} \log N + \frac{\log (2\pi)}{2} +\chi \, \zeta'(-1) +  \mathcal{F}(a,c) + \mathcal{E}_N , 
\end{split}
\end{equation}
where $I_Q[\mu_Q]$ is the energy given in Proposition~\ref{Prop_energy eval}, $\chi$ is the Euler characteristic of the droplet $S_Q$ given in \eqref{Euler char droplet} and $\zeta$ is the Riemann zeta function.
Here $\mathcal{F}(a,c)$ and the error term $\mathcal{E}_N$ are given as follows.
\begin{itemize}
    \item For the post-critical case, we have 
    \begin{equation} \label{det Lap post}
    \mathcal{F}(a,c) = \mathcal{F}^{ \rm post }(a,c) : =  \frac{1}{12} \log \Big( \frac{c}{1+c}\Big)
    \end{equation}
    and 
    \begin{equation} \label{error for post}
    \mathcal{E}_N =  \sum_{k=1}^M \bigg( \frac{B_{2k}}{ 2k(2k-1) } \frac{1}{N^{2k-1}} + \frac{ B_{2k+2} }{ 4k(k+1) } \Big( \frac{1}{(c+1)^{2k}}-\frac{1}{c^{2k}}\Big) \frac{1}{N^{2k}} \bigg) +O(\frac{1}{N^{2M+1}})
    \end{equation}
    for any $M>0$, where $B_k$ is the Bernoulli number. \smallskip 
    \item For the pre-critical case, we have 
    \begin{equation} \label{det of Lap pre}
     \mathcal{F}(a,c) = \mathcal{F}^{ \rm pre }(a,c) : =   \frac{1}{24} \log \Big( \frac{(1+a^2q^2-2a^2q^4)^4}{ (1+a^2q^2)^4 (1-q^2)^3 (1-a^4q^6) } \Big)   
    \end{equation}
    and $\mathcal{E}_N = O(N^{-1}).$ 
\end{itemize}
\end{thm}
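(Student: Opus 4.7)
The plan is to reduce the problem to the standard orthogonal polynomial framework and then carry out a refined Riemann--Hilbert analysis. Writing $p_{n,N}$ for the monic degree-$n$ orthogonal polynomial with respect to the weight $e^{-NQ(z)}\,dA(z)$ and $h_{n,N} := \|p_{n,N}\|^2$, the Vandermonde factorization gives
\begin{equation}
\log Z_N(a,c) \;=\; \log N! \;+\; \sum_{n=0}^{N-1} \log h_{n,N}.
\end{equation}
The task then splits into two steps: first, obtain a sharp large-$N$ expansion of $h_{n,N}$, uniform over the scale $n = tN$ with $t \in (0,1]$; second, apply the Euler--Maclaurin summation formula to $\sum_n \log h_{n,N}$, which, through $\log \Gamma$ and the Barnes $\log G$, generates the $\tfrac12 N\log N$, the entropy term of order $N$, the $\tfrac{6-\chi}{12}\log N$, and the $\chi\,\zeta'(-1)$ contributions in \eqref{Z expansion main}. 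The leading $-I_Q[\sigma_Q]\,N^2$ term is supplied directly by Proposition~\ref{Prop_energy eval}.

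For the post-critical regime the droplet is a genuine annulus ($\chi = 0$), and I would exploit an explicit contour-integral representation of $h_{n,N}$ obtained from the weight $|z-a|^{2cN}e^{-N|z|^2}$. A classical Laplace/saddle-point analysis, valid uniformly for $n/N$ bounded away from the two edges, produces a complete asymptotic series in $1/N$. Binet's integral representation of $\log \Gamma$ then converts this into the Bernoulli expansion \eqref{error for post}. Near the two edges $n = O(1)$ and $n = N-O(1)$ a boundary layer analysis must be carried out; the contributions at those scales affect only the $O(1)$ constant and, after cancellation against the Euler--Maclaurin boundary terms, produce $\mathcal{F}^{\rm post}(a,c)$ in \eqref{det Lap post}.

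For the pre-critical regime the droplet is simply connected ($\chi = 1$) and no explicit formula for $p_{n,N}$ is available. Here I would apply the Deift--Zhou nonlinear steepest descent to the $\bar\partial$-Riemann--Hilbert problem for planar orthogonality along the lines of \cite{BBLM15,LY23,KLY23}, built around the conformal map $f$ from \eqref{f conformal}. The novel ingredient is the partial Schlesinger transform of \cite{BL08}, which improves the global parametrix by one order in $1/N$ and thereby captures the precise sub-leading term needed to extract the $O(1)$ constant. The conformal radius $R$, the pole parameter $q$, and the residue $\kappa$ from \eqref{f conformal} will enter the global parametrix and hence the asymptotics of $h_{n,N}$; summing via Euler--Maclaurin and carefully collecting the boundary terms should reproduce $\mathcal{F}^{\rm pre}(a,c)$ in the form \eqref{det of Lap pre}, with an $O(N^{-1})$ error that reflects the accuracy limit of the steepest descent scheme.

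The main obstacle will be twofold. First, the Euler--Maclaurin step requires the one-point asymptotics of $h_{n,N}$ to be uniform in $n$, including at the two endpoints, where boundary-layer corrections contribute to both the $O(N)$ and $O(1)$ terms and must be tracked exactly; this is what forces the careful matching of bulk, edge, and global pieces of the Riemann--Hilbert output. Second, and more delicate in the pre-critical case, is identifying the precise combinatorial factor that produces $\mathcal{F}^{\rm pre}(a,c)$ in the form \eqref{det of Lap pre}: the coefficients output by the partial Schlesinger transform have to be re-expressed in terms of $q$ and the auxiliary quantities from \eqref{q equation}, and this matching is what ultimately identifies $\mathcal{F}^{\rm pre}$ with $-\tfrac12\log\det_\zeta(\Delta_{\mathbb{C}\setminus S_Q})$, thus confirming the Zabrodin--Wiegmann prediction for this non-radially symmetric model.
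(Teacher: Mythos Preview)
Your proposal follows the direct-summation route --- expand each $h_{n,N}$ and sum via Euler--Maclaurin --- which is how the radially symmetric case in \cite{BKS23,ACC23} proceeds. The paper takes a different and more efficient path: it differentiates in the deformation parameter $a$ rather than summing in $n$. Via the $\tau$-function identity (Lemma~\ref{Lem_tau function rep} and Proposition~\ref{Prop_ZN YN coeff}) one has $\mathfrak{d}_a \log Z_N(a,c)=2N\,\mathfrak{A}_{11}$, where $\mathfrak{A}_{11}$ is the subleading coefficient of the \emph{single} polynomial $p_N$ as $z\to\infty$. One then integrates in $a$ from a rotationally symmetric reference point ($a=0$ in the post-critical case, $a=\infty$ in the pre-critical case), whose partition function is known in closed form via Barnes $G$ (Lemma~\ref{Lem_Reference Z}). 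The Riemann--Hilbert analysis with the partial Schlesinger transform (Theorem~\ref{Thm_OP fine asymp}) is therefore carried out only once, at degree $n=N$, and only the residue of $p_N$ at infinity is needed.

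Your plan has a genuine gap in the post-critical case: for $a\neq 0$ there is \emph{no} explicit formula for $h_{n,N}$ --- the weight is not radially symmetric, the orthogonal polynomials are not monomials, and the contour orthogonality of \cite{BBLM15} still leaves a full Riemann--Hilbert problem, not something amenable to a ``classical Laplace/saddle-point analysis'' as you describe. In both regimes your Euler--Maclaurin step further requires asymptotics of $h_{n,N}$ uniform in $n/N\in(0,1]$, which means an RH analysis whose equilibrium data (droplet, $g$-function, motherbody) vary with $n/N$ and may pass through a topological transition inside the sum. The paper's deformation trick sidesteps both issues: by fixing $n=N$ and varying $a$ within a single topological phase toward an exactly solvable endpoint, the critical regime is never crossed, and all of the Bernoulli-number and $\zeta'(-1)$ structure in \eqref{Z expansion main} comes for free from the Barnes $G$ expansion of the reference partition function rather than from an Euler--Maclaurin computation. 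The $\tau$-function identity and the choice of reference points ($a=0$ versus $a=\infty$) are the two key ideas you are missing.
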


\medskip 

As explained above, up to the $O(N)$ term, Theorem~\ref{Thm_main} is a special case of \cite[Corollary 1.1]{LS17}. 
It is obvious that the entropy in the $O(N)$ term of \eqref{Z expansion} vanishes since $\Delta Q = 1$. The constant term $\FF(a, c)$ coincides with the prediction \eqref{Z expansion}, as we discuss now.


\begin{rem}[Regularized determinant of Laplacian] \label{Rem_detLap}

Let $0<\lambda_{1,D} \le \lambda_{2,D} \le \dots$ be the eigenvalues of the Dirichlet Laplacian $\Delta$ on a domain $D \subset \C$. 
Then the spectral zeta function is defined by 
\begin{equation}
\zeta_{\Delta} (s;D):= \sum \lambda_{j,D}^{-s}. 
\end{equation}
This is a building block to define the zeta-regularized determinant of $\Delta$: 
\begin{equation}
\textup{det}_\zeta( \Delta_D ):= \exp(-\zeta_\Delta'(0;D)),
\end{equation}
see e.g. \cite{HZ99} for more details. 
The spectral determinant can also be expressed in terms of several different domain functionals in conformal geometry, such as Brownian loop measure \cite{LSW03,LW04}, Loewner energy \cite{Wa19a, Wa19b}, and the Grunsky operator \cite{TT06,JV23}.
It is also used to describe large deviation principles (see e.g. \cite{Wa19a,PW23}), aligning with the same spirit as our result, especially from the viewpoint of the LUE statistics. 

If the derivative of the potential $\partial V$ is rational, $\textup{det}_\zeta( \Delta_{\C \setminus S_V} )$ can be made more explicit as discussed in \cite[Section 6.1]{ZW06}.
As a consequence, for the post-critical case, it can be observed that  
\begin{equation}
 \log \textup{det}_\zeta( \Delta_{ \C \setminus S } )= -\frac{1}{6} \log\Big( \frac{c}{1+c}\Big) 
\end{equation}
since the boundary of the droplet is a union of two circles.
For the pre-critical case\footnote{The conformal map $z(w)$ was introduced below Eq.(5.22) of \cite{ZW06}, which in our case is $f(z)$.  }, let
\begin{equation} \label{z pm}
z_\pm = q \pm i\sqrt{ \frac{\kappa}{R} }  
\end{equation}
be the critical points of the conformal map $f$ given in \eqref{f conformal}. 
Then we have 
\begin{align}
\begin{split}
\log \textup{det}_\zeta ( \Delta_{ \C \setminus S } ) &= \frac1{12} \log \Big( \frac{R^4 \,f'(1/z_+) f'(1/z_-) }{ f'(1/q)^2 } \Big) 
= - \frac{1}{12} \log \Big( \frac{ (1+a^2q^2-2a^2q^4) ^4 }{ (1-q^2)^3 (1-a^4q^6) (1+a^2q^2)^4 } \Big) .
\end{split}
\end{align}
Therefore, one can observe that $\mathcal{F}(a,c)=-\frac12 \log \textup{det}_\zeta ( \Delta_{ \C \setminus S } )$, confirming the prediction of Zabrodin and Wiegmann. 

We mention that in our present case, since we consider the quasi-harmonic potential $\Delta Q \equiv 1$, the equilibrium measure \eqref{eq msr form} has a flat metric. 
On the other hand, if $\Delta Q$ is not a constant, the determinant of Laplacian with respect to a non-trivial conformal metric or its conformal transformation law can be obtained via the Polyakov-Alvarez conformal anomaly formula \cite{Po81,Al83}.
\end{rem}


\begin{rem}[Absence of the $O(\sqrt{N})$ term for $\beta=2$] \label{Rem_sqrtN}
As a side remark, it is worth mentioning the belief for a general $\beta$ ensemble that there exists an $O(\sqrt{N})$ term, with a coefficient proportional to $\log(\beta/2)$ called the surface tension. 
This conjecture was made in an unpublished note of Lutsyshin and first appeared in \cite{CFTW15}, see also \cite{SS15,Se23,Kl16}. 
However, since the coefficient is expected to be proportional to $\log (\beta/2)$, the absence of the $O(\sqrt{N})$ term in the determinantal case $\beta=2$ is, in this context, a statement of prediction. 
The absence of the $O(\sqrt{N})$ term has been verified for the rotational symmetry case \cite{BKS23,ACC23}. However, one might question whether this absence truly results from $\beta=2$ or from rotational symmetry. Nonetheless, as per Theorem~\ref{Thm_main}, one can observe that even without the rotational symmetry, there is no $O(\sqrt{N})$ term for $\beta=2$. (It is worth noting however that the $O(\sqrt{N})$ term does arise for $\beta=2$ when considering the hard wall constraints of the potential \cite{ACCL22,BC22,Ch22,Ch23}).
\end{rem}

\medskip 

We now discuss the critical regime. 
It can be expected from the duality relation (Proposition~\ref{Prop_equivalence}) that the expansion of the free energy in the critical regime is closely related to the Tracy-Widom distribution: 
\begin{equation} \label{def of TW distribution}
 F_{ \rm TW }(t):= \exp\Big( -\int_t^{\infty} (x-t) \boldsymbol{q}(x)^2\,dx \Big),
\end{equation}
where $\boldsymbol{q}$ is the Hastings–McLeod solution to Painlevé II equation
\begin{equation}
\boldsymbol{q}''(s)=s\boldsymbol{q}(s)+2\boldsymbol{q}(s)^3, \qquad \boldsymbol{q}(s) \sim \Ai(s) \quad \mbox{as }s\to \infty.
\end{equation}
Then we have the following. 

\begin{prop}[\textbf{Free energy expansion for the critical regime}] \label{Prop_critical expansion}
For a given fixed parameter $c>0$, let $a$ be scaled as \eqref{a cri asymp}. 
Then as $N \to \infty,$ we have
\begin{align}
\begin{split} \label{ZN exp critical}
\log Z_N(a,c) &= -\Big( \frac{3}{4}+\frac{3c}{2}+\frac{c^2}{2}\log c -\frac{(c+1)^2}{2}\log(c+1) - ca^2 \Big) N^2 
\\
&\quad +\frac12 N \log N +\Big( \frac{\log(2\pi)}{2}-1 \Big)N  +\frac12 \log N
\\
&\quad + \frac{\log(2\pi)}{2}+\frac{1}{12} \log \Big( \frac{c}{1+c}\Big)  + \log F_{ \rm TW }( c^{-2/3}s ) +O(N^{-2/3}) ,
\end{split}
\end{align}
where $F_{ \rm TW }$ is the Tracy-Widom distribution. 
\end{prop}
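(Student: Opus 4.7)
The plan is to leverage the duality relation of Proposition~\ref{Prop_equivalence} and thereby reduce the problem to the Tracy-Widom asymptotic for the smallest LUE eigenvalue at the soft edge. Solving \eqref{equivalence btw three} for $\log Z_N(a,c)$ (after relabelling so that $N$ denotes the Coulomb gas size) gives
\begin{equation*}
\log Z_N(a,c) = cN^2 a^2 + \log Z_N(0,c) + \log \mathbb{P}\Big[\lambda_1^{(Nc)} > \tfrac{a^2}{c}\Big],
\end{equation*}
where $\lambda_1^{(Nc)}$ is the smallest eigenvalue of the LUE of size $Nc$ with rectangular parameter $\alpha = 1/c$. Each term on the right feeds into a distinct part of \eqref{ZN exp critical}.

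First, I would expand $\log Z_N(0,c)$ by inserting the explicit evaluation \eqref{ZN(0,c) evaluation} and applying Stirling's formula to $\log N!$ together with the standard asymptotic expansion of the Barnes $G$-function to $\log G((c+1)N+1)$ and $\log G(cN+1)$. A direct bookkeeping (identical to the one in the post-critical case of Theorem~\ref{Thm_main} specialised to $a=0$) yields
\begin{equation*}
\log Z_N(0,c) = -\mathcal{I}^{\rm post}(0,c)\,N^2 + \tfrac{N}{2}\log N + \bigl(\tfrac{\log(2\pi)}{2}-1\bigr)N + \tfrac{1}{2}\log N + \tfrac{\log(2\pi)}{2} + \tfrac{1}{12}\log\tfrac{c}{c+1} + O(N^{-1}),
\end{equation*}
which already supplies every term of \eqref{ZN exp critical} except the $F_{\rm TW}$ factor and the $ca^2$ piece of the leading coefficient. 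The latter is absorbed from the $cN^2 a^2$ prefactor of the duality via the identity $\mathcal{I}^{\rm post}(a,c) = \mathcal{I}^{\rm post}(0,c) - ca^2$, which is immediate from \eqref{energy post}.

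Second, I would analyse the LUE probability in the critical scaling. Using $\lambda_{-} = a_{\rm cri}^2/c$ and \eqref{a cri asymp}, a direct computation gives
\begin{equation*}
\frac{a^2}{c} = \lambda_{-} - \frac{a_{\rm cri}^{4/3}}{c^{7/6}(c+1)^{1/6}}\,\frac{s}{N^{2/3}} + O(N^{-4/3}),
\end{equation*}
which places $a^2/c$ exactly at the Tracy-Widom scale of the LUE soft edge (the prefactor $a_{\rm cri}^{4/3}c^{-7/6}(c+1)^{-1/6}$ matches $c^{-2/3}$ multiplied by the universal soft-edge scale of the LUE of size $Nc$ read off from the Marchenko-Pastur density \eqref{MP} at $\lambda_{-}$). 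The desired input is then the quantitative soft-edge limit
\begin{equation*}
\log \mathbb{P}\Big[\lambda_1^{(Nc)} > \tfrac{a^2}{c}\Big] = \log F_{\rm TW}(c^{-2/3} s) + O(N^{-2/3}),
\end{equation*}
with $F_{\rm TW}$ as in \eqref{def of TW distribution}; assembling the three pieces then gives \eqref{ZN exp critical}.

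The main obstacle, and the step I expect to be hardest, is rigorously establishing this quantitative Tracy-Widom limit with error $O(N^{-2/3})$ uniformly in $s$ on compact subsets of $\R$. Convergence to $F_{\rm TW}$ at the LUE soft edge is classical, but the explicit error rate requires a careful Deift-Zhou Riemann-Hilbert analysis at $\lambda_{-}$ (with an Airy parametrix and its subleading corrections), or equivalently a partial Schlesinger-transform based RHP analysis of the planar orthogonal polynomials associated with the potential \eqref{Q insertion} carried out in the critical regime, where a Painlev\'e~II local parametrix governs the merging point of the droplet boundary. All other contributions are then routine manipulations of explicit asymptotic expansions.
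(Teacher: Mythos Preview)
Your proposal is correct and follows essentially the same route as the paper: invoke the duality (Proposition~\ref{Prop_equivalence}) to write $\log Z_N(a,c) = ca^2 N^2 + \log Z_N(0,c) + \log \mathbb{P}[\lambda_1 > a^2/c]$, insert the Barnes $G$-function expansion \eqref{Z ind Gin asy} of $\log Z_N(0,c)$, and identify the LUE gap probability in the critical scaling with $F_{\rm TW}(c^{-2/3}s)+O(N^{-2/3})$. The one place where you are more cautious than the paper is the last step: the paper does not redo any local parametrix analysis but simply cites the Deift--Zhou result \cite{DKMVZ99} for the quantitative soft-edge limit (and remarks that for the LUE it already follows from the classical Plancherel--Rotach asymptotics for Laguerre polynomials), so the step you flag as the main obstacle is in fact a direct appeal to the literature.
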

This will be shown in Section~\ref{Subsec_critical}. 
Note that the expansion \eqref{ZN exp critical} is not of the form \eqref{Z expansion}. Namely, by the scaling \eqref{a cri asymp}, we have 
\begin{equation}
ca^2 N^2= c a_{ \rm cri }^2 N^2+ \widetilde{C}_1(s) N^{4/3} + \widetilde{C}_2(s) N^{2/3} + \widetilde{C}_3(s) +O(N^{-2/3})
\end{equation}
for some constants $\widetilde{C}_k$ ($k=1,2,3$). 

The Painlevé II critical asymptotic behaviour of the associated planar orthogonal has been discovered in \cite{BBLM15,KLY23}. 
However, the asymptotic behaviour presented in \cite{BBLM15,KLY23} is not enough to derive Proposition~\ref{Prop_critical expansion}, particularly to derive the $O(1)$ term. 
On the other hand, Proposition~\ref{Prop_critical expansion} can be readily derived utilizing the duality relation (Proposition~\ref{Prop_equivalence}), the Marchenko-Pastur law \eqref{MP}, and the well-established edge universality of the random Hermitian matrix ensemble \cite{DKMVZ99}. 
This aligns with the probability theoretic intuition of the free energy expansion: the law of large numbers (determining the position of the left edge of the Marchenko-Pastur law) gives rise to the leading order of the free energy, while fluctuations (governed by the Tracy-Widom distribution) contribute to the constant term.

\begin{rem}[Free energy expansion under the topology transitions] \label{Rem_Topoly transition}
Recall the well-known tail behaviour of the Tracy-Widom distribution: as $x \to +\infty,$
\begin{align}
F_{\rm TW}(-x)& = \frac{ 2^{ 1/24 } e^{ \zeta'(-1) } }{ x^{1/8} } e^{-x^3/12} \Big( 1+ \frac{3}{ 2^6x^3 } +O(x^{-6}) \Big),
\\
1-F_{ \rm TW }(x) & = \frac{1}{32 \pi x^{3/2} } e^{ -4x^{3/2}/3  } \Big(1+O(x^{-3/2}) \Big). 
\end{align} 
Using this, we have 
\begin{equation}
\lim_{s \to +\infty } \log F_{\rm TW}(c^{-2/3}s) =0. 
\end{equation}
Thus in this limit, \eqref{ZN exp critical} matches with Theorem~\ref{Thm_main} for the post-critical regime. 
On the other hand, in the opposite limit, we have
\begin{equation}
\log F_{\rm TW}(c^{-2/3}s) =  \frac{1}{24} \log 2 + \zeta'(-1) -\frac18 \log |s| -\frac{|s|^3}{12}  + O(|s|^{-3}) , \qquad s \to -\infty.  
\end{equation}
Note that by \eqref{a cri asymp}, at least formally, the proper scaling for the pre-critical regime should be $s=O(N^{2/3})$ in the critical scaling. 
With this scaling, one can notice the additional term
\begin{equation}
-\frac{1}{12} \log N +\zeta'(-1) 
\end{equation}
appearing in Theorem~\ref{Thm_main} for the pre-critical regime when $\chi=1.$

One can observe such a transition in the opposite direction, from the pre-critical regime. Namely, in the scaling regime \eqref{a cri asymp}, by \eqref{q equation}, we have
\begin{align*}
q= 1 + \frac{ s }{ 4c^{1/6}(c+1)^{1/6} ( \sqrt{c+1}- \sqrt{c} )^{2/3} } N^{-2/3} +O(N^{-1}). 
\end{align*}
This gives that $\mathcal{F}^{ \rm pre }$ in \eqref{det of Lap pre} has the asymptotic expansion 
\begin{equation}
\mathcal{F}^{ \rm pre }(a,c) = \frac1{12} \log N +O(1).  
\end{equation}
Hence, one can again observe the additional $\frac{1}{12} \log N$ term.  
\end{rem}

\bigskip

We now turn back to the other formulations of the problem. 
First of all, as a consequence of Theorem~\ref{Thm_main}, we have the asymptotic behaviour of the moments of characteristic polynomial.
For this, we shall use the notations in  \cite[Eqs.(2.17),(4.6)]{BBLM15}. Let 
\begin{equation} \label{def of F} 
F(z)=\frac{1}{2R}\Big[z+|\beta|+\sqrt{(z-\beta)(z-\overline\beta)}\Big]  
\end{equation}
be the inverse of $f$ in \eqref{f conformal}, where 
\begin{equation} \label{def of beta}
\beta = f(z_+)=  Rq-\frac{\kappa}{q} + 2i \sqrt{\kappa R}  
\end{equation}
is a critical value of $f$.

\begin{thm}[\textbf{Moments of characteristic polynomial of the GinUE}] \label{Thm_ch poly GinUE}
Let $\textbf{\textup{G}}_N$ be the complex Ginibre matrix of size $N$ and let $c>0$ be fixed. 
Then as $N \to \infty,$ 
\begin{equation} \label{moment expansion main}
\mathbb{E} \Big| \det(\textbf{\textup{G}}_N -z) \Big|^{2cN} =N^{ \frac{1-\chi}{12} } \, e^{ (\chi-1)\zeta'(-1) } \,\mathcal{G}(|z|)   \, \exp\Big( \mathcal{H}(|z|) N^2  +\widetilde{\mathcal{E}}_N \Big)   
\end{equation}
where $\mathcal{H}(z)$, $\mathcal{G}(z)$ and $\widetilde{\mathcal{E}}_N$ are given as follows.
\begin{itemize} 
    \item If $|z|<\sqrt{c+1}-\sqrt{c}$, we have $\chi=0$, 
\begin{equation}
\mathcal{H}(z)= \mathcal{H}^{\rm post}(z):=c\,z^2-\frac{3c}{2}+\frac{(c+1)^2}{2}\log(c+1) -\frac{c^2}{2}\log c 
\end{equation}
and 
\begin{equation}
\mathcal{G}(z) = \mathcal{G}^{ \rm post }(z) : = \Big( \frac{c}{1+c} \Big)^{ \frac{1}{12}  }. 
\end{equation}
Here for any $M>0$,
\begin{equation}
\widetilde{\mathcal{E}}_N= \sum_{k=1}^M  \frac{ B_{2k+2} }{ 4k(k+1) } \Big( \frac{1}{(c+1)^{2k}}-\frac{1}{c^{2k}}-1\Big) \frac{1}{N^{2k}} +O(\frac{1}{N^{2M+2}}) .
\end{equation}
  \item If $|z|>\sqrt{c+1}-\sqrt{c}$, we have $\chi=1$,  
\begin{align}
\begin{split}
\mathcal{H}(z) =\mathcal{H}^{ \rm pre }(z) & :=\frac{3}{8} - \frac{z^2}{8} - \frac{3F(z)^4}{8z^2} +\frac{5 F(z)^2}{8} - \Big(\frac{3}{4} + \frac{z^2}{8}\Big)\frac{z^2}{F(z)^2} + \frac{3}{8} \frac{z^4}{F(z)^4}
\\
& \quad + \log \Big(  \frac{ F(z)^{2c^2-1} ( F(z)^2+z^2 )^{(c+1)^2} }{  (2z)^{2c+1}  (F(z)^4 +z^2 F(z)^2 -2z^2)^{c^2}   }    \Big)
\end{split}
\end{align}
and 
\begin{equation}
\mathcal{G}(z)= \mathcal{G}^{ \rm pre }(z) := \Big(  \frac{ F(z)^{4} (F(z)^4+z^2F(z)^2 -2z^2)^4 }{ (F(z)^2+z^2)^4 (F(z)-1)^3 (F(z)^6-z^4) } \Big)^{ \frac{1}{24} }  , 
\end{equation}
where $F$ is given by \eqref{def of F}. Here $\widetilde{\mathcal{E}}_N=O(1/N).$
\end{itemize}
\end{thm}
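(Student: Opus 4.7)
The proof is essentially a direct consequence of Theorem~\ref{Thm_main} (free energy expansion of $Z_N(a,c)$) combined with the duality relation \eqref{rel char poly ZN} and the classical evaluation of the GinUE partition function. First, I would apply \eqref{rel char poly ZN} to write
\[
\log \mathbb{E}\big|\det(\textbf{\textup{G}}_N-z)\big|^{2cN} = \log Z_N(|z|,c) - \log Z_N^{\rm Gin}.
\]
Since the right-hand side depends on $z$ only through $|z|$, the phase transition at $|z| = \sqrt{c+1}-\sqrt{c}$ in the statement is inherited directly from the topological transition of the droplet in Theorem~\ref{Thm_main}, and the rotationally invariant structure of $Z_N^{\rm Gin}$ plays no role beyond normalization.

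Next, I would expand $\log Z_N^{\rm Gin}$ via the explicit formula $Z_N^{\rm Gin} = N!\,G(N+1)\,N^{-N(N+1)/2}$ from \eqref{GinUE jpdf} combined with Stirling's formula and the Barnes $G$-function asymptotic. A routine calculation collecting terms gives
\[
\log Z_N^{\rm Gin} = -\tfrac{3N^2}{4} + \tfrac{N}{2}\log N + \Big(\tfrac{\log(2\pi)}{2}-1\Big)N + \tfrac{5}{12}\log N + \tfrac{1}{2}\log(2\pi) + \zeta'(-1) + \text{(Bernoulli tail)}.
\]
Subtracting this from the expansion in Theorem~\ref{Thm_main} produces \eqref{moment expansion main}: the $\tfrac{1}{2}N\log N$ and $(\tfrac{\log(2\pi)}{2}-1)N$ terms cancel; the coefficient of $\log N$ becomes $\tfrac{6-\chi}{12}-\tfrac{5}{12} = \tfrac{1-\chi}{12}$; the constant reduces to $(\chi-1)\zeta'(-1) + \mathcal{F}(|z|,c)$, so that $\mathcal{G}(|z|) = \exp \mathcal{F}(|z|,c)$; and the $N^2$-coefficient equals $\tfrac{3}{4} - I_Q[\sigma_Q]$, which matches $\mathcal{H}(|z|)$ after inserting the formulas from Proposition~\ref{Prop_energy eval}. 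The post-critical error term $\widetilde{\mathcal{E}}_N$ is obtained by combining the full Bernoulli expansions from Stirling and Barnes-$G$ with the explicit series $\mathcal{E}_N$ in \eqref{error for post}; in the pre-critical case the bound $\mathcal{E}_N = O(1/N)$ dominates.

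The main bookkeeping obstacle is the pre-critical regime, where one must convert between the $(a,q)$ variables used in Proposition~\ref{Prop_energy eval} and Theorem~\ref{Thm_main} and the $(z, F(z))$ variables appearing in the theorem statement. The key observation is that, since $F$ is the inverse of $f$ in \eqref{f conformal} and $f(1/q) = a$ by definition of $q$, one has $F(|z|) = 1/q$, equivalently $q = 1/F(|z|)$. Substituting $a = |z|$ and $q = 1/F(|z|)$ into $\mathcal{I}^{\rm pre}(a,c)$ and $\mathcal{F}^{\rm pre}(a,c)$ and clearing powers of $F$ in each term reproduces, monomial by monomial, the formulas for $\mathcal{H}^{\rm pre}(z)$ and $\mathcal{G}^{\rm pre}(z)$. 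For instance, $a^2 q^2 = z^2/F(z)^2$ turns $1+a^2 q^2$ into $(F(z)^2+z^2)/F(z)^2$, and $1+a^2q^2-2a^2q^4$ into $(F(z)^4 + z^2 F(z)^2 - 2z^2)/F(z)^4$; similar substitutions handle the remaining factors in $\mathcal{F}^{\rm pre}$. This step is algebraically tedious but elementary, and no further analytic input beyond Theorem~\ref{Thm_main} and the standard Barnes-$G$ asymptotic is required.
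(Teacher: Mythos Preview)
Your proposal is correct and follows essentially the same approach as the paper: the paper states that Theorem~\ref{Thm_ch poly GinUE} ``immediately follows from Proposition~\ref{Prop_energy eval}, Theorem~\ref{Thm_main} and Lemma~\ref{Lem_Reference Z}'' together with the identity $q=1/F(a)$, which is exactly what you do. Your write-up simply supplies more of the bookkeeping details (the term-by-term cancellation with \eqref{Z Gin asy} and the explicit $a=|z|$, $q=1/F(|z|)$ substitution) that the paper leaves implicit.
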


Note that Theorem~\ref{Thm_ch poly GinUE} immediately follows from Proposition~\ref{Prop_energy eval}, Theorem~\ref{Thm_main} and Lemma~\ref{Lem_Reference Z}. Here we have used also the fact that $q=1/F(a)$.

\begin{rem}[Comparison with the small insertion] \label{Rem_small insertion}
Let us stress again that the case $c=O(1/N)$ was studied in \cite{WW19} for the bulk case $|z|<1$ and also in \cite{DS22} for the edge case $|z|=1+O(1/\sqrt{N})$.  
From the viewpoint of the induced model \eqref{iGinUE jpdf}, the regime $c=O(1/N)$ represents the case in which the point charge insertion is finite. 
In this situation, the conditional process does not lead to macroscopic changes, and consequently, the droplet remains the unit disk of the circular law.
It particular, it was shown in \cite{WW19} that for the bulk case $|z|<1$, 
\begin{equation} \label{WW formula}
\mathbb{E} \Big| \det(\textbf{\textup{G}}_N -z) \Big|^{2\gamma} = N^{ \frac{\gamma^2}{2} } \exp\Big( \gamma N (|z|^2-1) \Big) \frac{(2\pi)^{ \gamma/2 }}{ G(\gamma+1) } (1+o(1)). 
\end{equation}
Note that the leading order in the exponent is $O(N)$, contrasting with the order of $O(N^2)$ in \eqref{moment expansion main}. 
This difference arises from the fact that the $O(N^2)$ term originates from the energy of the equilibrium measure, which is a macroscopic quantity.
Nonetheless, one can observe, at least up to a multiplicative constant, that the asymptotic formula \eqref{WW formula} coincides with our formula \eqref{moment expansion main} by simply setting $c=\gamma/N$:
\begin{equation}
 \exp\Big( \mathcal{H}^{ \rm post }(|z|) N^2 \Big) \Big|_{ c=\gamma/N } = N^{ \frac{\gamma^2}{2} } \exp\Big( \gamma N (|z|^2-1) \Big) \cdot O(1). 
\end{equation}
However, the multiplicative constant term does not match. 
Nonetheless, if we make use of the Barnes $G$-function in the asymptotic formula, by \eqref{Barnes G asymp}, one can see that the asymptotic behaviour
\begin{equation}
\mathbb{E} \Big| \det(\textbf{\textup{G}}_N -z) \Big|^{2\gamma}= N^{ -\gamma N  }  \exp\Big( \gamma N\,|z|^2  \Big)  \frac{ G(\gamma+N+1) }{ G(\gamma+1) G(N+1) } (1+o(1)) 
\end{equation}
matches both \eqref{WW formula} and \eqref{moment expansion main} for $\gamma$ fixed and $\gamma=cN$, respectively.
\end{rem}

\medskip 

We now discuss the large deviation of the LUE smallest eigenvalue. 
In general, the large deviation principles of the extremal eigenvalues of random matrices have been extensively studied in the literature. 
For instance, the statistics of the maximal eigenvalues of the Gaussian ensembles have been studied in \cite{BDG01,DM06, DM08}. 
Also, for the Laguerre ensembles, which are closely related to our present case, there has been extensive work on large deviation probabilities for both the smallest and largest eigenvalues \cite{AGKWW14,KC10, PS16,VMB07,WG13,WG14,WKG15,MV09}, see Remark~\ref{Rem_KC formula}. 
For such integrable random matrices, the Coulomb gas approach has been mainly implemented, and essentially, the large deviation speed and rate can be derived by computing the energy associated with the potential under the presence of a hard wall. An advantage of this approach is that it can be applied to a general $\beta$ ensemble \cite{Fo12,FW12,DR16}. 
However, it is limited in deriving the leading-order asymptotic behaviour. 
In addition, within the context of large deviation probabilities of the extremal eigenvalues, there is a universal pulled-to-pushed transition of the third order, see e.g. \cite{MS14,CFLV18}. 
We also point out that such large deviation probabilities are closely related to the maximal height of $N$ non-intersecting Brownian excursions \cite{SMCF13,FMS11,KIK08}.



\begin{thm}[\textbf{Large deviation probabilities of the smallest eigenvalues of the LUE}]  \label{Thm_LUE}
Let $\lambda_1$ be the smallest eigenvalue of the LUE in \eqref{LUE}. Then as $N\to \infty$, we have the following. 
\begin{itemize}
    \item \textup{\textbf{(Pulled-regime)}} If $t < \lambda_-$, we have  
    \begin{equation} \label{LDP pulled}
    \log  \mathbb{ P }\Big[ \lambda_1 > t \Big] =  O(N^{-\infty}).
    \end{equation}
    Here, $O(N^{-\infty})$ means $O(N^{-m})$ for any positive integer $m.$
    \smallskip 
    \item \textup{\textbf{(Pushed-regime)}} 
    If $t > \lambda_-$, we have  
    \begin{equation} \label{LDP pushed}
    \log  \mathbb{ P }\Big[ \lambda_1 > t \Big] = - \Phi(t;\alpha) N^2 -\frac{1}{12} \log (\alpha N) +\zeta'(-1) + \Psi(t;\alpha) +O(\frac{1}{N}),
    \end{equation}
    where 
    \begin{align}
    \Phi( t;\alpha) &=  \alpha^2 \Big( \mathcal{I}^{ \rm pre }( \sqrt{t/\alpha} ,1/\alpha)- \mathcal{I}^{ \rm post }( \sqrt{t/\alpha} ,1/\alpha) \Big) ,
    \\
     \Psi(t;\alpha) & = \mathcal{F}^{ \rm pre }( \sqrt{t/\alpha} ,1/\alpha) -  \mathcal{F}^{ \rm post }(\sqrt{t/\alpha} ,1/\alpha)  . 
    \end{align}
    Here, $\mathcal{I}^{ \rm pre }$, $\mathcal{I}^{ \rm post }$, $ \mathcal{F}^{ \rm pre }$ and $\mathcal{F}^{ \rm post }$ are given by \eqref{energy post}, \eqref{energy pre} \eqref{det Lap post} and \eqref{det of Lap pre}.
\end{itemize}
\end{thm}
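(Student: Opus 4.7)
The plan is to prove Theorem~\ref{Thm_LUE} by invoking the duality relation (Proposition~\ref{Prop_equivalence}), substituting the parameters and then applying Theorem~\ref{Thm_main} term by term. Concretely, with $c=1/\alpha$ and $x=\sqrt{t/\alpha}$, Proposition~\ref{Prop_equivalence} gives
\begin{equation*}
\log \mathbb{P}[\lambda_1 > t] = -c\widetilde{N}^2 x^2 + \log Z_{\widetilde{N}}(x,c) - \log Z_{\widetilde{N}}(0,c), \qquad \widetilde{N}:=N/c=\alpha N,
\end{equation*}
so everything reduces to asymptotics of the two partition functions. A short algebraic check shows that $x<a_{\rm cri}$ (i.e.\ the droplet is post-critical) is equivalent to $t<\lambda_-$; thus the pulled/pushed dichotomy in the theorem matches exactly the post-/pre-critical dichotomy from Theorem~\ref{Thm_main}. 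Note also that $Z_{\widetilde{N}}(0,c)$ is always in the radially symmetric post-critical regime.

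For the pulled regime $t<\lambda_-$, both $Z_{\widetilde{N}}(x,c)$ and $Z_{\widetilde{N}}(0,c)$ admit the post-critical expansion of Theorem~\ref{Thm_main}. The key observation is that beyond the $N^2$ energy term, every quantity appearing in the expansion is \emph{independent of $a$}: the entropy term $(\tfrac{\log 2\pi}{2}-1)\widetilde{N}$, the topological coefficient $\tfrac{6-\chi}{12}=\tfrac12$, the constant $\mathcal{F}^{\rm post}(a,c)=\tfrac1{12}\log\tfrac{c}{1+c}$ in \eqref{det Lap post}, and all terms of the full asymptotic series $\mathcal{E}_N$ in \eqref{error for post}. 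Hence, in the difference $\log Z_{\widetilde{N}}(x,c)-\log Z_{\widetilde{N}}(0,c)$ only the $N^2$ piece survives, and by \eqref{energy post} it equals $(\mathcal{I}^{\rm post}(0,c)-\mathcal{I}^{\rm post}(x,c))\widetilde{N}^2=cx^2\widetilde{N}^2$, which cancels against $-c\widetilde{N}^2 x^2$. The remainder is $O(\widetilde{N}^{-2M-1})$ for every $M\geq 1$, which after $\widetilde{N}=\alpha N$ gives \eqref{LDP pulled}.

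For the pushed regime $t>\lambda_-$, apply the pre-critical expansion (with $\chi=1$) to $Z_{\widetilde{N}}(x,c)$ and the post-critical expansion (with $\chi=0$) to $Z_{\widetilde{N}}(0,c)$. Subtracting, the $\tfrac12\widetilde{N}\log\widetilde{N}$ and $(\tfrac{\log 2\pi}{2}-1)\widetilde{N}$ terms cancel exactly, and the $\log\widetilde{N}$ coefficient becomes $\tfrac{5}{12}-\tfrac{6}{12}=-\tfrac{1}{12}$. Combining with $-c\widetilde{N}^2 x^2$ and using once more the algebraic identity $\mathcal{I}^{\rm post}(0,c)-cx^2=\mathcal{I}^{\rm post}(x,c)$ from \eqref{energy post}, the $N^2$ coefficient collapses to $\mathcal{I}^{\rm post}(x,c)-\mathcal{I}^{\rm pre}(x,c)$. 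The constant term simplifies because $\mathcal{F}^{\rm post}(a,c)$ is independent of $a$, so
\begin{equation*}
\zeta'(-1)+\mathcal{F}^{\rm pre}(x,c)-\mathcal{F}^{\rm post}(0,c)=\zeta'(-1)+\mathcal{F}^{\rm pre}(x,c)-\mathcal{F}^{\rm post}(x,c).
\end{equation*}
Finally, substituting $\widetilde{N}=\alpha N$, $c=1/\alpha$, $x=\sqrt{t/\alpha}$ into the definitions of $\Phi$ and $\Psi$ gives \eqref{LDP pushed}, with the error inherited from the pre-critical bound $\mathcal{E}_N=O(1/\widetilde{N})=O(1/N)$.

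There is no conceptual obstacle once Theorem~\ref{Thm_main} is in hand; the proof is essentially bookkeeping. The only subtlety is the exact cancellation of all $a$-dependent subleading terms in the post-critical regime, which is the reason why \eqref{LDP pulled} holds to all polynomial orders rather than merely up to $O(1)$. I would also remark briefly that the positivity of the rate function $\Phi(t;\alpha)$ in the pushed regime is precisely the inequality \eqref{energy inequality}, providing a consistency check that $\mathbb{P}[\lambda_1>t]\to 0$ exponentially for $t>\lambda_-$.
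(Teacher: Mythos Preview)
Your proposal is correct and follows exactly the route indicated in the paper: combine the duality relation of Proposition~\ref{Prop_equivalence} with the free energy expansion of Theorem~\ref{Thm_main} (after the substitution $N\mapsto \alpha N$, $c=1/\alpha$, $a=\sqrt{t/\alpha}$). You supply more detail than the paper, correctly identifying that in the post-critical expansion every term beyond the $N^2$ energy is independent of $a$, which is precisely what forces the $O(N^{-\infty})$ cancellation in the pulled regime, and you track the $\chi$-dependent terms accurately in the pushed regime.
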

 
Theorem~\ref{Thm_LUE} follows from Proposition~\ref{Prop_equivalence} and Theorem~\ref{Thm_main} (with $N \mapsto N/c= \alpha N$). 
Note that the positivity of the rate function $\Phi(t;\alpha)>0$ follows from the inequality \eqref{energy inequality}. 
See Figure~\ref{Fig_LUE} for the graph of $\Phi$. 

\begin{figure}[t]
    \centering
    \includegraphics[width=0.45\textwidth]{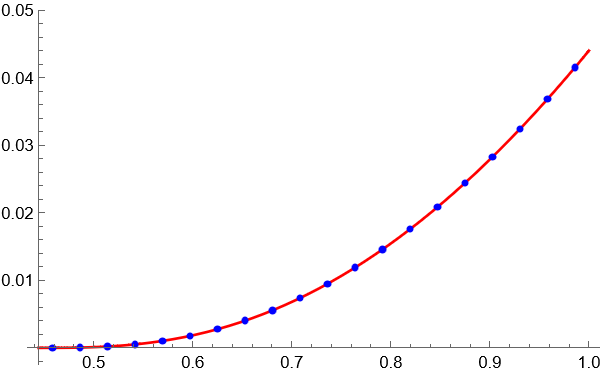} \quad  
        \includegraphics[width=0.45\textwidth]{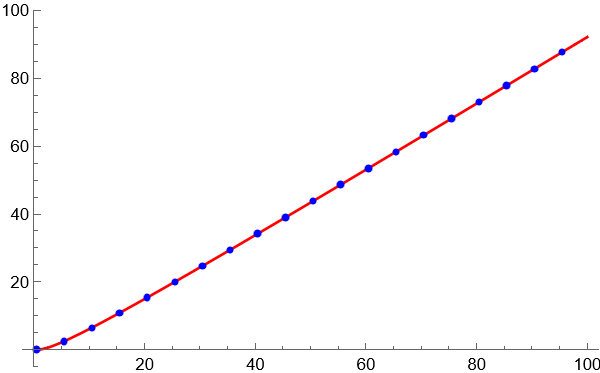}
    \caption{The full red line represents the graph of the rate function $t \mapsto \Phi(t;\alpha)$, where $\alpha=16/9$ and $t \geq \lambda_-=4/9$. The blue dots indicate the values of the function $t \mapsto S(t)-S(\lambda_-)$. Here, one can also observe the behaviours \eqref{rate function 0} and \eqref{rate function infty} in the left and right figures, respectively.  }
    \label{Fig_LUE}
\end{figure}

By the explicit formulas \eqref{energy post} and \eqref{energy pre}, one can observe that 
\begin{equation} \label{rate function 0}
\Phi(t;\alpha)  \sim \frac{\sqrt{\alpha+1}}{ 12 \lambda_-^2 } \,(t-\lambda_-)^3, \qquad t\to \lambda_-, 
\end{equation}
which agrees with the tail probability of the Tracy-Widom distribution.
This is known as a third-order phase transition appearing in a more general context, and we refer the reader to \cite{MS14} for a review. 
Let us also mention that in the opposite limit, we have 
\begin{equation}  \label{rate function infty}
\Phi(t;\alpha) \sim  t -\alpha \log t, \qquad t \to \infty.
\end{equation}
Here, the linear growth $t$ comes from the post-critical energy \eqref{energy post}, while the logarithmic corrections $ \alpha \log t$ comes from the pre-critical energy \eqref{energy pre}, cf. \eqref{energy infintiy}.  

\begin{rem}[Comparison with the Katzav-Castillo formula from a Coulomb gas approach] \label{Rem_KC formula}
In \cite{KC10}, Katzav and Castillo used a Coulomb gas method and derived the leading order asymptotic behaviour
\begin{equation} \label{KC formula}
\log \mathbb{P}\Big[\lambda_1 > t \Big] = -  \Big( S(t)-S(\lambda_-)\Big) N^2 
  +o(N^2),  \qquad (t > \lambda_-) 
\end{equation}
where $S$ is given by 
\begin{align}
\begin{split}
S(t) & = \frac{U(t)+t}{2} -\frac{ (U(t)-t)^2 }{ 32 } +\frac{\alpha}{4} ( \sqrt{U(t)}-\sqrt{t} )^2 
\\
&\quad - \log \Big( \frac{ U(t)-t }{4} \Big)  +\frac{\alpha^2}{4} \log \Big( t \, U(t) \Big) -\alpha (\alpha+2) \log \Big( \frac{ \sqrt{U(t)}+\sqrt{t} }{2 } \Big) . 
\end{split}
\end{align} 
Here,  
\begin{equation}
U(t)= \frac{4}{3} \Big(t+2(\alpha+2) \Big)  \cos^2\Big( \frac{\theta+2\pi}{3} \Big),\qquad   \theta= \arctan \Big(   \sqrt{ \frac{ (t+2(\alpha+2) )^3-27 \alpha^2 t }{27 \alpha^2 t }    } \Big) . 
\end{equation}  
Indeed, this value $U(t)$ is the right edge of the \emph{constrained} spectral density (see \cite{VMB07})
\begin{equation} \label{constrained LUE density}
\frac{ \sqrt{U(t)-x} }{ 2\pi \sqrt{x-t} } \frac{ x-\alpha \sqrt{t/U(t)} }{  x } \mathbbm{1}_{[t,U(t)]}(x), 
\end{equation}
the density of the LUE \eqref{LUE} conditioned on $\lambda_1>t$. 
By comparing \eqref{LDP pushed} and \eqref{KC formula}, we have 
\begin{equation} \label{comparison with KC formula}
 \Phi( t;\alpha)= S(t)-S(\lambda_-), 
\end{equation}
see Figure~\ref{Fig_LUE}. This identity should follow from the explicit solution of the cubic equation \eqref{q equation} (in the variable $q^2$). 
We also refer to \cite[Section 6.2]{Ku19} for the interpretation of the formula \eqref{KC formula} from the viewpoint of the recursion scheme. 
By \eqref{comparison with KC formula}, one can observe that our result extends the result \eqref{KC formula}, incorporating not only polynomial but also constant corrections of the large deviation probability $\mathbb{P}[\lambda_1 >t]$. 
Let us emphasise that while the Coulomb gas approach yields explicit formula for the leading-order asymptotic behaviour, applicable not only to the LUE but also to general Laguerre-$\beta$ ensembles, this approach is difficult to implement for deriving precise asymptotic behaviour.

In the pulled-regime, the precise order for $O(N^{-\infty})$ in \eqref{LDP pulled} is expected to be exponentially small: for a certain $\widetilde{c}(t)$,  
\begin{equation}
\log \, \mathbb{ P }\Big[ \lambda_1 > t\Big] =  e^{- \widetilde{c}(t) N  } (1+o(1)).
\end{equation}
See \cite[Eq.(16)]{KC10} for the Coulomb gas prediction on the constant $\widetilde{c}(t)$. 
We also refer to \cite[Eq.(1.4)]{Fo12} for the constant $\widetilde{c}(t)$ derived from the tail behaviour of the LUE density.
However, capturing such an exponentially decaying behaviour seems challenging based on the Riemann-Hilbert analysis we implement in this work.
\end{rem}

\begin{rem}[Asymptotic expansions of partition functions in one dimension] \label{Rem_1D partition}
Compared to the two-dimensional point process, there has been extensive work on the partition functions of log gases in one-dimension. 

For instance, in the same spirit as the present work, Riemann-Hilbert analysis proves to be a robust method for the determinantal point process \cite{EM03}, requiring thorough and technical analysis but particularly strong in obtaining the precise form of coefficients and addressing singularities \cite{DIK11, DIK14}.
In this vein, a general result was achieved in \cite{Ch19,CG21}, where the authors derived the precise asymptotic form of the partition functions associated with Gaussian, Laguerre, and Jacobi type (one-cut regular) weights together with jump and root-type singularities. In particular, Theorem~\ref{Thm_LUE} can be re-derived by employing \cite[Theorem 1.2]{CG21}, alongside additional work involving potential-theoretic computations. Nonetheless, the resulting expressions differ from those in Theorem~\ref{Thm_LUE}, as our formulas stem from the equilibrium measure of the two-dimensional model, thus involving the parameter satisfying a non-trivial cubic equation \eqref{q equation}. This demonstrates the remarkably powerful nature of the duality formula, allowing us to obtain two-dimensional results from one-dimensional results. It is noteworthy that our method, outlined in the next section, takes the opposite approach; namely, we derive the one-dimensional result (Theorem~\ref{Thm_LUE}) as a consequence of the two-dimensional result (Theorem~\ref{Thm_main}). Furthermore, we emphasise that Theorem~\ref{Thm_OP fine asymp}, concerning the strong asymptotic behaviour of a planar orthogonal polynomial, which holds its own significance that extends \cite{BBLM15}, cannot be derived as a consequence of an existing one-dimensional result.

Among different approaches, let us also mention that the topological recursion of Eynard and Orantin \cite{EO07} provides an efficient method to derive the structural form of the asymptotic expansion  of one-dimensional point processes \cite{BG13,BG24}. We also refer to \cite{Ma23,IMS18} and references therein for the implementation of such approaches for various cases, including Toeplitz determinants.
\end{rem}


\subsection*{Organisation of the paper}

The rest of this paper is organised as follows. In Section~\ref{Section_Overall}, we introduce the overall strategy and complete the proof of Theorem~\ref{Thm_main}. However, two main ingredients, Proposition~\ref{Prop_energy eval} for the energy evaluations, and Theorem~\ref{Thm_OP fine asymp} for the fine asymptotic behaviour of the orthogonal polynomials, will be established in the later sections.
Section~\ref{Section_energy} is devoted to proving Proposition~\ref{Prop_energy eval} based on the logarithmic potential theory and conformal mapping method. In Section~\ref{Section_RH analysis}, we prove Theorem~\ref{Thm_OP fine asymp} using Riemann-Hilbert analysis and a partial Schlesinger transform.  
Section~\ref{Subsec_critical} is a separate part, where we provide the derivation of the duality formula of the form \eqref{equivalence btw three} and also the proof of Proposition~\ref{Prop_critical expansion} on the free energy expansion in the critical regime.


\section{Proof of Theorem~\ref{Thm_main} } \label{Section_Overall}

This section reaches its culmination with the proof of Theorem~\ref{Thm_main}. For reader's convenience, we begin by providing a summary of the overall strategy.

\begin{itemize}
\item As a first step, we express the partition function as an integral of its derivative with respect to a deformation parameter. Here, the integration constants are given by the reference partition functions, whose asymptotic behaviour can be computed precisely (Lemma~\ref{Lem_Reference Z}). With proper choices of reference partition functions, this implies that it suffices to derive the asymptotic expansion of the derivative of free energies (Proposition~\ref{Prop_ZN derivative}), for which we also need the derivatives of the coefficients (Lemma~\ref{Lem_deri of N2 1 terms}).
\smallskip
\item Using the $\tau$-function, the derivative of the partition functions can be expressed in terms of the solution to the Riemann-Hilbert problem for the associated planar orthogonal polynomial (Lemma~\ref{Lem_tau function rep}). This allows us to express the derivative of the partition function in terms of the coefficient of the solution to the Riemann-Hilbert problem as the spectral variable $z \to \infty$ (Proposition~\ref{Prop_ZN YN coeff}).
\smallskip
\item As a consequence of the previous step, the free energy expansion can be derived using the asymptotic expansion of the orthogonal polynomial near infinity (Proposition~\ref{Prop_Asymptotic coefficient}). Then, by refining the Riemann-Hilbert analysis in \cite{BBLM15}, we obtain the fine asymptotic behaviour of the orthogonal polynomial outside the motherbody (Theorem~\ref{Thm_OP fine asymp}). Consequently, by computing the residue near infinity, we complete the proof of Theorem~\ref{Thm_main}. 
\end{itemize}

\medskip 

Let us now be more precise in introducing our strategy. 
Let $p_j$ be the monic polynomial satisfying 
\begin{equation} \label{Planar OP}
\int_\C p_j(z) \overline{p_k(z)} e^{-NQ(z)}\,dA(z) = h_j\,\delta_{jk} , 
\end{equation}
where $Q$ is given by \eqref{Q insertion} and $\delta_{jk}$ is the Kronecker delta. 
Then we have 
\begin{equation} \label{ZN orthogonal norm}
Z_N(a,c) =  N! \prod_{j=0}^{N-1} h_{j},  
\end{equation}
see e.g. \cite[Chapter 5]{BF22}. 

\medskip 

We now begin by explaining the above strategy in more detail. Each step in the above item is given in Subsections~\ref{Subsec_Deform}, \ref{Subsec_tau}, and \ref{Subsec_fine asymp}, respectively.

\subsection{Deformations of partition functions} \label{Subsec_Deform}

An important idea we employ in the asymptotic analysis of the partition functions is to consider the partition function $a \mapsto Z_N(a,c)$ as a function of the deformation parameter $a \ge 0$. 
Next, we require reference partition functions for which we can compute explicit formulas, along with their precise asymptotic expansions. 
The chosen reference partition functions are in the extremal, rotationally symmetric case. 
Additionally, it is crucial to select a reference partition function whose droplet is topologically equivalent to the droplet of the regime we intend to compute. This choice is essential to simplify the proof by avoiding the critical regime during this deformation. Consequently, we set $a=0$ and $a=\infty$ for the post- and pre-critical cases, respectively. The concrete statement of the above discussion is provided in the following lemma.
We mention that the idea of implementing deformations of partition functions (or structured determinants) has been utilized in Hermitian random matrix theory \cite{Kra07,DIK08,CK15} as well as in the Coulomb gas theory on a Jordan domain \cite{TT06,JV23}. A similar idea can also be found in the context of the transport method developed for general Coulomb gases, see \cite[Chapter 4]{Se19} and \cite{Se24}.

\begin{figure}[t]
    \centering
\begin{tikzpicture}
    \node at (0,0) {  \includegraphics[width=\textwidth]{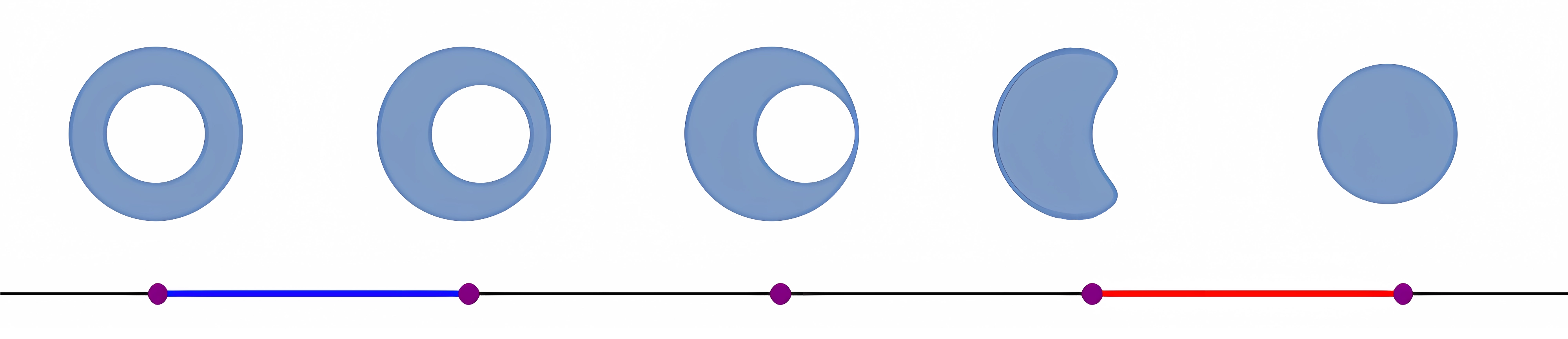}};

    \node at (-6.7, -2.25) {$a=0$};
    \node at (-3.35, -2.25) {$a<a_{ \rm cri }$};
    \node at (0, -2.25)  {$a=a_{ \rm cri }$};
    \node at (3.35, -2.25)  {$a>a_{ \rm cri }$};
    \node at (6.7, -2.25)  {$a=\infty$};
\end{tikzpicture}
    \caption{The plot shows deformations of the droplet. The leftmost $(a=0)$ and rightmost $(a=\infty)$ are the rotationally symmetric cases, for which we use the associated partition functions as reference. The thick lines indicate the integral domains in \eqref{ZN ref post} and \eqref{ZN ref pre}, respectively.}
    \label{Fig_transition}
\end{figure}

\begin{lem}[\textbf{Deformations and reference partition functions}]  \label{Lem_Reference Z}
We have 
\begin{align}
 \label{ZN ref post}
\log Z_N(a,c)& =\log Z_N(0,c)+ \int_0^a \frac{d}{dt} \log Z_N(t,c) \,dt 
\\
&=   \log Z_N^{ \rm Gin } + (2c \log a ) N^2  - \int_a^\infty \Big( \frac{d}{dt}  \log Z_N(t,c) - \frac{2cN^2}{t} \Big)\,dt.    \label{ZN ref pre}
\end{align}
The reference partition functions are evaluated as 
\begin{equation} \label{ZN reference evaluations}
Z_N(0,c)= N! \frac{G(N+cN+1)}{ G(cN+1) } N^{ -(c+\frac12)N^2-\frac{1}{2}N }, \qquad  Z_N^{ \rm Gin }=  N! \,G(N+1)  N^{ -\frac12 N^2-\frac{1}{2}N }, 
\end{equation}
and they satisfy the asymptotic behaviour as $N \to \infty$:
\begin{align}
\begin{split} \label{Z ind Gin asy}
\log Z_N(0,c) & = -\Big(  \frac{3}{4}+\frac{3c}{2}+\frac{c^2}{2}\log c -\frac{(c+1)^2}{2}\log(c+1) \Big) \, N^2 +\frac12 N \log N +\Big( \frac{\log(2\pi)}{2}-1 \Big)N
\\
&\quad   +\frac12 \log N+ \frac{\log(2\pi)}{2}+\frac{1}{12} \log \Big( \frac{c}{1+c}\Big) 
\\
&\quad +\sum_{k=1}^\infty \bigg( \frac{B_{2k}}{ 2k(2k-1) } \frac{1}{N^{2k-1}} + \frac{ B_{2k+2} }{ 4k(k+1) } \Big( \frac{1}{(c+1)^{2k}}-\frac{1}{c^{2k}}\Big) \frac{1}{N^{2k}} \bigg) ,
\end{split}
\\
\begin{split} \label{Z Gin asy}
 \log  Z_N^{ \rm Gin }& = -\frac34\, N^2+\frac12 N \log N +\Big( \frac{\log(2\pi)}{2}-1 \Big)N
\\
& \quad +\frac{5}{12} \log N+ \frac{\log(2\pi)}{2}+\zeta'(-1) +\sum_{k=1}^\infty \bigg( \frac{B_{2k}}{ 2k(2k-1) } \frac{1}{N^{2k-1}} + \frac{ B_{2k+2} }{ 4k(k+1) } \frac{1}{N^{2k}} \bigg) ,
\end{split}
\end{align}
where $B_k$ is the Bernoulli number. 
\end{lem}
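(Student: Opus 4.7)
The plan is to establish the three components of the lemma in turn: the two deformation identities, the exact formulas for the reference partition functions, and their asymptotic expansions.

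For \eqref{ZN ref post} and \eqref{ZN ref pre}, I would first verify that the map $t \mapsto \log Z_N(t,c)$ is continuous on $[0,\infty)$ and smooth on $(0,\infty)$ by differentiating under the integral sign; the weight $\prod_j |z_j-t|^{2cN} e^{-N|z_j|^2}$ is locally dominated by an integrable majorant uniformly in $t$ on any compact interval, so both continuity at $t=0$ and differentiability for $t>0$ follow. The identity \eqref{ZN ref post} is then the fundamental theorem of calculus applied to $[0,a]$. For \eqref{ZN ref pre}, the crucial step is to compute the boundary behaviour at infinity: factoring out $a^{2cN^2}$ gives
\begin{equation*}
Z_N(a,c) = a^{2cN^2} \int_{\C^N} \prod_{j<k}|z_j-z_k|^2 \prod_{j=1}^N \Big|1-\tfrac{z_j}{a}\Big|^{2cN} e^{-N|z_j|^2}\, dA(z_j),
\end{equation*}
and a dominated-convergence argument (using the Gaussian decay to control the contribution from $|z_j|>a/2$, where $|1-z_j/a|$ is no longer bounded) yields $\log Z_N(a,c)-2cN^2\log a \to \log Z_N^{\rm Gin}$ as $a\to\infty$. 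Combining this boundary value with the fundamental theorem of calculus on $[a,\infty)$ gives \eqref{ZN ref pre}.

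For the exact evaluations in \eqref{ZN reference evaluations}, I would exploit the rotational symmetry of the weight at $a=0$ (and in the GinUE case). Under rotational invariance the monic orthogonal polynomials appearing in \eqref{Planar OP} are simply the monomials $p_n(z)=z^n$, so by \eqref{ZN orthogonal norm} it suffices to compute $\prod_{n=0}^{N-1} h_n$. In polar coordinates with the substitution $u=Nr^2$ the squared norms reduce to elementary Gamma integrals: $h_n = N^{-(n+cN+1)}\Gamma(n+cN+1)$ at $a=0$ and $h_n=N^{-(n+1)} n!$ for the GinUE. A telescoping product using the Barnes recurrence $G(z+1)=\Gamma(z)G(z)$ converts $\prod_{n=0}^{N-1}\Gamma(n+cN+1)$ into $G(N+cN+1)/G(cN+1)$ and gives $\prod_{n=0}^{N-1} n! = G(N+1)$; summing the powers of $N$ then yields both displayed formulas.

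The asymptotic expansions \eqref{Z ind Gin asy} and \eqref{Z Gin asy} are a calculation combining the classical Stirling expansion for $\log\Gamma(N+1)$ with the standard expansion
\begin{equation*}
\log G(z+1) = \frac{z^2}{2}\log z - \frac{3z^2}{4} + \frac{z}{2}\log(2\pi) - \frac{1}{12}\log z + \zeta'(-1) + \sum_{k=1}^\infty \frac{B_{2k+2}}{4k(k+1)}\frac{1}{z^{2k}}
\end{equation*}
applied at $z=(c+1)N$, $z=cN$, and $z=N$. The main obstacle is purely bookkeeping: one must track the cancellations between the contributions of $G((c+1)N+1)$ and $G(cN+1)$ in $\log Z_N(0,c)$. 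The $N^2\log N$ contributions cancel against the $-(c+\tfrac12)N^2\log N$ term, the two $\zeta'(-1)$ constants cancel, and the $-\tfrac{1}{12}\log N$ pieces cancel as well, leaving the displayed constant $\tfrac{1}{12}\log(c/(c+1))$ coming from $-\tfrac{1}{12}[\log(c+1)-\log c]$; grouping the remaining $N^2$, $N\log N$, $N$, $\log N$, and constant contributions reproduces \eqref{Z ind Gin asy}. The same procedure with $c=0$ (i.e.\ retaining only $\log G(N+1)$, in which case the $\zeta'(-1)$ and the full $\tfrac{5}{12}\log N$ survive) yields \eqref{Z Gin asy}. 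The Bernoulli-number remainders are then read off directly from the two expansions.
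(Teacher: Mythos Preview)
Your proposal is correct and follows essentially the same approach as the paper: the fundamental theorem of calculus plus a factoring-out of $a^{2cN^2}$ for \eqref{ZN ref pre}, rotational symmetry and the monomial orthogonal basis for \eqref{ZN reference evaluations}, and the standard Stirling and Barnes-$G$ expansions for \eqref{Z ind Gin asy}--\eqref{Z Gin asy}. The paper packages the factoring step by introducing the auxiliary potential $\widehat{Q}(z)=|z|^2-2c\log|(z-a)/a|$ and the associated $\widehat{Z}_N(a,c)$, but this is exactly your factored integral; your treatment is slightly more explicit about the analytic prerequisites (differentiability in $t$, dominated convergence as $a\to\infty$), which the paper leaves implicit.
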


As previously mentioned, we shall use \eqref{ZN ref post} for the post-critical regime, and \eqref{ZN ref pre} for the pre-critical regime, see Figure~\ref{Fig_transition} for an illustration.

\begin{rem}[Total integration formula]
As an immediate consequence of \eqref{ZN ref post}, \eqref{ZN ref pre} and \eqref{ZN reference evaluations}, it follows that 
\begin{align}
\begin{split}
\int_{0}^\infty \Big( \partial_t \log Z_N(t,c) - \frac{2cN^2}{t} \mathbbm{1}_{t>a} \Big) \,dt & = (2c\log a)N^2+\log \Big( \frac{ Z_N^{ \rm Gin } }{ Z_N(0,c) } \Big)
\\
&= cN^2 \log(a^2N)+ \log \Big( \frac{ G(N+1) G(cN+1) }{ G(N+cN+1) } \Big). 
\end{split}
\end{align}
\end{rem}

\begin{proof}[Proof of Lemma~\ref{Lem_Reference Z}]
The formula \eqref{ZN ref post} is obvious. 
To see \eqref{ZN ref pre}, let us write
\begin{align}
\widehat{Z}_N(a,c):= \int_{\C^N} \prod_{j>k=1}^N |z_j-z_k|^{2} \prod_{j=1}^{N}  e^{-N \widehat{Q}(z_j) }  \,dA(z_j), \qquad 
\widehat{Q}(z) = |z|^2-2c \log \Big| \frac{z-a}{a} \Big|. 
\end{align}
Note that 
\begin{equation} \label{Q hat limit}
\lim_{a \to \infty} \widehat{Q}(z) = |z|^2,  \qquad \widehat{Z}_N(\infty,c) = Z_N^{ \rm Gin }.  
\end{equation}
Since $e^{-NQ(z)} = e^{-N \widehat{Q}(z) } \,a^{2cN}, $
it follows that 
\begin{equation} \label{ZN hat ZN rel}
\log Z_N(a,c) = (2c\log a)N^2 + \log \widehat{Z}_N(a,c) .
\end{equation}
Then we have 
\begin{equation}
\begin{split}
\log Z_N(a,c) & = (2c\log a)N^2+ \log \widehat{Z}_N(a,c)  
\\
&=  (2c\log a)N^2+ \log \widehat{Z}_N(\infty,c)  - \int_a^\infty \partial_t \log \widehat{Z}_N(t,c) \,dt
\\
&=  (2c\log a)N^2+ \log \widehat{Z}_N(\infty,c)  - \int_a^\infty \Big( \partial_t  \log Z_N(t,c) - \frac{2cN^2}{t} \Big)\,dt, 
\end{split}
\end{equation}
which leads to \eqref{ZN ref pre}. 

If $a=0$, the potential is radially symmetric, and consequently the associated orthogonal polynomial is monomial, i.e. $p_n(z)=z^n.$ 
The orthogonal norm is then computed as 
\begin{equation}
h_j= \int_\C |z|^{2j} e^{-NQ(z)}\,dA(z) = 2\int_0^\infty r^{2j+2cN+1} e^{-Nr^2}\,dr =  \frac{ \Gamma(j+cN+1)  }{ N^{j+cN+1}  }. 
\end{equation}
By using \eqref{ZN orthogonal norm}, we have
\begin{equation}
Z_N(0,c)=N! \prod_{k=0}^{N-1}  \frac{ \Gamma(k+cN+1)  }{ N^{k+cN+1}  }= N! \frac{G(N+cN+1)}{ G(cN+1) } N^{ -(c+\frac12)N^2-\frac{1}{2}N }. 
\end{equation}
This also gives \eqref{ZN reference evaluations} since $Z_N^{ \rm Gin }=Z_N(0,0)$. 
Now \eqref{Z ind Gin asy} and \eqref{Z Gin asy} follow from the asymptotic behaviour of the gamma function 
\begin{equation}
\log N! = \log N + \log \Gamma(N) = \Big(N+\frac12\Big) \log N - N +\frac12 \log(2\pi) +\sum_{k=1}^\infty \frac{B_{2k}}{ 2k(2k-1)N^{2k-1} }
\end{equation}
as $N \to \infty$, and of the Barnes $G$-function
\begin{align}
\begin{split} \label{Barnes G asymp}
\log G(z+1) & =\frac{z^2 \log z}{2} -\frac34 z^2+\frac{ \log(2\pi) z}{2}-\frac{\log z}{12}+\zeta'(-1)  + \sum_{k=1}^\infty \frac{ B_{2k+2} }{ 4k(k+1)  } \frac{1}{z^{2k}}, 
\end{split}
\end{align}
as $z \to \infty$, see e.g. \cite[Eqs.(5.11.1), (5.17.5)]{NIST}. \end{proof}



By Lemma~\ref{Lem_Reference Z}, we need to derive the asymptotic behaviour of $\frac{d}{dt} \log Z_N(t,c)$. 
For this purpose, we need to compute the derivatives of the coefficients in the expansion \eqref{Z expansion main}. 
In the following, we denote by $\partial_a$ and $\bar{\partial}_a$ the complex derivatives with respect to $a$ and $\bar{a}$, respectively. 
To distinguish the notations further, we also use $ \mathfrak{d}_a  = \partial_a +\bar{\partial}_a$ to represent the operator of differentiation with respect to the real variable $a$.
For instance, $\mathfrak{d}_a a^2=2a$, whereas $\bar{\partial}_a |a|^2=a$.

\begin{lem}[\textbf{Derivatives of the coefficients in the free energy expansion}]\label{Lem_deri of N2 1 terms}
For the pre-critical case, we have 
\begin{align}
\begin{split} \label{I pre deri}
  \mathfrak{d}_a \,  \mathcal{I}^{ \rm pre }(a,c) 
  =   \frac{1}{2a}+a - \frac{1}{ a q^2} - \frac{a^3q^4}{2}  = - \frac{ (1-a^2q^2)(2-q^2-a^2q^4) }{ 2a q^2 } 
\end{split}
\end{align}
and
\begin{align}
\begin{split}
  \mathfrak{d}_a \mathcal{F}^{ \rm pre }(a,c) & =    - \frac{q^2 (1 - a^4 q^4)^2}{8 a (1 - q^2) (1 - a^4 q^6)^2}. 
\end{split}
\end{align}
\end{lem}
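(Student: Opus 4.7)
The plan is to compute both derivatives directly by the chain rule, viewing $\mathcal{I}^{\rm pre}$ and $\mathcal{F}^{\rm pre}$ as functions of two independent variables $a$ and $q$, and then eliminating the derivative $\mathfrak{d}_a q$ using implicit differentiation of the defining relation \eqref{q equation}.

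First I would extract $\mathfrak{d}_a q$. Multiplying \eqref{q equation} by $2a^4$, one rewrites it as $P(a,q) := 2a^4 q^6 - (a^2+4c+2)a^2 q^4 + 1 = 0$. Implicit differentiation then gives $\mathfrak{d}_a q = -(\partial_a P)/(\partial_q P)$, and the parameter $c$ can be eliminated from both numerator and denominator by substituting the identity $(4c+2)a^2 q^4 = 2a^4 q^6 - a^4 q^4 + 1$ (again from \eqref{q equation}). This yields an expression for $\mathfrak{d}_a q$ that is a rational function of $a$ and $q$ alone, in which one recognises factors such as $(1-a^4 q^6)$.

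For $\mathfrak{d}_a \mathcal{I}^{\rm pre}$, I would exploit the variational origin of $q$: the parameter $q$ is fixed by the equilibrium condition for the underlying Coulomb gas, so a natural guess is that $\mathcal{I}^{\rm pre}$, viewed as a function of $(a,q)$, is stationary in $q$ at $q = q(a)$. If $\partial_q \mathcal{I}^{\rm pre}\big|_{q=q(a)} = 0$ holds (which can be verified directly by substitution into \eqref{q equation}), then the chain rule collapses to
\begin{equation*}
\mathfrak{d}_a \mathcal{I}^{\rm pre} = \partial_a \mathcal{I}^{\rm pre}\big|_{q \text{ fixed}},
\end{equation*}
and differentiating \eqref{energy pre} term by term in $a$ with $q$ held fixed produces exactly $\tfrac{1}{2a} + a - \tfrac{1}{aq^2} - \tfrac{a^3 q^4}{2}$, the factored form $-(1-a^2q^2)(2-q^2-a^2q^4)/(2aq^2)$ following by grouping. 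In particular, all the $c$- and $c^2$-dependent contributions (from the $\log(2aq)$, $2c\log(2aq^2)$, $c^2\log(\cdots)$, $(c+1)^2\log(\cdots)$ terms) then automatically cancel, which is consistent with the $c$-free right-hand side. If stationarity should fail, the fallback is to carry out the chain rule in full and verify that the $c$-dependent contributions vanish after substitution of the formula for $\mathfrak{d}_a q$.

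For $\mathfrak{d}_a \mathcal{F}^{\rm pre}$, the formula \eqref{det of Lap pre} has no explicit $c$-dependence, so one must apply the full chain rule
\begin{equation*}
\mathfrak{d}_a \mathcal{F}^{\rm pre} = \partial_a \mathcal{F}^{\rm pre}\big|_q + \partial_q \mathcal{F}^{\rm pre}\big|_a \cdot \mathfrak{d}_a q.
\end{equation*}
The logarithmic derivative decomposes into four summands coming from the four factors in \eqref{det of Lap pre}. After substituting $\mathfrak{d}_a q$ from the first step and clearing denominators, the result is a rational function of $(a,q)$; the main computational task is to combine these summands and factor the numerator so as to recover the compact form $-q^2(1-a^4q^4)^2/(8a(1-q^2)(1-a^4 q^6)^2)$. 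I expect the main obstacle throughout to be precisely this algebraic bookkeeping: recognising the factors $(1-a^4q^6)$, $(1-q^2)$, $(1-a^2q^2)$, $(2-q^2-a^2q^4)$, and $(1-a^4q^4)$ in the resulting rational expressions, which will require repeated judicious use of the cubic identity \eqref{q equation} (in $q^2$) to reduce high-degree monomials and exhibit common factors.
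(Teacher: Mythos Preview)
Your proposal is essentially the same route the paper takes: the paper also differentiates the cubic \eqref{q equation} implicitly to obtain $q'/q = \frac{1}{2a}\frac{1+a^4q^4-2a^4q^6}{a^4q^6-1}$, rewrites $c$ as a function of $(a,q)$ via $c = \tfrac{a^2q^2}{2}+\tfrac{1}{4a^2q^4}-\tfrac{a^2+2}{4}$, and then applies the full chain rule to both $\mathcal{I}^{\rm pre}$ and $\mathcal{F}^{\rm pre}$, leaving the rest to direct algebraic simplification.

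One caveat on your stationarity shortcut: even if $\partial_q\tilde{\mathcal I}(a,q,c)\big|_{q=q(a,c)}=0$ turns out to hold, the conclusion you draw from it is not quite right. The partial $\partial_a\tilde{\mathcal I}$ at fixed $(q,c)$ still carries explicit $c$- and $c^2$-terms (e.g.\ $2c/a$ from $2c\log(2aq^2)$, and a nonzero $c^2$-contribution from the two logarithms, since the coefficient of $c^2$ in that partial works out to $-4aq^4/[(1+a^2q^2)(1+a^2q^2-2a^2q^4)]\ne 0$). These do not ``automatically cancel''; they only disappear after you substitute $c=c(a,q)$ from \eqref{q equation}, exactly as in the paper. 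So the stationarity trick, even if valid, does not actually bypass the elimination of $c$, and your fallback plan (full chain rule plus substitution of the constraint) is what is really doing the work. That fallback is precisely the paper's argument.
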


\begin{proof}
Recall that $q \equiv q(a)$ is a function of $a.$ 
By \eqref{q equation}, we have 
 \begin{equation} \label{c in terms of qa}
    c = \frac{a^2q^2}{2} +\frac{1}{4a^2q^4}-\frac{a^2+2}{4}. 
 \end{equation} 
By differentiating \eqref{q equation} with respect to $a$ 
and using \eqref{c in terms of qa}, we have 
\begin{equation} \label{q' q}
\frac{q'}{q} 
= \frac{1}{2a} \frac{ 1+a^4q^4-2a^4q^6 }{  a^4q^6-1 }.
\end{equation}
Using \eqref{q' q}, one can express derivatives of \eqref{energy pre} and \eqref{det of Lap pre} in terms of $q$ and $a$.  
Then the lemma follows from straightforward computations. 
\end{proof}

\begin{rem} \label{Rem_positivity}
Note that by \eqref{I pre deri} and 
\eqref{c in terms of qa} , we have 
\begin{align} \label{deriv of energy diff}
  \mathfrak{d}_a \,\Big(   \mathcal{I}^{ \rm pre }(a,c)  -   \mathcal{I}^{ \rm post }(a,c) \Big)  & =  \frac{1}{2a}+a - \frac{1}{ a q^2} - \frac{a^3q^4}{2} + 2a c = \frac{(1-q^2)^2 (1-a^4q^4)}{2aq^4} >0. 
\end{align}
This implies the inequality \eqref{energy inequality}. 
\end{rem}

By combining Proposition~\ref{Prop_energy eval}, Lemma~\ref{Lem_Reference Z} and Lemma~\ref{Lem_deri of N2 1 terms}, Theorem~\ref{Thm_main} can be reduced to the following proposition.

\begin{prop}[\textbf{Asymptotic expansion of the derivative of free energies}] \label{Prop_ZN derivative}
As $N\to \infty$, we have the following. 
\begin{itemize}
    \item[\textup{(i)}] For the post-critical case, we have 
\begin{equation}
 \mathfrak{d}_a \log Z_N(a,c) = 2c a N^2  + O(N^{-\infty}). 
\end{equation}
 \item[\textup{(ii)}] For the pre-critical case, we have 
\begin{equation}
 \mathfrak{d}_a \log Z_N(a,c) = \frac{ (1-a^2q^2)(2-q^2-a^2q^4) }{ 2a q^2 }\,  N^2 - \frac{q^2 (1 - a^4 q^4)^2}{8 a (1 - q^2) (1 - a^4 q^6)^2} +O(N^{-1}).  
\end{equation}
\end{itemize}
\end{prop}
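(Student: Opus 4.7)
The plan is to recast $\mathfrak{d}_a \log Z_N(a,c)$ as a quantity extractable from the Riemann--Hilbert (RH) problem associated with the planar orthogonal polynomials in \eqref{Planar OP}, and then apply the fine asymptotic expansion of Theorem~\ref{Thm_OP fine asymp}. I would begin from $\log Z_N(a,c) = \log N! + \sum_{j=0}^{N-1}\log h_j$ via \eqref{ZN orthogonal norm}. Since the monic polynomial $p_j$ minimizes $\int|p|^2 e^{-NQ}\,dA$ over monic polynomials of degree $j$, the envelope theorem gives
\[
\mathfrak{d}_a \log h_j \;=\; -\frac{N}{h_j}\int_\C |p_j(z)|^2\,\mathfrak{d}_a Q(z)\,e^{-NQ(z)}\,dA(z),
\]
with $\mathfrak{d}_a Q(z) = 2c\,\re(z-a)^{-1}$. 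Summing in $j$ packs $|p_j|^2/h_j$ into the diagonal reproducing kernel, turning the whole expression into an integrated one-point function. Equivalently --- and this is the form I would actually work with --- standard tau-function identities (Lemma~\ref{Lem_tau function rep} and Proposition~\ref{Prop_ZN YN coeff}) present $\mathfrak{d}_a \log Z_N$ as a residue at infinity of a bilinear expression in entries of the RH solution $Y_N(z)$.

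The second step is to substitute the fine asymptotic expansion of $Y_N$ outside the droplet provided by Theorem~\ref{Thm_OP fine asymp}. In the post-critical case, the outer parametrix on $S = \overline{\D(0,\sqrt{1+c})}\setminus\D(a,\sqrt{c})$ is essentially explicit, and the residue at infinity yields $2caN^2$ with all remaining corrections superpolynomially small, because the OPs decay exponentially off the droplet. In the pre-critical case, the outer parametrix is built from the conformal map $f$ of \eqref{f conformal} and carries the critical points $z_\pm$ of \eqref{z pm}; the residue then produces both the leading $N^2$ coefficient --- which must reproduce $-\mathfrak{d}_a\mathcal{I}^{\rm pre}(a,c)$ from Lemma~\ref{Lem_deri of N2 1 terms} --- and the $O(1)$ correction matching $\mathfrak{d}_a \mathcal{F}^{\rm pre}(a,c)$ of the same lemma.

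The delicate step will be extracting this $O(1)$ term in the pre-critical regime. The bare Deift--Zhou steepest descent analysis of \cite{BBLM15,LY23,KLY23} controls $Y_N$ only to precision $O(N^{-1})$ in its expansion at infinity, which is not sharp enough to isolate the constant in $\mathfrak{d}_a \log Z_N$. To sharpen it I would implement the partial Schlesinger transform of \cite{BL08}: this inserts a rational correction factor into the outer parametrix which absorbs the subleading singularities at the points $z_\pm$ and at $1/q$, promoting the expansion of $Y_N$ at infinity to the order required to pin down the coefficient $-q^2(1-a^4q^4)^2/[8a(1-q^2)(1-a^4q^6)^2]$. Once this refined expansion is in place, the remaining residue computation --- and the comparison with $\mathfrak{d}_a\mathcal{F}^{\rm pre}(a,c)$ --- reduces to rational algebra in $(a,q)$ using the algebraic constraint \eqref{q equation}.
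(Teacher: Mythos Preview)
Your proposal is correct and follows essentially the same route as the paper: reduce $\mathfrak{d}_a\log Z_N$ to $2N\mathfrak{A}_{11}$ via the $\tau$-function identity (Lemma~\ref{Lem_tau function rep}, Proposition~\ref{Prop_ZN YN coeff}), then read off $\mathfrak{A}_{11}$ from the fine asymptotics of $p_N$ (Theorem~\ref{Thm_OP fine asymp}), with the partial Schlesinger transform supplying the extra order of precision needed in the pre-critical case. Two small corrections of detail: the rational corrections in the partial Schlesinger step carry poles at the critical \emph{values} $\beta,\overline{\beta}=f(z_\pm)$ rather than at $z_\pm$ or $1/q$; and the $O(N^{-\infty})$ in the post-critical case is not a consequence of off-droplet decay of the OPs but of the structural fact that the Schlesinger corrections touch only the $(1,2)$-entry of $Y$, so iterating them drives the error in $[Y]_{11}$ below any power of $N$.
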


In the following subsection, we further reduce Proposition~\ref{Prop_ZN derivative} to certain asymptotic behaviour of the planar orthogonal polynomial \eqref{Planar OP}.


\subsection{Riemann-Hilbert problem and $\tau$-function} \label{Subsec_tau}

We shall implement the $\tau$-function \cite{BO09}, which arises in the context of the Riemann-Hilbert problem, cf. \cite{BBLM15,LY17,LY23}.  
The first important idea in \cite{BBLM15} for the asymptotic analysis of planar orthogonal polynomial $p_n$ is to demonstrate an equivalence between the planar orthogonality \eqref{Planar OP} and a certain orthogonality on a contour. Extensions of such an equivalence can be found in \cite{BKP23, LY19}. 
As a consequence, one can construct a standard Riemann-Hilbert problem for $p_n$, and we recall it here.

Let us mention that on some occasions, we write both $n$ and $N$. This distinction has usually been made in Riemann-Hilbert analysis to distinguish between the degree of the orthogonal polynomial and the number of particles.  

Since $Z_N(a, c) = Z_N(|a|, c)$, it suffices to consider real-valued $a$. However, when describing the Riemann-Hilbert problem, it is advantageous to consider the general complex $a \in \mathbb{C}$ and to distinguish between $a$ and its complex conjugate $\bar{a}$. 

Let $\Gamma$ be a simple closed curve enclosing $0$ and $a$.
Define the weight function 
\begin{equation}\label{def omega}
\omega_{N}(z)=\frac{(z-a)^{Nc}e^{-N  \bar{a}  z}}{z^{Nc+N }}. 
\end{equation}
In the proof of Lemma~\ref{Lem_tau function rep} below, we exploit the fact that the parameter \( a \) in \eqref{def omega} is complex-valued to trade derivatives with respect to the location of the singularity for derivatives with respect to the conjugate parameter \( \bar{a} \) appearing in the exponent.

Recall that the monic orthogonal polynomial $p_n$ satisfies the planar orthogonality \eqref{Planar OP}. It was shown in \cite[Section 3]{BBLM15} (see also \cite{LY19,BKP23}) that it also satisfies the contour orthogonality
\begin{equation} \label{def of contour ortho}
\int_\Gamma p_n(z) z^j \omega_N(z) \,dz=0, \qquad (j=0,1,\dots,n-1). 
\end{equation} 
In \cite{BBLM15}, the authors consider the case \( a > 0 \). However, as noted in \cite[p.115]{BBLM15}, the general case \( a \in \mathbb{C} \setminus\{0\} \) can be reduced to the real case by a rotation of coordinates. More precisely, when we consider a general \( a \in \mathbb{C} \setminus \{0\} \), if we follow the arguments for contour orthogonality given in \cite[Section~3]{BBLM15}, 
and replace $a$ in \cite[Eq.~(3.8)]{BBLM15} with $\overline a$, then the rest of the proof proceeds 
in the same way, and one arrives at the weight function of the form \eqref{def omega}, 
with $\overline{a}$ appearing only in the exponent.

Consequently, we consider the following Riemann-Hilbert problem. 

\begin{defi}[Riemann-Hilbert problem $Y$] \label{Def_RHP Y}
We consider the following Riemann-Hilbert problem for $Y\equiv Y_n$:
\begin{equation} \label{RHP Y}
\begin{cases}
Y(z)\mbox{\quad is holomorphic},&\quad z\in \C\setminus \Gamma,
\smallskip 
\\
 Y_+(z)=  Y_-(z)\begin{pmatrix}
1&\omega_{N}(z)\\0&1
\end{pmatrix},&\quad z\in \Gamma,
\smallskip 
\\
Y(z)=\big(I+{ O}(z^{-1})\big)
\begin{pmatrix}
z^n & 0
\\
0 & z^{-n}
\end{pmatrix} ,&\quad z\to\infty.\end{cases}
 \end{equation}
Here $Y_\pm(x)$ is the boundary values of $Y$ on $\Gamma$. 
\end{defi}

Let $q_{n}(z):=q_{n,N}(z)$ be the unique polynomial of degree $n-1$ such that
$$\frac{1}{2\pi i}\int_\Gamma\frac{q_{n}(w)\omega_{N}(w)}{w-z}dw=\frac{1}{z^n}\Big(1+{ O}\Big(\frac{1}{z}\Big)\Big).$$
Then the matrix function 
\begin{equation}\label{def Yz}
Y(z) \equiv Y_n(z) :=\begin{pmatrix}\displaystyle
p_{n}(z)&\displaystyle\frac{1}{2\pi i}\int_\Gamma\frac{p_{n}(w)\omega_{N}(w)}{w-z} \, d w
\smallskip 
\\ 
q_{n}(z)&\displaystyle\frac{1}{2\pi i}\int_\Gamma\frac{q_{n}(w)\omega_{N}(w)}{w-z} \, d w
\end{pmatrix}     
\end{equation}
is the unique solution to the Riemann-Hilbert problem \eqref{RHP Y}. 
For the latter purpose, let $\mathfrak{A}_{jk}  \equiv \mathfrak{A}_{jk}^{(n)}  $ be the coefficients of the $O(1/z)$-term in the large-$z$ expansion of $Y_n$:
\begin{equation}\label{Y asymtotic}
Y_n(z)=\bigg(I+\frac{1}{z}\begin{pmatrix}
\mathfrak{A}_{11}&\mathfrak{A}_{12}
\\
\mathfrak{A}_{21}& -\mathfrak{A}_{11}
\end{pmatrix}+O\Big(\frac{1}{z^{2}}\Big)\bigg) \begin{pmatrix}
z^n & 0
\\
0 & z^{-n} 
\end{pmatrix},\qquad z\to\infty.
\end{equation}  
 
We now define 
\begin{equation} \label{def of T}
\widetilde Y_n(z):=Y_n(z) T(z), \qquad T(z) \equiv T_N(z):  =\begin{pmatrix}
\Big(\dfrac{z-a}{z}\Big)^{Nc}{e^{-N  \bar{a}   z}}&0
\smallskip 
\\0&z^{N}
\end{pmatrix}.
\end{equation}  
Then $\widetilde Y_n$ satisfies the following Riemann-Hilbert problem:
\begin{equation}
\begin{cases}
\widetilde Y_n(z)\mbox{\quad is holomorphic},& \quad z\in \C\setminus ( \Gamma \cup [ 0 , a ] ) ,
\smallskip 
\\
\widetilde Y_{n,+}(z)=  \widetilde Y_{n,-}(z)\begin{pmatrix}
1&1\\0&1
\end{pmatrix},&\quad z\in \Gamma,\\
\widetilde Y_n(z)=\big(I+{ O}(z^{-1})\big) 
\begin{pmatrix}
z^n &0
\smallskip 
\\
0&z^{-n}
\end{pmatrix}T(z),&\quad z\to\infty.
\end{cases}
\end{equation} 
Note that by definition, $\widetilde{Y}_n T^{-1}=Y_n$ is invertible and holomorphic near the neighbourhood of $0$ and $a$.

The matrices $Y_N$ and $T_N$ can be used to express the derivative of the partition function.  

\begin{defi}[$\tau$-function] 
The $\tau$-function is defined by 
\begin{equation}\label{def tau function}
  \bar{\partial}_a   \log \tau:=  \underset{z=\infty}{\textup{Res}} \,\bigg[ \Tr \Big(Y(z)^{-1}\partial_z Y(z)  \bar{\partial}_a    T(z)T(z)^{-1}\Big) \bigg] .   
\end{equation}
Here $\tau \equiv \tau_n$ depends on the degree $n$ of the orthogonal polynomial $p_n$. 
\end{defi}
 
A similar definition of the $\tau$-function is given in \cite[Definition 3.2]{BO09}. As in \cite{BO09}, for our purposes, it suffices to define the derivative of the \( \tau \)-function.

By applying standard techniques to the deformation of the partition function (see e.g. \cite[Theorem 3.4]{BO09}), we shall prove the following lemma. The key idea is to compute the $\tau$-ratio $d \log (\tau_{n+1}/\tau_n)$ when the degree increases by 1. 

In the following statement, we allow \( a \) to be complex-valued in the definition of \( Z_N(a,c) \), as previously noted.

\begin{lem}[\textbf{Partition function in terms of the Riemann-Hilbert problem}] \label{Lem_tau function rep}
We have 
\begin{equation} \label{Z relation with Y T}
 \bar{\partial}_a   \log Z_N(a,c) =    -\bar{\partial}_a  \log \tau_N.  
\end{equation}
\end{lem}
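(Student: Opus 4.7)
The plan is to identify both sides of \eqref{Z relation with Y T} through the expansion of the Riemann-Hilbert solution $Y$ at infinity, appealing to the general isomonodromic $\tau$-function formalism of \cite{BO09}. Concretely, from the orthogonal representation \eqref{ZN orthogonal norm} one has
\begin{equation}
\partial_a \log Z_N(a,c)=\sum_{n=0}^{N-1}\partial_a \log h_n,
\end{equation}
so the task reduces to extracting each norm $h_n$ from the RHP data and recognising the sum as the residue in \eqref{def tau function}.

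The key input is the equivalence between the planar orthogonality \eqref{Planar OP} and the contour orthogonality against $\omega_{n,N}$ on $\Gamma$, established in \cite{BBLM15}. Through this equivalence the norm $h_n$ can be read off from the $z^{-1}$-coefficient of the expansion of $Y_n$ at infinity: writing $Y_n(z)=\bigl(I+Y_n^{(1)}/z+O(z^{-2})\bigr)\operatorname{diag}(z^n,z^{-n})$, the off-diagonal entries of $Y_n^{(1)}$ are proportional to $h_n$ and $1/h_{n-1}$, in exact analogy with the classical Fokas-Its-Kitaev setup. Next, I would compute $\partial_a T\cdot T^{-1}$ from \eqref{def of T}, which by direct differentiation equals $\operatorname{diag}\bigl(-Nc/(z-a)-Nz,\,0\bigr)$, a rational function with a simple pole at $a$ and linear growth at infinity. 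Substituting into \eqref{def tau function} and using the asymptotic expansion of $Y_n^{-1}\partial_z Y_n$, the residue at infinity picks out precisely the combination of the leading coefficients $Y_n^{(1)},Y_n^{(2)}$ that matches $\partial_a \log h_n$ after telescoping in $n$.

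To make the telescoping rigorous I would invoke the Bertola-Malgrange theory on which \cite{BO09} is based: for a one-parameter family of RHPs whose jump becomes piecewise constant after conjugation by $T$, the differential form $\omega(\partial_a)=\operatorname{Res}_{z=\infty}\operatorname{Tr}(Y^{-1}\partial_z Y\,\partial_a T\,T^{-1})$ is closed, and its potential is $\log \tau_N$ defined up to an $a$-independent constant. Identifying the potential with $\log\prod_{n=0}^{N-1}h_n=\log(Z_N/N!)$ then yields \eqref{Z relation with Y T}.

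The main obstacle I anticipate is bookkeeping at the pole $z=a$ of $\partial_a T\cdot T^{-1}$: a priori the residue formula \eqref{def tau function} only sees $z=\infty$, so one must verify that the contribution at $z=a$ vanishes (or is absorbed in the definition of $Y$). This can be checked directly from the local behaviour of $Y$ near $z=a$ dictated by the factor $(z-a)^{Nc}$ in $\omega_{n,N}$, but it requires care in checking that the trace structure kills the would-be singularity, using the triangular jump and the fact that $T$ is diagonal. Once this local analysis is settled, the remaining computation is a routine expansion at infinity along the lines of \cite{BBLM15,LY17,LY23}.
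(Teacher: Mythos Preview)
Your plan differs from the paper's and is more elaborate than needed. The paper does not telescope over degrees at all: it invokes \cite[Theorem~3.4]{BO09} as a black box to obtain $\partial_a\log\tau_N=\partial_a\log\det[\nu_{j+k}]_{j,k=0}^{N-1}$, where $\nu_k=\int_\Gamma z^k\omega_{N,N}(z)\,dz$ are the moments of the \emph{single} contour weight $\omega_{N,N}$. It then writes $Z_N(a,c)=N!\det[\mu_{jk}]$ with $\mu_{jk}=\int_\C z^j\bar z^k e^{-NQ}\,dA$, and cites the explicit formula \cite[(E.1)]{BBLM15} to show that the ratio $\det[\mu_{jk}]/\det[\nu_{j+k}]$ is independent of $a$. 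The lemma follows in one line; no expansion of $Y$ and no local analysis near $z=a$ enter.

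There is a real slippage in your route that the paper's determinantal argument bypasses. The weight $\omega_{n,N}$ in \eqref{def omega} carries a factor $z^{-n}$ depending on the degree, so the classical Fokas--Its--Kitaev picture does not apply verbatim: the off-diagonal entries of $Y_N^{(1)}$ encode the \emph{contour} norms of degrees $N$ and $N-1$ against the fixed weight $\omega_{N,N}$, not the planar $h_{N-1}$ (which by \cite{BBLM15} corresponds to $\omega_{N-1,N}$). Your intended telescoping in $n$ therefore runs across a family of different weights, and the step ``identify the potential with $\log\prod h_n$'' hides exactly the $a$-independence of the planar-to-contour conversion that still has to be proved---which is precisely \cite[(E.1)]{BBLM15} at the level of full determinants. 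Finally, your worry about the pole of $\partial_a T\cdot T^{-1}$ at $z=a$ is unnecessary: the definition \eqref{def tau function} only takes the residue at infinity, and the Hankel-determinant identity from \cite{BO09} requires no separate analysis at $z=a$.
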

\begin{proof}
  Since $\widetilde{Y}_n$ has a constant jump, $\bar{\partial}_a \widetilde{Y}_n \widetilde{Y}_n^{-1}$ is meromorphic with possible isolated singularities only at $0$ and $a$. Indeed, since $Y_n$ is invertible and holomorphic near the neighbourhood of $0$ and $a$, one can observe that  
\begin{align}
\begin{split} \label{def of Bn}
B_n(z) := \bar{\partial}_a \widetilde Y_n(z)  \widetilde Y_n(z)^{-1} & =  \bar{\partial}_a  Y_n(z) Y_n(z)^{-1}+Y_n(z) \bar{\partial}_a T(z) T(z)^{-1} Y_n(z)^{-1}    
\\
&= \bar{\partial}_a  Y_n(z) Y_n(z)^{-1} +Y_n(z) \begin{pmatrix}
-N z & 0
\\
0 & 0 
\end{pmatrix} Y_n(z)^{-1}   
\end{split}
\end{align}
is a polynomial of degree at most $1$.  
On the other hand, since $\widetilde{Y}_{n+1}$ and $\widetilde{Y}_n$ have the same constant jump, it follows that  $\widetilde{Y}_{n+1} = C_n \widetilde{Y}_n$, where $C_n(z)$ is a polynomial of degree 1. Furthermore, it follows from \eqref{Y asymtotic} that $C_n$ is of the form  
\begin{equation}\label{def of Cn form}
C_n(z)=\begin{pmatrix}
z+\mathfrak{A}_{11}^{(n+1)}-\mathfrak{A}_{11}^{(n)}&-\mathfrak{A}_{12}^{(n)}
\\
\mathfrak{A}_{21}^{(n+1)}& 0
\end{pmatrix}.
\end{equation}   

By the definition of the $\tau$-function in \eqref{def tau function} and the definition of the raising operator $C_n$, we have
\begin{align*}
\bar{\partial}_a \log \frac{\tau_{n+1}}{\tau_n}&=\underset{z=\infty}{\textup{Res}} \,\bigg[ \Tr \Big(Y_{n}(z)^{-1}C_n(z)^{-1}\partial_z C_n(z)Y_{n}(z) \bar{\partial}_a T(z)T(z)^{-1}\Big) \bigg]
\\
&= \underset{z=\infty}{\textup{Res}} \,\bigg[ \Tr \Big( C_n(z)^{-1}\partial_z C_n(z)Y_{n}(z) \bar{\partial}_a T(z)T(z)^{-1} Y_{n}(z)^{-1} \Big) \bigg]
\\
 &=\underset{z=\infty}{\textup{Res}} \,\bigg[ \Tr \Big(C_n(z)^{-1}\partial_z C_n(z)B_n(z)\Big)\bigg]-\underset{z=\infty}{\textup{Res}} \,\bigg[ \Tr \Big(C_n(z)^{-1}\partial_z C_n(z) \bar{\partial}_a Y_n(z)  Y_n(z)^{-1}\Big)\bigg]
\\
 &= -\underset{z=\infty}{\textup{Res}} \,\bigg[ \Tr \Big(C_n(z)^{-1}\partial_z C_n(z) \bar{\partial}_a Y_n(z)  Y_n(z)^{-1}\Big)\bigg],
\end{align*} 
where the second equality follows from the cyclic property of trace and the third equality follows from \eqref{def of Bn}. Here, the last equality follows from the fact that $B_n$ is a polynomial. 
Furthermore, it follows from \eqref{def of Cn form} that 
$$
 C_n(z)^{-1} \partial_z C_n(z) = \begin{pmatrix}
0 &0 
\\
-1/\mathfrak{A}_{12}^{(n)} & 0 
\end{pmatrix}.
$$
Then by \eqref{Y asymtotic}, we obtain 
$$
\bar{\partial}_a \log \frac{\tau_{n+1}}{\tau_n} =  - \bar{\partial}_a\log \mathfrak{A}_{12}^{(n)}. 
$$

We write 
\begin{equation}
\nu_k := \int_\Gamma z^k \omega_{N}(z)\,dz. 
\end{equation}
Since the monic orthogonal polynomial $p_n$ in \eqref{def of contour ortho} can be expressed in terms of the moment
$$p_n(z)=\frac{1}{\det [\nu_{j+k}]_{j,k=0}^{n-1}} \det\begin{bmatrix}
\nu_{0}&\nu_{1}&\cdots&\nu_{n} \\
\vdots&\vdots&\vdots&\vdots\\
\nu_{n-1}&\nu_{n}&\cdots&\nu_{2n-1}\\
1&z&\cdots&z^n\end{bmatrix},
$$
it follows from \eqref{def Yz} and \eqref{Y asymtotic} that 
$$
\bar{\partial}_a \log \frac{\tau_{n+1}}{\tau_n} = \bar{\partial}_a \log \Big(\det [\nu_{j+k}]_{j,k=0}^{n-1} \Big)  -\bar{\partial}_a \log  \Big(\det [\nu_{j+k}]_{j,k=0}^{n} \Big) .
$$
Using this identity inductively with the initial conditions $\bar{\partial}_a \log \tau_0=0$ and $\bar{\partial}_a \det \nu_{0}=0$, for $n=N$, we obtain 
$$
\bar{\partial}_a \log \tau_N= -\bar{\partial}_a \log \Big(\det [\nu_{j+k}]_{j,k=0}^{N-1}\Big).
$$

Let us define 
\begin{equation}
\mu_{jk} := \int_\C z^j \bar{z}^k e^{-NQ(z)}\,dA(z).  
\end{equation}
Then by the general theory on determinantal point process, we have 
\begin{equation}
Z_N(a,c) = N!\,\det [\mu_{jk}]_{j,k=0}^{N-1}. 
\end{equation}
Note that by \cite[(E.1)]{BBLM15}, we have 
\begin{equation} \label{3.38}
\frac{ \det [\mu_{jk}]_{j,k=0}^{N-1} }{ \det [\nu_{j+k}]_{j,k=0}^{N-1}   } = \frac{(-1)^{N(N-1)/2}}{\pi^N}\prod_{k=0}^{N-1} \frac{ \Gamma(cN+k+1) }{ 2i\,N^{ cN+k+1 }  }.
\end{equation}
Let us again emphasise that in \cite{BBLM15}, this identity was verified for real values of \( a \). However, the determinant \( \det [\nu_{j+k}]_{j,k=0}^{N-1} \) depends only on \( |a| \), since each entry \( \nu_{j+k} \) differs from its counterpart for real \( a \) only by a phase factor. These phase factors cancel out when computing the determinant. Alternatively, by letting $a=|a|e^{i\theta}$ and the change of variables $z_j= e^{i\theta}w_j$, one can observe that
\begin{align*}
\det [\nu_{j+k}]_{j,k=0}^{N-1} 
&=  \frac{1}{N!}\int_{ \Gamma^N } \prod_{ j<k } (z_j-z_k)^2  \prod_{j=1}^N \frac{(z_j-|a|e^{i\theta})^{Nc} e^{-N |a| e^{-i\theta} z_j } }{ z_j^{ N c+N } } \,dz_j 
\\
&=  \frac{1}{N!}\int_{ \Gamma^N } \prod_{ j<k } (w_j-w_k)^2  \prod_{j=1}^N \frac{(w_j-|a| )^{N c} e^{-N |a| w_j } }{ w_j^{ N c+N } } \,dw_j.
\end{align*} 
From this, it is again clear that $\det [\nu_{j+k}]_{j,k=0}^{N-1}$ depends only on $|a|.$

Since the right-hand side of \eqref{3.38} is independent of $a$ and $\bar a$, the lemma follows. 
\end{proof}

Recall that $ \mathfrak{d}_a  = \partial_a +\bar{\partial}_a$ is the differentiation with respect to the real variable $a$. 

\begin{prop}[\textbf{Partition functions and fine asymptotics of the orthogonal polynomial}] \label{Prop_ZN YN coeff}
We have 
\begin{equation}
 \mathfrak{d}_{a} \log Z_N(a,c) =   2N \, \mathfrak{A}_{11}, 
\end{equation}
where $\mathfrak{A}_{11}$ is defined by \eqref{Y asymtotic}.
\end{prop}
\begin{proof}  
By \eqref{def of T}, we have
$$
\bar{\partial}_a  T(z) T(z)^{-1} =  \begin{pmatrix}
-Nz & 0
\smallskip 
\\
0 & 0 
 \end{pmatrix} ,
$$
which gives 
\begin{align*}
 \Tr \Big(Y_N(z)^{-1}\partial_z Y_N(z)   \bar{\partial}_a   T(z) \,T(z)^{-1}\Big)  &=   -N  z \Big[ Y_N(z)^{-1} \partial_z Y_N(z) \Big]_{11} .
\end{align*}
Then it follows from the asymptotic behaviour in \eqref{Y asymtotic} that
\begin{equation*}
 \underset{z=\infty}{\textup{Res}} \,\bigg[ \Tr \Big(Y_N(z)^{-1}\partial_z Y_N(z)   \bar{\partial}_a  T_N(z) \,T_N(z)^{-1}\Big) \bigg] =\underset{z=\infty}{\textup{Res}} \,\Big[ -N^2+\frac{N \mathfrak{A}_{11}}{z}+O\big(\frac{1}{z^2}\big)\Big]=  - N \mathfrak{A}_{11}. 
\end{equation*} 
Combining this with \eqref{Z relation with Y T}, we have
$$
\bar{\partial}_a \log Z_N(a,c) =  N \, \mathfrak{A}_{11}.
$$
Since \( Z_N(a,c) \) is real-valued, it follows that $\mathfrak{d}_{a} \log Z_N(a,c) = 2 \, \bar{\partial}_a \log Z_N(a,c)$, which completes the proof. 
\end{proof}

Then by \eqref{def Yz}, Proposition~\ref{Prop_ZN derivative} can be further reduced to the following.

\begin{prop}[\textbf{Asymptotic behaviour of the coefficients}] \label{Prop_Asymptotic coefficient}
Let $\mathfrak{A}_{11}$ be the coefficient in \eqref{Y asymtotic}. Then as $z \to \infty$, we have 
\begin{equation} \label{OP asymptotic z inf}
p_N(z) = z^N + \mathfrak{A}_{11} z^{N-1} + O(z^{N-2}). 
\end{equation}
Furthermore, as $N \to \infty$, the coefficient $\mathfrak{A}_{11}$ satisfies the following asymptotic behaviour.  
\begin{itemize}
    \item For the post-critical case,
   \begin{equation} \label{A11 post}
    \mathfrak{A}_{11}= c a N +O(\frac{1}{N^{\infty}}).
   \end{equation}
    \item For the pre-critical case, 
    \begin{equation} \label{A11 pre}
    \mathfrak{A}_{11}=  \frac{ (1-a^2q^2)(2-q^2-a^2q^4) }{ 4a q^2 }\,  N - \frac{q^2 (1 - a^4 q^4)^2}{16 a (1 - q^2) (1 - a^4 q^6)^2} \frac{1}{N} +O(\frac{1}{N^2}). 
    \end{equation}
\end{itemize}
\end{prop}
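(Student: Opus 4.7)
The first assertion, namely that $p_N(z) = z^N + \mathfrak{A}_{11} z^{N-1} + O(z^{N-2})$ as $z\to\infty$, is immediate: by \eqref{def Yz} we have $p_N = [Y_N]_{11}$, and reading off the coefficient of $z^{N-1}$ in the expansion \eqref{Y asymtotic} identifies it with the $(1,1)$-entry of the matrix residue at infinity. The genuine work lies in the large-$N$ expansion of $\mathfrak{A}_{11}$.

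The plan is to extract this expansion from the strong asymptotics of $p_N$ provided by Theorem~\ref{Thm_OP fine asymp}. Outside the motherbody, that theorem yields a representation of the form $p_N(z) = e^{N g_Q(z)}\,\mathcal{D}_N(z)$, where $g_Q$ is the $g$-function associated with the equilibrium measure $\sigma_Q$ and $\mathcal{D}_N$ packages the global parametrix together with the sub-leading corrections coming from the partial Schlesinger transform of \cite{BL08}. Writing $g_Q(z)=\log z + g_1/z + O(z^{-2})$ and $\log \mathcal{D}_N(z) = d_1^{(N)}/z + O(z^{-2})$ at infinity and comparing with \eqref{OP asymptotic z inf} gives
\begin{equation}
\mathfrak{A}_{11} = N g_1 + d_1^{(N)} + o(1),
\end{equation}
so the task reduces to computing the two coefficients $g_1$ and $d_1^{(N)}$ in each regime.

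In the post-critical case the droplet is the annulus $S_Q=\overline{\mathbb{D}(0,\sqrt{c+1})}\setminus \mathbb{D}(a,\sqrt{c})$, so $d\sigma_Q = \mathbf{1}_{S_Q}\,dA$ has first moment $\int z\,d\sigma_Q = -ca$; hence $g_1 = ca$, giving the leading term $caN$ in \eqref{A11 post}. The error $O(N^{-\infty})$ then follows because, after the final transformation of the Riemann-Hilbert problem, the jump matrices on both circular components of $\partial S_Q$ are exponentially close to the identity, and no local parametrix contributes at any polynomial order in $1/N$: this is ultimately the same mechanism that makes the rotationally symmetric case $a=0$ exactly solvable by monomials. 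In the pre-critical case the droplet is simply connected and uniformized by the conformal map $f$ of \eqref{f conformal}; the $O(N)$ coefficient is obtained by expanding $g_Q$ at infinity in terms of $f^{-1}$ and simplifying via \eqref{q equation}, which produces $(1-a^2q^2)(2-q^2-a^2q^4)/(4aq^2)$. The $O(N^{-1})$ correction arises from the first-order partial Schlesinger transform applied to the global parametrix, producing a meromorphic correction at infinity sourced at the two critical points $z_\pm$ of \eqref{z pm}; using \eqref{q equation} and \eqref{q' q} this simplifies to the stated coefficient $-q^2(1-a^4q^4)^2/[16a(1-q^2)(1-a^4q^6)^2]$.

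The main obstacle will be the pre-critical $O(N^{-1})$ correction: carrying out the partial Schlesinger transform to first order in $1/N$, tracking the contributions of the two local parametrices near $z_\pm$, and performing the algebraic simplifications in $q, a, R, \kappa$ (all coupled through the single cubic \eqref{q equation}) needed to land exactly on the stated closed form. The post-critical case, by contrast, reduces via the annular structure of $S_Q$ to an essentially explicit computation in which the exponential smallness of the error is built in from the start.
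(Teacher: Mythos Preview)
Your overall strategy matches the paper's: both asymptotics for $\mathfrak{A}_{11}$ are read off from the strong asymptotics of $p_N$ in Theorem~\ref{Thm_OP fine asymp}, with the leading $O(N)$ term coming from the $1/z$ coefficient of the $g$-function (Lemma~\ref{Lem_g asymptotic}) and the pre-critical $O(N^{-1})$ correction coming from the residue at infinity of the rational entry $R_{11}(z)$ in \eqref{R1 R2 product}. That much is right, and the identification of the pre-critical correction as the hard part is accurate.

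However, your explanation of the post-critical $O(N^{-\infty})$ error would fail if carried out as written. First, the jumps of the Riemann--Hilbert problem live on the motherbody $\mathcal{B}$ of Definition~\ref{Def_motherbody} (a single simple closed curve through the point $\beta$), \emph{not} on the two circular components of the droplet boundary $\partial S_Q$; you are conflating the support of the equilibrium measure with the support of the limiting zero distribution. Second, there \emph{is} a local parametrix near $\beta$, and it \emph{does} contribute polynomial corrections in $1/N$---the error after one Schlesinger step is only $O(N^{-3/2})$, not exponentially small. The actual mechanism is structural: in the post-critical case the local parametrix and the Schlesinger correction $R(z)$ are strictly upper-triangular, so every iteration of the transform updates only the $(1,2)$-entry of $Y$ and never touches $[Y]_{11}=p_N$. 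That is why one can push the error to $O(N^{-m})$ for any $m$. Your claim that ``no local parametrix contributes at any polynomial order'' misidentifies the reason and would not survive contact with the RHP.

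A smaller slip in the pre-critical case: the local parametrices, and hence the poles of $R_1,R_2$, sit at the critical \emph{values} $\beta=f(z_+)$ and $\bar\beta=f(z_-)$ in the $z$-plane, not at the critical points $z_\pm$ of \eqref{z pm}, which live in the uniformizing $w$-disk. The paper computes $\underset{z=\infty}{\textup{Res}}\,R_{11}(z)$ explicitly from the coefficients $h_{jk}$ and $\gamma_{11},\gamma_{12}$ of Definition~\ref{Def_Rational functions} and Lemma~\ref{Lem_local coord asymp}, and the algebraic reduction to the closed form in \eqref{A11 pre} runs through \eqref{def of beta} and \eqref{f conformal} rather than \eqref{q' q}.
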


In Theorem~\ref{Thm_OP fine asymp} below, we shall prove a stronger statement on the fine asymptotic behaviour of $p_N$.

\subsection{Fine asymptotic behaviour and proof of Theorem~\ref{Thm_main}} \label{Subsec_fine asymp}

We show the fine asymptotic behaviour of the orthogonal polynomial. 

\begin{thm}[\textbf{Fine asymptotic behaviour of the orthogonal polynomial}] \label{Thm_OP fine asymp}
Let $\mathcal B$ the curve defined in Definition~\ref{Def_motherbody} and $g$ be the function defined in Definition~\ref{def gfunction}.  
Then as $N\to \infty$, we have the following. 
\begin{itemize}
   \item \textup{\textbf{(Post-critical regime)}} We have 
   \begin{equation}\label{post asymptotics}
    p_{N}(z) = z^N \Big( \frac{z}{z-a} \Big)^{c N} \Big(1+O(N^{-\infty})\Big), \qquad z \in \ext(\mathcal B).
   \end{equation}
 
   \item  \textup{\textbf{(Pre-critical regime)}} We have 
\begin{equation}\label{pre asymptotics}
p_{N}(z)=\Big(\sqrt{ R F'(z)}(1+R_{11}(z))-\frac{\sqrt{\kappa F'(z)}}{F(z)-  q}R_{12}(z)\Big) e^{Ng(z)}  \Big(1+O(N^{-2})\Big),  \qquad z \in \ext(\mathcal B)  
\end{equation}
where $F(z)$ is given by \eqref{def of F}, and $R_{11}$ and $R_{12}$ are defined in Definition~\ref{Def_Rational functions}. 
\end{itemize}
Here the error terms are uniform over $z$ in compact sets of $\widehat\C \setminus \mathcal B$ where $\widehat{\C}:=\C\cup \{\infty\}$.
\end{thm}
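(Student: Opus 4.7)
My plan is to carry out the Deift--Zhou nonlinear steepest descent analysis on the Riemann--Hilbert problem \eqref{RHP Y}, following the framework developed in \cite{BBLM15,LY23,KLY23}, and then to refine it via the partial Schlesinger transform of \cite{BL08} in order to achieve the sharpened error bounds stated above. The first stage is the standard chain of matrix transformations: starting from \eqref{def of T}, one normalizes further using the $g$-function of Definition~\ref{def gfunction} so that the resulting $S$-matrix tends to the identity at infinity. Opening lenses along the motherbody $\mathcal B$ and invoking the variational identities for the equilibrium measure of $Q$ makes the jumps exponentially close to the identity away from $\mathcal B$, leaving only a piecewise-constant leading jump on $\mathcal B$ itself. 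A global parametrix $P$ solving this model problem is then constructed, with its form dictated by the topology of $\mathcal B$.

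In the post-critical regime, $\mathcal B$ consists of disjoint components associated with the annular droplet, and $P$ is built from elementary factors. Since $z \in \ext(\mathcal B)$ stays uniformly away from any soft-edge endpoints, the error matrix $R$ comparing $S$ to the parametrix satisfies a small-norm RHP whose jumps are exponentially small in $N$; unwinding the transformations then yields \eqref{post asymptotics} with an $O(N^{-\infty})$ correction. In the pre-critical regime, $\mathcal B$ is a single Jordan curve and $P$ is constructed from the conformal map $F$ of \eqref{def of F}, producing the leading factor $\sqrt{RF'(z)}\,e^{Ng(z)}$. However, the usual construction only yields $R = I + O(N^{-1})$, which falls short of \eqref{pre asymptotics}.

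The decisive sharpening, and the main technical obstacle, is improving the pre-critical error from $O(N^{-1})$ to $O(N^{-2})$. I plan to apply a partial Schlesinger transform $P \mapsto P \cdot (I + N^{-1}K(z))$, in which the rational matrix $K(z)$ is chosen precisely to absorb the subleading $1/N$ contribution from the local Airy parametrices at the critical preimages $z_\pm$ in \eqref{z pm}. The functions $R_{11}(z)$ and $R_{12}(z)$ appearing in \eqref{pre asymptotics} then arise as the $(1,1)$ and $(1,2)$ entries of this correction matrix, with their explicit rational form determined by the matching data on the boundaries of the local disks around $z_\pm$ together with the normalization at infinity. Verifying that this corrector is globally well defined and that the modified parametrix indeed matches the local models up to $O(N^{-2})$ is the most delicate step. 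Once this is in place, a standard small-norm argument gives $R = I + O(N^{-2})$ uniformly on compact subsets of $\C \setminus \mathcal B$, and extracting the $(1,1)$-entry of $Y$ from $Y = R\,P\,T^{-1}$ in the exterior region produces \eqref{pre asymptotics}.
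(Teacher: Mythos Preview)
Your overall plan for the pre-critical regime matches the paper's, with two inaccuracies to correct. First, the local Airy parametrices live in small disks in the $z$-plane centred at the \emph{endpoints} $\beta$ and $\overline{\beta}$ of the open arc $\mathcal{B}$ (these are the critical \emph{values} $f(z_\pm)$, see \eqref{def of beta}), not at the critical points $z_\pm$ of $f$, which lie in the conformal parameter domain. Second, the rational Schlesinger corrector acts on the global parametrix from the \emph{left}, $\Phi \mapsto R_1(z)R_2(z)\Phi(z)$, so as to preserve the model jump on $\mathcal{B}$; your right-multiplication $P\mapsto P(I+N^{-1}K)$ would not. With those fixes the matching on $\partial D_\beta$ and $\partial D_{\overline{\beta}}$ to order $O(N^{-2})$ and the small-norm argument proceed as you outline, and $R_{11},R_{12}$ arise as the first-row entries of $R_1R_2-I$.

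The post-critical argument, however, has a genuine gap. In Definition~\ref{Def_motherbody} the post-critical motherbody $\mathcal{B}$ is a \emph{single} simple closed curve, not a union of components; you are apparently confusing it with the two-component boundary of the annular droplet. More importantly, $\mathcal{B}$ passes through a distinguished point $\beta$ at which $\phi'(\beta)=0$, and a local parametrix (of error-function type rather than Airy) \emph{is} required there. The jump of the error matrix on $\partial D_\beta$ is then only $I+O(N^{-1/2})$, not exponentially close to the identity, so your route to $O(N^{-\infty})$ fails. The correct mechanism is structural: the error-function local parametrix satisfies $F(\zeta)=I+(\text{strictly upper-triangular})$ to every order of its asymptotic expansion in $1/\zeta$, while the global parametrix in $\ext(\mathcal{B})$ equals $I$. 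Hence each partial Schlesinger iteration modifies only the $(1,2)$-entry of $Y$ and leaves $p_N=[Y]_{11}$ untouched; iterating gives an $O(N^{-m})$ bound for every $m$. Without this observation you cannot justify \eqref{post asymptotics}.
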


We stress that Theorem~\ref{Thm_OP fine asymp} extends \cite[Theorems 1.3 and 1.4]{BBLM15} providing more precise error terms.  
Let us also mention that the asymptotic behaviour of \emph{orthonormal} polynomials outside the droplet has also been studied by Hedenmalm and Wennman in a more general context \cite{HW21,HW20,Hed24}. In these works, an algorithm computing the subleading correction terms was also introduced, but their explicit evaluations for a given potential require a separate analysis. In particular, Theorem~\ref{Thm_OP fine asymp} is not a direct consequence of the general theory developed by Hedenmalm and Wennman, as we need to compute the \emph{monic} orthogonal polynomial and also derive the asymptotic behaviour inside the droplet.

\medskip 

We are now ready to show Theorem~\ref{Thm_main}. 

\begin{proof}[Proof of Theorem~\ref{Thm_main}] 
As explained in the previous subsection, it suffices to show Proposition~\ref{Prop_Asymptotic coefficient}.

 
For the post-critical case, it follows from \eqref{post asymptotics} that 
$$
p_N(z)=\Big(z^N+Nac\,z^{N-1}+O(z^{N-2})\Big)\Big(1+O(N^{-\infty})\Big),\qquad  z\to \infty.
$$
This gives rise to \eqref{A11 post}. 

\medskip 

Next, let us show the pre-critical case. 
By Lemma~\ref{Lem_g asymptotic} below, we have 
\begin{equation}
e^{ N g(z) }=  z^N \bigg(1  - \Big(\frac{1}{4a}+\frac{a}{2} -\frac{1}{2aq^2} -\frac{a^3q^4}{4} \Big) \frac{N}{z} +O(\frac{1}{z^2}) \bigg),  \qquad   z\to\infty. 
\end{equation}
On the other hand, by using \cite[Eq.(1.28)]{BBLM15},  
\begin{equation}
F(z)=\frac{z}{R}+\frac{\kappa}{Rq}+\frac{\kappa}{z}+O\Big(\frac{1}{z^2}\Big),\qquad z\to\infty.
\end{equation}
Therefore we have 
\begin{equation}
R F'(z) = 1+O(\frac{1}{z^2}), \qquad z \to \infty.  
\end{equation}
Furthermore, by definition, we have 
\begin{equation}
R_{11}(z)=O(\frac{1}{z}),\qquad R_{12}(z)=O(\frac{1}{z}), \qquad z \to \infty. 
\end{equation}
Then the desired identity \eqref{A11 pre} follows from 
\begin{equation} \label{Residue of R11}
\underset{z=\infty}{\textup{Res}} \,\Big[ R_{11}(z)  \Big]= - \frac{q^2 (1 - a^4 q^4)^2}{16 a (1 - q^2) (1 - a^4 q^6)^2} \frac{1}{N}. 
\end{equation}

It remains to show \eqref{Residue of R11}. For this, note that as $z \to \infty,$ 
\begin{align*}
R_1(z)R_2(z)  &= \bigg(I+\frac{1}{z-\overline\beta}R_2(z)h_{11}R_2(z)^{-1}+\frac{1}{(z-\overline\beta)^2}R_2(z)h_{12}R_2(z)^{-1}  \bigg) \bigg( I+\frac{h_{21}}{z-\beta}+\frac{h_{22}}{(z-\beta)^2}\bigg)  . 
\end{align*}
Here $R_1, R_2$, $h_{11}$, $h_{12}$ and $h_{21}, h_{22}$ are given in Definition~\ref{Def_Rational functions}. 
Then 
\begin{align}
\begin{split}
\underset{z=\infty}{\textup{Res}} \,\Big[ R_{11}(z)  \Big] & =\frac{1}{N} \frac{1+i}{128 \sqrt{2} } \bigg(   \frac{3}{  (R\kappa)^{1/4} }\Big(\frac{1}{\gamma_{11}^{3/2}}-\frac{i}{\gamma_{21}^{3/2}}\Big)+10 (R\kappa)^{1/4} \Big(\frac{\gamma_{22}}{\gamma_{21}^{5/2}}-  \frac{i\gamma_{12}}{\gamma_{11}^{5/2}}\Big) \bigg)
\\
&= \frac{1+i}{ 128 \sqrt{2} } \frac{1}{( R \kappa )^{1/4} }  \bigg(  \frac{1}{\gamma_{11}^{3/2}} \Big( 3-10 (R\kappa)^{1/2} i \frac{ \gamma_{12} }{ \gamma_{11} }\Big) -i\, \overline{  \frac{1}{\gamma_{11}^{3/2}} \Big( 3-10 (R\kappa)^{1/2} i \frac{ \gamma_{12} }{ \gamma_{11} }\Big) }  \, \bigg),  
\end{split}
\end{align}
where $\gamma_{11}=\overline{\gamma}_{21}$ and $\gamma_{12}=\overline{\gamma}_{22}$ are given by \eqref{gamma 11 asym} and \eqref{gamma 12 asym}. 
We claim that
\begin{align}
\begin{split} \label{desired iden for main}
&\quad   \frac{1}{\gamma_{11}^{3/2}} \Big( 3-10 (R\kappa)^{1/2} i \frac{ \gamma_{12} }{ \gamma_{11} }\Big) -i\, \overline{  \frac{1}{\gamma_{11}^{3/2}} \Big( 3-10 (R\kappa)^{1/2} i \frac{ \gamma_{12} }{ \gamma_{11} }\Big) }  
\\
&=   - 4\sqrt{2}(1-i) ( R \kappa )^{1/4}   \frac{q^2 (1 - a^4 q^4)^2}{  a (1 - q^2) (1 - a^4 q^6)^2} , 
\end{split}
\end{align}
which leads to \eqref{Residue of R11}. 
By using \eqref{gamma 11 asym} and \eqref{gamma 12 asym}, we have 
\begin{align*}
&\quad \frac{1}{( R \kappa )^{1/4} } \frac{1}{\gamma_{11}^{3/2}} \bigg( 3 - 10 (R\kappa)^{1/2} i \frac{ \gamma_{12} }{ \gamma_{11} }\bigg) =  \frac{2}{( R \kappa )^{1/4} }   \frac{ (\overline{\beta}-a) \overline{\beta}  }{ a(\overline{\beta}-R/q) \sqrt{ \overline{\beta}-\beta }  }  \bigg( 3 -10 (R\kappa)^{1/2} i \frac{ \gamma_{12} }{ \gamma_{11} }\bigg) 
\\
&= \frac{2}{( R \kappa )^{1/4} }   \frac{ (\overline{\beta}-a) \overline{\beta}  }{ a(\overline{\beta}-R/q) \sqrt{ \overline{\beta}-\beta }  }  \bigg( 3 -8 (R\kappa)^{1/2} i \Big(  \frac{1}{\overline{\beta}-R/q}+\frac{1}{2(\overline{\beta}-\beta)}-\frac{1}{\overline{\beta}-a}-\frac{1}{ \overline{\beta} }  \Big) \bigg). 
\end{align*} 
Then 
\begin{align*}
&\quad \frac{1}{( R \kappa )^{1/4} }  \bigg(  \frac{1}{\gamma_{11}^{3/2}} \Big( 3-10 (R\kappa)^{1/2} i \frac{ \gamma_{12} }{ \gamma_{11} }\Big) -i\, \overline{  \frac{1}{\gamma_{11}^{3/2}} \Big( 3-10 (R\kappa)^{1/2} i \frac{ \gamma_{12} }{ \gamma_{11} }\Big) }  \, \bigg)
\\
&=  \frac{e^{ \frac{\pi i}{4} } }{  a (\kappa R)^{1/2} } \frac{ (\overline{\beta}-a) \overline{\beta}  }{ \overline{\beta}-R/q  }  \bigg( 3 -8 (R\kappa)^{1/2} i \Big(  \frac{1}{\overline{\beta}-R/q}+\frac{1}{2(\overline{\beta}-\beta)}-\frac{1}{\overline{\beta}-a} -\frac{1}{ \overline{\beta} }  \Big) \bigg)
\\
&\quad - \frac{e^{ \frac{\pi i}{4} } }{  a (\kappa R)^{1/2} }   \frac{ (\beta-a) \beta  }{  \beta-R/q   }  \bigg( 3 + 8 (R\kappa)^{1/2} i \Big(  \frac{1}{\beta-R/q}+\frac{1}{2(\beta-\overline{\beta})}-\frac{1}{\beta-a}-\frac{1}{ \beta }  \Big) \bigg). 
\end{align*}
Note that 
\begin{align*}
&\quad \frac{ (\overline{\beta}-a) \overline{\beta}  }{ \overline{\beta}-R/q  }   -   \frac{ (\beta-a) \beta  }{  \beta-R/q   }  = \frac{   (\overline{\beta}-a) \overline{\beta}  (\beta-R/q) -(\beta-a) \beta (\overline{\beta}-R/q  )   }{ ( \overline{\beta}-R/q )(\beta-R/q) }
 \\
 &= -q(\beta-\overline{\beta})  \frac{ q|\beta|^2 +(a-\beta-\overline{\beta}) R  }{ q^2 |\beta|^2 - q  (\beta+\overline{\beta})R+R^2 }  = -4 (\kappa R)^{1/2} i \,q\, \frac{ q|\beta|^2 +(a-\beta-\overline{\beta}) R  }{ q^2 |\beta|^2 - q  (\beta+\overline{\beta})R+R^2 } .
\end{align*}
On the other hand, we have 
\begin{align*}
& \quad \frac{ (\beta-a) \beta  }{  \beta-R/q   } \Big(  \frac{1}{\beta-R/q}+\frac{1}{2(\beta-\overline{\beta})}-\frac{1}{\beta-a}-\frac{1}{ \beta }  \Big)  +  \frac{ (\overline{\beta}-a) \overline{\beta}  }{ \overline{\beta}-R/q  }   \Big(  \frac{1}{\overline{\beta}-R/q}+\frac{1}{2(\overline{\beta}-\beta)}-\frac{1}{\overline{\beta}-a} -\frac{1}{ \overline{\beta} }  \Big)
\\
&= - \frac{ \beta^2+ (a-2\beta) R/q }{ (\beta-R/q)^2 } - \frac{ \overline{\beta}^2+ (a-2\overline{\beta}) R/q }{ (\overline{\beta}-R/q)^2 }  + \frac{q}{2} \frac{ q|\beta|^2 +(a-\beta-\overline{\beta}) R  }{ q^2 |\beta|^2 - q  (\beta+\overline{\beta})R+R^2 }. 
\end{align*}
Therefore we obtain 
\begin{align*}
&\quad \frac{1}{( R \kappa )^{1/4} }  \bigg(  \frac{1}{\gamma_{11}^{3/2}} \Big( 3-10 (R\kappa)^{1/2} i \frac{ \gamma_{12} }{ \gamma_{11} }\Big) -i\, \overline{  \frac{1}{\gamma_{11}^{3/2}} \Big( 3-10 (R\kappa)^{1/2} i \frac{ \gamma_{12} }{ \gamma_{11} }\Big) }  \, \bigg)
\\
&=  \frac{  4\sqrt{2}(1-i) }{  a   }  \bigg( 2 q\, \frac{ q|\beta|^2 +(a-\beta-\overline{\beta}) R  }{ q^2 |\beta|^2 - q  (\beta+\overline{\beta})R+R^2 } - \frac{ \beta^2+ (a-2\beta) R/q }{ (\beta-R/q)^2 } - \frac{ \overline{\beta}^2+ (a-2\overline{\beta}) R/q }{ (\overline{\beta}-R/q)^2 }  \bigg) .
\end{align*}
Then the desired identity \eqref{desired iden for main} follows from \eqref{def of beta} and \eqref{f conformal}.
\end{proof}

For our main theorem, it remains to prove Proposition~\ref{Prop_energy eval} and Theorem~\ref{Thm_OP fine asymp}.  
Before that, we discuss the critical case in a separate subsection.

\bigskip

\section{Potential theoretic preliminaries} \label{Section_energy}

In this section, we prove Proposition~\ref{Prop_energy eval}. 
Before the proof, let us first consider the radially symmetric case. 

\begin{rem}[Energy for the radially symmetric case]
We consider a general radially symmetric potential
\begin{equation}
    W(z)= w(|z|), \qquad w: \R_+ \to \R.
\end{equation}
Suppose that $\Delta W(z) >0 $ in $\C$. Then it follows from a general theory \cite{ST97} that the associated droplet $S_W$ is given by 
\begin{equation}
S_W= \{ z \in \C :  r_0 \le |z| \le r_1 \},
\end{equation}
where $(r_0,r_1)$ is the unique pair of constants satisfying 
\begin{equation}
r_0 w'(r_0)=0, \qquad r_1 w'(r_1)=2. 
\end{equation}
Furthermore, the energy $I_V[\sigma_Q]$ is given by 
\begin{equation} \label{energy radially sym}
I_W[\sigma_W]=w(r_1)-\log r_1 -\frac14 \int_{r_0}^{r_1} r w'(r)^2\,dr. 
\end{equation}
Using this, we have that for the potential $Q$ in \eqref{Q insertion} with $a=0$, 
\begin{equation}
I_Q[\sigma_Q] \Big|_{a=0}= \frac{3}{4}+\frac{3c}{2}+\frac{c^2}{2}\log c -\frac{(c+1)^2}{2}\log(c+1).
\end{equation}
This coincides with \eqref{energy post}. 
\end{rem}

\subsection{$g$-function}

In Riemann-Hilbert analysis, the first step is to normalise the problem so that it is close to the identity near infinity. 
For this, we need a proper function called the $g$-function, which indeed gives the logarithmic energy of limiting zeros of the orthogonal polynomial.

We first recall the motherbody defined in \cite[Section 1.1]{BBLM15}. 

\begin{defi}[Motherbody; limiting skeleton] \label{Def_motherbody}
The motherbody $\mathcal B$ is defined as follows.
\begin{itemize}
    \item For the post-critical case, let 
    \begin{align}\label{def beta and b}
    \beta=\frac{a^2+1-\sqrt{(1-a^2)^2-4a^2c}}{2a}, \qquad  b=\frac{a^2+1+\sqrt{(1-a^2)^2-4a^2c}}{2a}.
    \end{align}
    Then the simple closed curve $\mathcal B$ is defined by the following three conditions.
    \begin{itemize}
     \item $\mathcal{B}$ is a simple closed curve such that $\beta \in \mathcal{B}$; \smallskip 
\item $\mathcal{B}$ contains $0$ and $a$ in its interior; \smallskip 
\item The following inequality is satisfied on the curve $\mathcal{B}$
\begin{equation}
a^2 \frac{(z-\beta)^2(z-b)^2}{ z^2(z-a)^2 } \,dz <0, 
\end{equation}
where $dz$ is the standard differential.
    \end{itemize} 
    \smallskip 
    \item For the pre-critical case, let $\beta$ be given by \eqref{def of beta}. Let $b=R/q$, where $R$ and $q$ are given by \eqref{f conformal} and \eqref{q equation}.\footnote{There is a typo in \cite[Eq.(1.12)]{BBLM15}, where $b:=\alpha/\rho$ should be replaced with $b:=\rho/\alpha$.}   
    Then $\mathcal B$ is defined by the following three conditions.
    \begin{itemize}
        \item $\mathcal{ B} $ has the endpoints at $\beta$ and $\overline{\beta}$; \smallskip 
        \item  $\mathcal{ B}$ intersects the negative real axis; \smallskip 
        \item The following inequality is satisfied on the curve $\mathcal{B}$
    \begin{equation}
    \frac{ (z-b)^2(z-\beta) (z-\overline{\beta}) }{ (z-a)^2 z^2 } \,dz<0.  
    \end{equation}
    \end{itemize}
\end{itemize}
\end{defi}

Here, we use a slight abuse of notation by employing both $\beta$ and $b$ to denote variables in both post- and pre-critical regimes.
See Figure~\ref{Fig_motherbody} for an illustration of the motherbody.

\begin{figure}[h!]
	\begin{subfigure}{0.33\textwidth}
		\begin{center}		\includegraphics[height=0.8\textwidth]{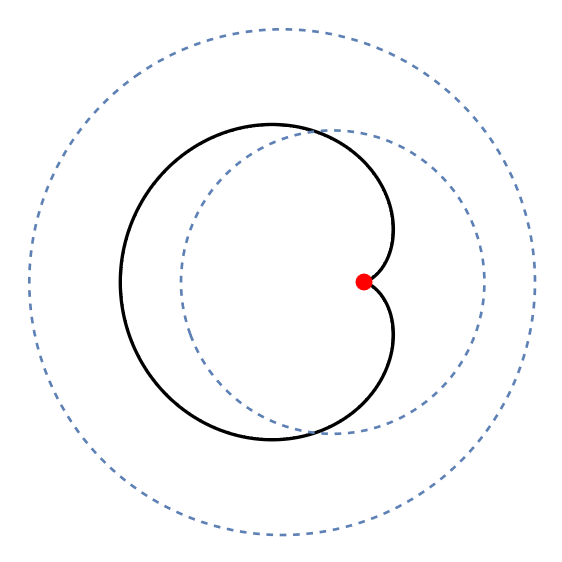}
		\end{center}
		\subcaption{Post-critical; $a=1/4$}
	\end{subfigure}	
 \quad
	\begin{subfigure}{0.33\textwidth}
		\begin{center}	
	 \includegraphics[height=0.8\textwidth]{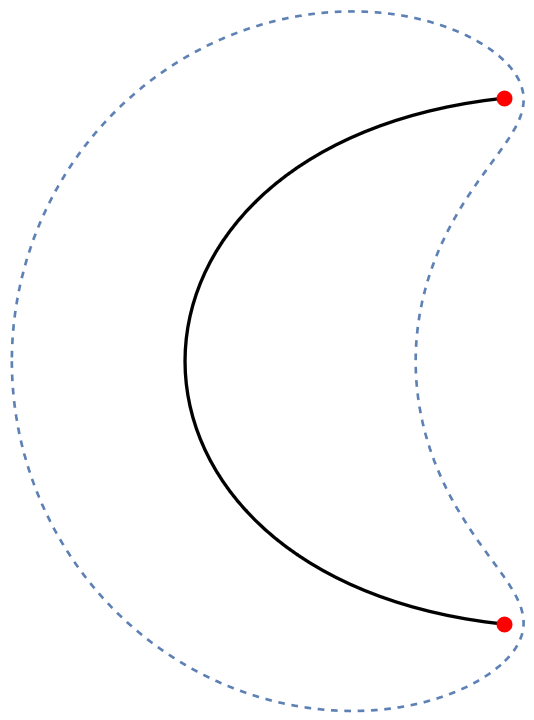}
		\end{center}
		\subcaption{Pre-critical; $a=1$}
	\end{subfigure}	
    \caption{ The solid black lines illustrate the motherbody in Definition~\ref{Def_motherbody}, where $c = 9/16$. The dashed lines represent the boundary of the droplet, while the red dots indicate $\beta$ in the post-critical case (A) and both $\beta$ and $\overline{\beta}$ in the pre-critical case (B). }
    \label{Fig_motherbody}
\end{figure}

We now recall the $g$-function defined in \cite[Definition 1.2]{BBLM15}.

\begin{defi}[$g$-function]\label{def gfunction}
The $g$-function is defined as follows.
\begin{itemize}
    \item For the post-critical case, 
    \begin{equation}
    g(z)=\begin{cases}
      \displaystyle  \log z+c \log \Big( \frac{z}{z-a} \Big),& z\in \ext(\mathcal B),
        \smallskip 
        \\
       az+\re\big((1+c)\log\beta-c\log(\beta-a)-\beta a \big),& z\in \mbox{int}(\mathcal B).
    \end{cases}
\end{equation} 
We also define 
\begin{equation}
\phi(z) = \int_{ \beta }^z y(s)\,ds, \qquad y(z)= \pm a \frac{(z-b)(z-\beta)}{z(z-a)}.
\end{equation}
\item For the pre-critical case, let 
\begin{equation}
V(z)= a z -c \log(z-a) + (c+1) \log z. 
\end{equation}
Write
\begin{equation} \label{def of y and phi}
y(z) = \frac{ a(z-R/q) \sqrt{ (z-\beta) (z-\overline{\beta})  }  }{ (z-a)z  },  \qquad 
\phi(z) = \int_{ \beta }^z y(s)\,ds, \qquad (z \in \C \setminus \mathcal B). 
\end{equation}
Then the $g$-function is defined by
\begin{equation} 
g(z) = \frac12 \Big( V(z)-\phi(z)+\ell \Big), \qquad (z \in \C \setminus \mathcal B), 
\end{equation}
where the constant $\ell \in \R$ is defined such that 
\begin{equation} \label{g log z asymptotic}
\lim_{z \to \infty} (g(z)-\log z) =0.  
\end{equation} 
\end{itemize}
\end{defi}

The $g$-function plays an important role in the asymptotic behaviour of orthogonal polynomials, see Theorem~\ref{Thm_OP fine asymp}.  
This indeed comes from the fact that if we write $\zeta_j$ for the zeros of $p_n$ and $\nu$ for their limiting distribution, we have formally
\begin{align*}
p_N(z) = \prod_{j=1}^N (z-\zeta_j) = \exp\Big( \sum_{j=1}^N \log(z-\zeta_j) \Big) \sim \exp \Big( N \int \log (z-\zeta) \,d\nu(\zeta) \Big).  
\end{align*}
Thus, the $g$-function is naturally defined by $\int \log(z-\zeta) \,d\nu(\zeta)$ in a more general context, and Definition~\ref{def gfunction} provides its evaluations in our present context.

For our purposes explained in Section~\ref{Section_Overall}, let us compute the asymptotic behaviour of $e^{Ng(z)}$ as $z \to \infty$.
In the end, this will provide the leading order asymptotic behaviour of the derivative of the partition function.

\begin{lem}[\textbf{Large-argument behaviour of the $g$-function}] \label{Lem_g asymptotic}
As $z \to \infty$, we have the following.
\begin{itemize}
    \item For the post-critical case, 
    \begin{equation}
e^{Ng(z)} = z^N \Big( 1+ ac\, \frac{N}{z} +O\Big(\frac{1}{z^2}\Big) \Big), \qquad z \to \infty.
\end{equation}
\item For the pre-critical case, 
\begin{equation}
e^{ N g(z) }=  z^N \bigg(1  - \Big(\frac{1}{4a}+\frac{a}{2} -\frac{1}{2aq^2} -\frac{a^3q^4}{4} \Big) \frac{N}{z} +O\Big(\frac{1}{z^2}\Big) \bigg),\qquad z \to \infty.
\end{equation}
\end{itemize}
\end{lem}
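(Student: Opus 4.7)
\textbf{Proof proposal for Lemma~\ref{Lem_g asymptotic}.}

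The plan is to treat the two cases separately, since the post-critical $g$-function has a closed form while the pre-critical one is defined via an incomplete elliptic-type integral.

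For the post-critical case, I would simply rewrite
\[
g(z) = \log z + c\log\frac{z}{z-a} = \log z - c\log\Bigl(1 - \frac{a}{z}\Bigr)
\]
and Taylor expand $\log(1-a/z) = -a/z - a^2/(2z^2) - \cdots$, giving $g(z) = \log z + ca/z + O(z^{-2})$. Exponentiating, $e^{Ng(z)} = z^N\bigl(1 + Nca/z + O(z^{-2})\bigr)$, as claimed. This part is essentially immediate.

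For the pre-critical case, the strategy is to differentiate and avoid handling the integral $\phi(z)$ and the constant $\ell$ directly. From the definition
\[
g'(z) = \tfrac12\bigl(V'(z) - y(z)\bigr),
\]
and since $g(z) - \log z \to 0$ at infinity (by \eqref{g log z asymptotic}), if we can write $g'(z) = 1/z + A/z^2 + O(z^{-3})$ then $g(z) = \log z - A/z + O(z^{-2})$. Thus it suffices to extract the $1/z^2$ coefficient of $V'(z) - y(z)$. A direct expansion gives $V'(z) = a + 1/z - ca/z^2 + O(z^{-3})$. For $y(z)$, I would expand
\[
\sqrt{(z-\beta)(z-\overline\beta)} = z - \tfrac{\beta+\overline\beta}{2} + \frac{(\beta-\overline\beta)^2}{-8z} + O(z^{-2}),
\]
multiply by $(z-R/q)$ and by $a/(z(z-a))$, and collect powers of $1/z$. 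Using $\beta+\overline\beta = 2(Rq - \kappa/q)$ and $(\beta-\overline\beta)^2 = -16\kappa R$ from \eqref{def of beta}, together with $R,\kappa$ from \eqref{f conformal}, this yields
\[
y(z) = a - \tfrac1z + \frac{a}{z^2}\bigl(a^2 + a d_0 + d_1\bigr) + O(z^{-3}),
\]
where $a+d_0 = -1/a$ (which pins down the logarithmic balance) and $d_1 = 2\kappa R + R^2 - R\kappa/q^2$.

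The main obstacle is the final algebraic simplification: showing that
\[
A \;=\; \tfrac{a}{2}\bigl(c + a^2 + a d_0 + d_1\bigr) \;=\; -\Bigl(\tfrac{1}{4a} + \tfrac{a}{2} - \tfrac{1}{2aq^2} - \tfrac{a^3q^4}{4}\Bigr).
\]
To handle this I would (i) substitute the closed-form expressions for $R^2$, $\kappa R$, $R\kappa/q^2$ in terms of $a,q$, (ii) eliminate $c$ via the identity \eqref{c in terms of qa} (equivalently \eqref{q equation}), and (iii) clear denominators by multiplying through by $4a^2q^4$. Both sides then reduce to the polynomial $2q^2 - q^4 - 2a^2q^4 + a^4q^8$, closing the identity. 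Once $A$ is identified, exponentiating $Ng(z) = N\log z - NA/z + O(z^{-2})$ produces the stated expansion
\[
e^{Ng(z)} = z^N\Bigl(1 + \bigl(\tfrac{1}{4a} + \tfrac{a}{2} - \tfrac{1}{2aq^2} - \tfrac{a^3q^4}{4}\bigr)\tfrac{-N}{z} + O(z^{-2})\Bigr).
\]
The expected difficulty is entirely bookkeeping: matching the bracket on the right-hand side requires careful use of the cubic \eqref{q equation} that defines $q$ in terms of $a$ and $c$.
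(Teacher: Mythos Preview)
Your approach is correct and mirrors the paper's own proof: differentiate to get $2g'(z)=V'(z)-y(z)$, expand at infinity, integrate using the normalization $g(z)-\log z\to 0$, and exponentiate. There is a harmless sign slip in your intermediate identity for $A$ (the $1/z^2$ coefficient of $g'$ is $+\bigl(\tfrac{1}{4a}+\tfrac{a}{2}-\tfrac{1}{2aq^2}-\tfrac{a^3q^4}{4}\bigr)$, not its negative), but your final displayed expansion is correct.
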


\begin{rem}
Notice that the coefficients of the $1/z$ term in Lemma~\ref{Lem_g asymptotic} coincide with the leading terms in Proposition~\ref{Prop_Asymptotic coefficient}. 
\end{rem}

\begin{proof}[Proof of Lemma~\ref{Lem_g asymptotic}]
This follows from the definition. 
For instance, for the pre-critical case, we have  
\begin{equation} \label{g' precritical}
2g'(z)= a- \frac{c}{z-a}+\frac{c+1}{z}-\frac{ a(z-R/q) \sqrt{ (z-f(z_+))(z-f(z_-)) }  }{ (z-a)z  }.
\end{equation}
Then it follows that   
\begin{equation}
g'(z)= \frac{1}{z} + \Big(\frac{1+2a^2}{4a} -\frac{1}{2aq^2} -\frac{a^3q^4}{4} \Big) \frac{1}{z^2} +O\Big(\frac{1}{z^3}\Big).
\end{equation}
This gives 
\begin{equation}
 g(z) =\log z  - \Big(\frac{1}{4a}+\frac{a}{2} -\frac{1}{2aq^2} -\frac{a^3q^4}{4} \Big) \frac{1}{z} +O\Big(\frac{1}{z^2}\Big),
\end{equation}
which completes the proof. 
\end{proof}

To compute the energy \eqref{energy pre}, we need to evaluate the $g$-function at the point $a.$

\begin{lem} \label{Lem_g(a) eval}
For the pre-critical case, we have 
\begin{align}
\begin{split} \label{re g(a) eval}
 \re g(a) 
 & = \frac{a^2q^2}{2} -\frac12 + \log \Big( \frac{1+a^2q^2}{2a q^2}\Big) + c \log \Big(  \frac{ 1+a^2q^2 }{1+a^2q^2-2a^2q^4}  \Big). 
\end{split}
\end{align}
\end{lem}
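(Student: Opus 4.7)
The formula $2g(z) = V(z) - \phi(z) + \ell$ from Definition~\ref{def gfunction} exhibits $g$ as a sum of pieces that are individually singular at $z = a$: the $-c\log(z-a)$ term of $V$, and the logarithmic singularity of $\phi$ arising from the simple pole of $y(s)$ at $s=a$. The first step is to verify that these cancel. A direct computation gives $\operatorname{Res}_{s=a} y(s) = (a - R/q)|a - \beta|$; substituting $a - R/q = (a^2q^2-1)/(2aq^2)$ and $|a-\beta| = (1 + a^2q^2 - 2a^2q^4)/(2aq^2)$ and using the relation \eqref{c in terms of qa} between $c$ and $(a,q)$, this residue evaluates to $-c$, so the singularities cancel and $g(a)$ is finite. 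Because $Q(z) = Q(\bar z)$, the motherbody $\mathcal{B}$ is symmetric under $z\mapsto\bar z$, hence $g(a)\in\R$ and $\re g(a) = g(a)$.

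Introducing the analytic completion $\tilde\phi(z) := \phi(z) + c\log(z-a)$, we get $2g(z) = az + (c+1)\log z - \tilde\phi(z) + \ell$. Matching the infinity normalization $g(z) = \log z + o(1)$ of \eqref{g log z asymptotic} against the expansion $\tilde\phi(z) = az + (c-1)\log z + \ell + o(1)$ fixes $\ell$ and yields
\[
2g(a) \;=\; 2\log a \;+\; \int_a^\infty\!\Bigl[\, y(s) + \tfrac{c}{s-a} - a - \tfrac{c-1}{s}\, \Bigr]\,ds,
\]
where the integrand is integrable at both $s=a$ and $s=\infty$, and the integral is manifestly real since the path $[a,\infty)$ does not meet $\mathcal B$. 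The remaining task is to evaluate this improper integral in closed form.

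The main obstacle is this explicit evaluation, which I would carry out by the substitution $s = f(w) = Rw - \kappa/(w-q) - \kappa/q$ mapping $[1/q,\infty)\to [a,\infty)$. The identity $\sqrt{(z-\beta)(z-\overline\beta)}\,f'(w) = [R(w-q)^2 + \kappa]^2/(w-q)^3$, obtained from $(w-z_+)(w-z_-) = (w-q)^2 + \kappa/R$ and $f'(w) = R + \kappa/(w-q)^2$, converts $y(s)\,ds$ into a rational differential in $w$. Combined with the explicit factorizations
\[
f(w) - a \;=\; \frac{(1+a^2q^2)(w-1/q)(w-w_0)}{2aq(w-q)},\qquad f(w) \;=\; \frac{(1+a^2q^2)\,w\,(w-w_1)}{2aq(w-q)},
\]
with $w_0 = 2a^2q^3/(1+a^2q^2)$ and $w_1 = (1-a^2q^2+2a^2q^4)/(q(1+a^2q^2))$, the full integrand decomposes into partial fractions that integrate to rational and logarithmic functions of $w$. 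Collecting the boundary values at $w = 1/q$ and $w = \infty$ and balancing the surviving divergences produces three kinds of contributions: the polynomial part $a^2q^2/2 - 1/2$, coming from $f(1/q) = a$ and $f'(1/q)$; the geometric logarithm $\log(R/q) = \log\frac{1+a^2q^2}{2aq^2}$ from the endpoint $w = 1/q$; and the insertion-dependent logarithm $c\log\frac{1+a^2q^2}{1+a^2q^2-2a^2q^4}$ coming from combining the regularized residue $c/(s-a)$ with the conformal factor $f'(1/q)$. The final simplification uses \eqref{c in terms of qa} together with the identity $1 + a^2q^2 = 2aqR$; most of the technical work lies in the bookkeeping of many partial-fraction pieces and verifying the cancellation of intermediate logarithmic divergences, which collapses the expression into the compact four-term formula stated in the lemma.
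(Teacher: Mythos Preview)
Your approach is correct and genuinely different from the paper's. The paper does not work directly with the Riemann--Hilbert definition $2g = V - \phi + \ell$; instead it invokes the potential-theoretic identity $\re g(a) = \int_S \log|a-w|\,dA(w)$ together with a Schwarz-function representation of that area integral (Eq.~(7.9) of \cite{BBLM15}), which introduces the Robin constant $C(a)$ as an explicit additive term. The integral of $\mathcal S(\zeta)\,d\zeta$ is then pulled back through $f$ to $\int_1^{1/q} f(1/w)f'(w)\,dw$, a rational integrand that is split by partial fractions into four elementary pieces; the final formula emerges only after substituting the separately computed value of $C(a)$ from \eqref{Robin pre}. Your route avoids the Robin constant entirely: by matching $\ell$ against the infinity normalization you absorb it into the regularized integral $\int_a^\infty[y(s)+c/(s-a)-a-(c-1)/s]\,ds$, and your substitution $s=f(w)$ rationalizes $y(s)\,ds$ over $[1/q,\infty)$ rather than $[1,1/q]$. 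The paper's argument is shorter because the Schwarz-function integrand $f(1/w)f'(w)$ is already rational with simple poles, whereas your integrand $y(f(w))f'(w)$ involves the square-root factor that you first have to clear via the identity for $\sqrt{(z-\beta)(z-\bar\beta)}\,f'(w)$; on the other hand, your argument is self-contained and does not rely on the Robin-constant evaluation or on \cite{BBLM15} as a black box.
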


This lemma is shown in the next subsection. 

\begin{rem}
We encounter surprising cancellations when differentiating \eqref{re g(a) eval} with respect to $a$. Namely,  
\begin{equation}
\mathfrak{d}_a \re g(a)=    a q^2,
\end{equation} 
where we have used \eqref{q' q}. We currently lack a good intuition for such a significant simplification.
\end{rem}

\subsection{Robin's constant and energy}

Recall that the equilibrium measure $\mu_Q$ satisfies the variational equality (also known as the Euler–Lagrange equation):
\begin{align}
\label{vari eq_prec}
\int \log \frac{1}{|z-w|}\,d\mu_Q(z)+\frac{Q(w)}{2}=C, \qquad \text{if }w \in S_Q 
\end{align} 
where $C \equiv C(a)$ is called the modified Robin's constant, see \cite[p.27]{ST97}. 
Then we have 
\begin{equation}
\int_{ \mathbb{C}^2 } \log \frac{1}{ |z-w| }\, d\mu_Q(z)\, d\mu_Q(w)= C(a) -\frac12 \int Q(z) \,d\mu_Q(z)
\end{equation}
and 
\begin{align}
\begin{split} \label{energy Robin}
I_Q[\mu_Q] = C(a) + \frac12 \int Q(z) \,d\mu_Q(z). 
\end{split}
\end{align}

By Lemma~\ref{Lem_g(a) eval} and \eqref{energy Robin}, it suffices to show the following lemma to complete the proof of Proposition~\ref{Prop_energy eval}. 
Note that the inequality \eqref{energy inequality} was shown in Remark~\ref{Rem_positivity}.

\begin{lem}[\textbf{Robin's constant and energy}]  \label{Lem_energy}
We have the following. 
\begin{itemize}
    \item For the post-critical case, we have 
    \begin{equation} \label{Robin post}
C(a)= \frac{c+1}{2} -\frac{c+1}{2} \log(c+1)
\end{equation}
and 
\begin{equation}
\frac12 \int Q \,d\mu_Q = c+\frac14 +\frac{c^2 }{2}\log c -\frac{ c(1+c ) }{2} \log(1+c)-ca^2. 
\end{equation}
In particular, we have \eqref{energy post}. 
\smallskip 
  \item For the pre-critical case, we have 
  \begin{equation}
\begin{split} \label{Robin pre}
C(a) 
&=  c \log q -(c+1) \log \Big( \frac{1+a^2q^2}{2a q}\Big) +\frac{3}{4} + \frac{a^2}{4} + \frac{c}{2} - \Big(\frac{a^2}{4}+c+\frac{3}{4}\Big)a^2q^2 + \frac{a^4q^4}{2}
\end{split}
\end{equation}
and 
\begin{equation}
\begin{split} 
\frac12 \int Q \,d\mu_Q 
&= \frac{3}{8} + \frac{a^2}{4} + \frac{c}{2} - \Big( \frac{a^2}{4}+c+\frac12 \Big) a^2q^2+ \frac{3}{8} a^4q^4 -c \re g(a). 
\end{split}
\end{equation}
In particular, we have 
\begin{equation}
\begin{split}   
 I_Q[\mu_Q]
&= c \Big( \log q - \re g(a) \Big) -(c+1) \log \Big( \frac{1+a^2q^2}{2a q}\Big)+ \frac{5}{8}+ \frac{a^2}{4} + \frac{1}{4a^2q^4} - \frac{1}{2q^2} + \frac{a^2q^2}{4} - \frac{a^4q^4}{8}. 
\end{split}
\end{equation}
\end{itemize}
\end{lem}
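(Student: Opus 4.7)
The plan uses the potential-theoretic decomposition~\eqref{energy Robin}, so I will compute $C(a)$ and $\tfrac12\int Q\,d\mu_Q$ separately in each regime. Since $\Delta Q=1$ away from $a$, the equilibrium density is $d\mu_Q=\mathbbm{1}_{S_Q}\,dA$, and both quantities reduce to planar area integrals over the droplet.

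\textbf{Post-critical case.} Because $S_Q=\overline{\mathbb{D}(0,\sqrt{1+c})}\setminus\mathbb{D}(a,\sqrt c)$ is a difference of two disks, the logarithmic potential of $\mu_Q$ splits accordingly, and the classical disk integrals
\[
\int_{\mathbb{D}(z_0,r)}\log|z-w|\,dA(z)=\begin{cases} r^2\log|w-z_0|, & |w-z_0|\ge r,\\ r^2\log r+\tfrac12\bigl(|w-z_0|^2-r^2\bigr), & |w-z_0|\le r,\end{cases}
\]
handle each piece. Evaluating at an arbitrary $w\in S_Q$ (for which $|w|\le\sqrt{1+c}$ and $|w-a|\ge\sqrt c$), and substituting into \eqref{vari eq_prec}, causes the $w$-dependence to cancel and leaves \eqref{Robin post}. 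The same disk decomposition gives closed forms for $\int_{S_Q}|z|^2\,dA$ and $\int_{S_Q}\log|z-a|\,dA$, hence for $\tfrac12\int Q\,d\mu_Q$; summing via \eqref{energy Robin} yields \eqref{energy post}.

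\textbf{Pre-critical case.} Here the droplet is bounded by $f(\mathbb{T})$ and $a\in\C\setminus S_Q$, so no disk decomposition is available and I rely on the conformal map. The key observation is
\[
\tfrac12\int Q\,d\mu_Q=\tfrac12 M_2-c\,\re g(a),\qquad M_2:=\int_{S_Q}|z|^2\,dA,
\]
where the identity $\re g(a)=\int\log|z-a|\,d\mu_Q(z)$ is the exterior representation of $g$, valid because $a$ lies outside $S_Q$. Lemma~\ref{Lem_g(a) eval} already supplies $\re g(a)$; for $M_2$ I would apply Green's theorem, parametrise $\partial S_Q$ by $z=f(w)$, $\bar z=f(1/w)$ on $|w|=1$, and evaluate the resulting contour integral by residues inside the unit disk (poles at $w=0$ and $w=q$). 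To obtain $C(a)$ I would use the Euler--Lagrange identity $\re g(w)=Q(w)/2-C(a)$ on $\partial S_Q\setminus\mathcal B$, combined with the explicit form $g=(V-\phi+\ell)/2$ from Definition~\ref{def gfunction}; the normalisation constant $\ell$ is read off the asymptotic expansion $\phi(z)=az-\log z+\ell+O(1/z)$ as $z\to\infty$, itself produced by a Laurent expansion of $y$ in~\eqref{def of y and phi}. Converting the result to $(a,q)$ by means of~\eqref{f conformal}, \eqref{def of beta} and \eqref{q equation} yields \eqref{Robin pre}, and summation with $\tfrac12\int Q\,d\mu_Q$ gives the stated energy.

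\textbf{Main obstacle.} The final formulas \eqref{Robin pre}, the pre-critical expression for $\tfrac12\int Q\,d\mu_Q$, and the resulting \eqref{energy pre} are compact in $(a,q)$, whereas the natural intermediate computations produce cumbersome rational functions in the auxiliary parameters $R$, $\kappa$, $\ell$, $\beta$. The substantive work is algebraic: systematically invoking \eqref{q equation} and the identity $f(1/q)=a$ to collapse these rational expressions into the stated $(a,q)$-form, while keeping track of the branches of $\phi$. Managing these simplifications cleanly, rather than the analytic steps themselves, is where essentially all the effort lies.
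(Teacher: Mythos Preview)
Your approach is essentially the same as the paper's. In both regimes the paper does exactly what you outline: for the post-critical case it uses the disk log-potential identity (your displayed formula) to compute $\int_{S}\log|w-a|\,dA$ and $\int_{S}|w|^2\,dA$ over $\overline{\mathbb{D}(0,\sqrt{1+c})}\setminus\mathbb{D}(a,\sqrt c)$; for the pre-critical case it converts $\int_{S}|z|^2\,dA$ to a contour integral on $\partial\mathbb{D}$ via $z=f(w)$, $\bar z=f(1/w)$ and picks up residues at $w=0$ and $w=q$, and it uses $\int_{S}\log|z-a|\,dA=\re g(a)$ (valid because $a\notin S_Q$) to leave the formula in terms of $\re g(a)$, exactly as the statement of the lemma does.

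The only difference is the Robin constant in the pre-critical case: the paper simply cites \cite[Lemma~7.2]{BBLM15}, whereas you propose to re-derive $C(a)$ from the Euler--Lagrange identity $\re g(w)=Q(w)/2-C(a)$ on $\partial S_Q$ combined with $g=(V-\phi+\ell)/2$ and the normalisation of $\ell$. That is a legitimate alternative and is in fact how \cite{BBLM15} obtains it.

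One caution: you remark that ``Lemma~\ref{Lem_g(a) eval} already supplies $\re g(a)$.'' Be careful here. In the paper's logical order, the proof of Lemma~\ref{Lem_g(a) eval} \emph{uses} the formula \eqref{Robin pre} for $C(a)$, so invoking Lemma~\ref{Lem_g(a) eval} inside the proof of Lemma~\ref{Lem_energy} would be circular. Fortunately you do not actually need it: the statement of Lemma~\ref{Lem_energy} deliberately leaves $\re g(a)$ unevaluated in the pre-critical formulas, so the lemma is established without knowing its explicit value. The substitution of Lemma~\ref{Lem_g(a) eval} happens only afterwards, when assembling Proposition~\ref{Prop_energy eval}.
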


\begin{rem}[Point charge insertion at infinity]
It is easy to check from \eqref{q equation} that 
\begin{equation} \label{q asymp}
q= \frac{1}{a} \Big( 1- \frac{c}{a^2} +O( \frac{1}{a^4} ) \Big), \qquad (a \to \infty).
\end{equation}
This in turn gives that 
\begin{equation}
R = 1+O(\frac{1}{a^4}), \qquad \kappa= \frac{c}{a^2} +O(\frac{1}{a^4}), \qquad  f(z)=z+O(\frac{1}{a}), \qquad (a \to \infty),
\end{equation}
where $R,\kappa$ and $f$ are given by \eqref{f conformal}. 
Therefore the droplet tends to the unit disc when $a \to \infty$, cf. Figure~\ref{Fig_transition}. 
Let us also mention that by \eqref{q asymp}, we have 
\begin{equation} \label{energy infintiy}
\lim_{a \to \infty} \Big( I_Q[\mu_Q] + 2c \log a \Big) = \frac34, \qquad 
\lim_{ a \to \infty }  \mathcal{F}(a,c)=0.
\end{equation}
From the viewpoint of the free energy expansions \eqref{Z expansion}, these asymptotic behaviours are consistent with \eqref{Z Gin asy} and \eqref{ZN hat ZN rel}. 
Furthermore, by \eqref{g log z asymptotic}  and \eqref{q asymp}, we also have  
\begin{equation}
\lim_{a \to \infty} \Big( C(a)+c \log a \Big)= \frac12, \qquad \lim_{a \to \infty} \Big( \frac12 \int Q(z) \,d\mu_Q(z)+c \log a \Big)= \frac14. 
\end{equation}
Note also that by combining Lemma~\ref{Lem_deri of N2 1 terms} and \eqref{q asymp}, one can check that as $a \to \infty,$
\begin{equation}
 \partial_a I_Q[\mu_Q]= -\frac{2c}{a}+O(a^{-3}), \qquad   \partial_a \mathcal{F}(a,c) =  -\frac{2c^2}{a^7}+O(a^{-9}) .
\end{equation} 
\end{rem}
 
We now prove Lemma~\ref{Lem_energy}.

\begin{proof}[Proof of Lemma~\ref{Lem_energy}]
The modified Robin constant was computed in \cite[Lemma 7.2]{BBLM15}. Let us mention that the Robin constant $\ell_{\rm 2D}$ in \cite{BBLM15} corresponds to 
$\ell_{\rm 2D}=-2 \, C(a).$

It remains to compute $\int Q \,d\mu_Q.$
We first show the post-critical case when $c< c_{\rm cri}$. 
By \cite[Lemma 2.4]{By23a}, we have that for $R>0$ and $p \in \mathbb{C}$, 
\begin{equation}
\int_{ \mathbb{D}(p,R) } \log|z-w| \,dA(z)=
\begin{cases}
\displaystyle R^2 \log|w-p| &\text{if }w \notin \mathbb{D}(p,R),
\smallskip 
\\
\displaystyle R^2 \log R-\frac{R^2}{2}+\frac{|w-p|^2}{2} &\text{otherwise}.
\end{cases}
\end{equation}
Using this, we have 
\begin{align*}
-2 c \int_S \log|w-a| \,dA(w)
 &=-2 c  \Big(  \int_{ \mathbb{D}(0,\sqrt{1+c}) } \log|w-a|\,dA(w) -  \int_{ \mathbb{D}(a,\sqrt{c}) } \log|w-a|\,dA(w) \Big) 
 \\
 &=  c^2 (\log c-1)  -2 c   \int_{ \mathbb{D}(0,\sqrt{1+c}) } \log|w-a|\,dA(w)   
 \\
 &= c +c^2 \log c - c(1+c ) \log(1+c) -c a^2.
\end{align*}
On the other hand, note that 
\begin{align*}
\begin{split}
\int_S |w|^2 \,dA(w) & = \int_{ \mathbb{D}(0,\sqrt{1+c}) } |w|^2 \,dA(w) - \int_{ \mathbb{D}(a,\sqrt{c}) } |w|^2 \,dA(w)
= \frac12 (c+1)^2 - \int_{ \mathbb{D}(a,\sqrt{c}) } |w|^2 \,dA(w). 
\end{split}
\end{align*}
Here by Green's formula, and using the map $z \mapsto w=\sqrt{c}z+a$, 
\begin{align*}
\begin{split}
\int_{ \mathbb{D}(a,\sqrt{c}) } |w|^2 \,dA(w) &= \frac{1}{2\pi i}  \int_{ |w-a|=\sqrt{c} } \frac{w \bar{w}^2 }{2} \,dw = \frac{1}{2\pi i} \int_{ \partial \mathbb{D} } \frac{ \sqrt{c} }{2} (\sqrt{c}z+a)( \sqrt{c}\bar{z}+a )^2  \,dz
\\
&= \frac{1}{2\pi i} \int_{ \partial \mathbb{D} } \frac{ \sqrt{c} }{2} (\sqrt{c}z+a)( \sqrt{c}/z+a )^2  \,dz  = \frac12 c^2+ca^2. 
\end{split}   
\end{align*}
This gives 
\begin{equation}
\int_S |w|^2 \,dA(w) = c+\frac12 -c a^2. 
\end{equation} 
We have shown the post-critical case.

\medskip 
Next, we consider the pre-critical case when $c > c_{\rm cri}$.
By Green's formula,
\begin{align*}
\int_S |z|^2 \,dA(z) & = \frac{1}{2\pi i} \int_{\partial S} \frac{ z \bar{z}^2   }{2}  \,dz 
= \frac{1}{2\pi i} \int_{\partial \mathbb{D}}  \frac{ f(w)f(1/w)^2 }{2} \,f'(w)\,dw, 
\end{align*}
where $f$ is given by \eqref{f conformal}.
Note that $0< q<1$ and
\begin{equation}
f(z)  = \frac{R\,z(z-z_0) }{z-q}, \qquad f(1/z) = \frac{R(1-z_0z)}{ z (1-qz) }, \qquad z_0= q+\frac{\kappa}{qR}. 
\end{equation}
Then by the residue calculus, 
\begin{align}
\begin{split}
\int_S |z|^2 \,dA(z) &=  \underset{w=q}{\textup{Res}} \Big[ \frac{ f(w)f(1/w)^2 }{2} \,f'(w) \Big]  +   \underset{w=0}{\textup{Res}} \Big[ \frac{ f(w)f(1/w)^2 }{2} \,f'(w) \Big] 
\\
& = \frac{1}{4} \Big(1 + \frac{1}{a^2 q^4} - \frac{2}{q^2} - a^4 q^4 + a^2 (1 + 2 q^2)\Big).
\end{split}
\end{align}
Here we have used \begin{equation} \label{1/q^4 trans}
    \frac{1}{q^4}= - 2a^4 q^2 + (a^2+4c+2) a^2 . 
 \end{equation}

Note that by \cite[Lemma 3.7]{BBLM15}, we have  
\begin{equation}
\int_S \log |z-w|\,dA(w) = \re g(z), \qquad z \not \in S. 
\end{equation}
This in particular gives
\begin{equation}
 \int_S \log|z-a| \,dA(z) =  \re g(a), 
\end{equation}
which completes the proof. 
\end{proof}

We now prove Lemma~\ref{Lem_g(a) eval}, which completes the proof of Proposition~\ref{Prop_energy eval}.

\begin{proof}[Proof of Lemma~\ref{Lem_g(a) eval}]
By \cite[Eq.(7.9)]{BBLM15}, it follows that 
\begin{align}
\begin{split}
\int_S \log |z-w| \,dA(w) = \re \int_{z_0}^z \mathcal{S}(\zeta)\,d\zeta - c \log |z-a| + \frac{ |z_0|^2 }{2} -C(a),
\end{split}
\end{align}
where $\mathcal{S}$ is the Schwarz function associated with the droplet, see \cite[Definition 2.1]{BBLM15}.
Choose $z_0=f(1).$ 
Then since 
\begin{equation}
\mathcal{S}(f(z))=f(1/z), 
\end{equation}
we have 
\begin{align}
\int_{z_0}^z \mathcal{S}(\zeta)\,d\zeta &= \int_{ 1 }^{ u } f(1/w) f'(w) \,dw, \qquad  f(u)=z. 
\end{align}
Note that 
\begin{align*}
f(1/z)f'(z) & = \frac{(1+a^2q^2)(1-a^2q^2+2a^2q^4)}{4a^2q^4} \frac{1}{z} + \frac{1-q^2-a^2q^2+a^2q^4}{2q} \frac{1}{(z-q)^2} + \frac{c}{z-1/q}  - \frac{c}{z-q} 
\\
&= \frac{c+1}{z} + \frac{1-q^2-a^2q^2+a^2q^4}{2q} \frac{1}{(z-q)^2} + \frac{c}{z-1/q}  - \frac{c}{z-q}. 
\end{align*}
On the other hand, 
\begin{align*}
&\quad \log|f(1)-a|-\log |z-a| =\log|f(1)-f(1/q)| -\log |f(u)-f(1/q)|
\\
&= \int_1^u \frac{1}{z-q} - \frac{1}{z-1/q} - \frac{1+a^2q^2}{(1+a^2q^2)z-2a^2q^3}  \,dz . 
\end{align*}
Combining the above, we have
\begin{align*}
&\quad \re \int_{z_0}^z \mathcal{S}(\zeta)\,d\zeta - c \log |z-a| + c \log|f(1)-a|
\\
&= \int_1^u \frac{c+1}{z} + \frac{1-q^2-a^2q^2+a^2q^4}{2q} \frac{1}{(z-q)^2} - \frac{c}{z-2a^2q^3/(1+a^2q^2)}  \,dz 
\end{align*}
Note that 
\begin{align*}
\int_1^{1/q}  \frac{c+1}{z}   \,dz = -(c+1) \log q, \qquad 
 \int_1^{1/q}  \frac{1-q^2-a^2q^2+a^2q^4}{2q} \frac{1}{(z-q)^2}   \,dz = \frac{1}{2q}-\frac{a^2q}{2},  
\end{align*}
and 
\begin{align*}
&\quad - \int_1^{1/q}   \frac{c}{z-2a^2q^3/(1+a^2q^2)}  \,dz - c \log |f(1)-a|
\\
&= -c \log \Big( \frac{1+a^2q^2-2a^2q^4}{ q(1+a^2q^2) } \Big) +c \log  \Big( \frac{1+a^2q^2-2a^2q^3}{ 1+a^2q^2 } \Big) -c \log  \Big( \frac{1+a^2q^2-2a^2q^3}{ 2aq^2 } \Big)
\\
&= -c \log \Big( \frac{1+a^2q^2-2a^2q^4}{ q(1+a^2q^2) } \Big) +c \log  \Big( \frac{2aq^2}{ 1+a^2q^2 } \Big) = c \log \Big(  \frac{ 2aq^3 }{1+a^2q^2-2a^2q^4}  \Big) . 
\end{align*}
Therefore it follows that 
\begin{align*}
&\quad \lim_{z \to a} \Big( \re \int_{z_0}^z S(\zeta)\,d\zeta - c \log |z-a| \Big)
\\
&= \frac{1}{2q}-\frac{a^2q}{2} -\frac{(1+a^2q^2)(1-a^2q^2+2a^2q^4)}{4a^2q^4} \log q  + c \log \Big(  \frac{ 2aq^3 }{1+a^2q^2-2a^2q^4}  \Big) .
\end{align*}
Then we have shown that
\begin{align*}
\re g(a)& =\frac{a^2}{8} + \frac{1}{8a^2q^4} - \frac{1}{4q^2} + \frac{a^2q^2}{2} -C(a)  -(c+1) \log q  + c \log \Big(  \frac{ 2aq^3 }{1+a^2q^2-2a^2q^4}  \Big). 
\end{align*}
Now the lemma follows from \eqref{Robin pre}. 
\end{proof}


\bigskip

\section{Riemann-Hilbert analysis and Fine asymptotic behaviour} \label{Section_RH analysis}

In this section, we show Theorem~\ref{Thm_OP fine asymp}.  
The leading-order term of $p_N$ was derived via the corresponding Riemann-Hilbert problem in \cite{BBLM15}. However, identifying a meaningful subleading term for $p_N$ is also significant and requires careful analysis. Here, we refine this analysis by employing the partial Schlesinger transform to derive the asymptotic behavior of $p_N$ with more precise error estimates.

Recall that $\Gamma$ be the contour in the Riemann-Hilbert problem \eqref{RHP Y}. 
We deform the contour $\Gamma$ so that it matches with the contour ${\mathcal{B}}$ in Definition~\ref{Def_motherbody}. 
We choose the contour $\Gamma_+$ and $\Gamma_-$ following the steepest descent paths from $\beta$ such that $\re \phi(z)<0$ on those contours. Here $\phi$ is given in Definition~\ref{def gfunction}. 
More precisely, we choose the one inside $\mbox{int}({\mathcal{B}})$ to be $\Gamma_+$, while the one inside $\ext({\mathcal{B}})$ to be $\Gamma_-$. The domains $\Omega_\pm$ are defined by the open sets enclosed by ${\mathcal{B}}$ and $\Gamma_\pm$ respectively.

\medskip 

Let us define the matrix function $A(z)$ by 
\begin{equation}\label{A transform}
    A(z):=\begin{cases}
        e^{-\frac{N\ell}{2}\sigma_3}Y(z)e^{-N(g(z)-\ell/2)\sigma_3},\quad &z\in\C\setminus(\Omega_+\cup\Omega_-),
        \smallskip 
        \\
        e^{-\frac{N\ell}{2}\sigma_3}Y(z)\begin{pmatrix}
1&0\\-1/\omega_{n,N}(z)&1
\end{pmatrix}e^{-N(g(z)-\ell/2)\sigma_3},\quad &z\in \Omega_+,
   \smallskip 
\\
  e^{-\frac{N\ell}{2}\sigma_3}Y(z)\begin{pmatrix}
1&0\\1/\omega_{n,N}(z)&1
\end{pmatrix}e^{-N(g(z)-\ell/2)\sigma_3},\quad &z\in \Omega_-,
    \end{cases}
\end{equation} 
where $Y$ and $\omega_{n,N}$ are given in Definition~\ref{Def_RHP Y}, while $\ell$ and $g(z)$ are given in Definition~\ref{def gfunction}. 
Here, 
\begin{equation}
\sigma_3= \begin{pmatrix}
1 & 0
\\
0 & -1
\end{pmatrix}
\end{equation}
is the third Pauli matrix. 
Then by \eqref{RHP Y}, one can check that $A$ satisfies the following Riemann-Hilbert problem:
\begin{equation}  \label{RHP for A pre}
\begin{cases}
 A_+(z)=  A_-(z)\begin{pmatrix}
1&0\\e^{N\phi(z)}&1
\end{pmatrix},&\quad z\in \Gamma_\pm,
\smallskip 
\\
A_+(z)=  A_-(z)\begin{pmatrix}
0&1\\-1&0
\end{pmatrix},&\quad z\in {\mathcal{B}},
\smallskip 
\\
A_+(z)=  A_-(z)\begin{pmatrix}
1&e^{-N\phi(z)}\\0&1
\end{pmatrix},&\quad z\in \Gamma\setminus{\mathcal{B}},
\smallskip 
\\
A(z)= I+{ O}(z^{-1}) ,&\quad z\to\infty,
\smallskip 
\\
A(z)\mbox{\quad is holomorphic},&\quad \mbox{otherwise}.
\end{cases}
\end{equation}
Note that $\re(\phi(z))$ is negative on $\Gamma_{\pm}$, while it is positive on $\Gamma\setminus {\mathcal{B}}$. 
Therefore, when $z$ is away from $ {\mathcal{B}}$, the jump matrices are close to identity exponentially.

Next, we define the global parametrix, cf. \cite[Eqs.(4.14), (5.1)]{BBLM15}.

\begin{defi}[Global parametrix] \label{Def_Global parametrix}
The global parametrix $\Phi$ is defined as follows.
\begin{itemize}
    \item For the post-critical case,
    \begin{equation} \label{def of Phi post}
\Phi(z):= \begin{cases}
I, &z\in \ext(\mathcal B),\\
    \begin{pmatrix}
    0&1 
    \\
    -1 & 0
\end{pmatrix}, &z\in \mbox{int} (\mathcal B).
\end{cases}
\end{equation}
\item For the pre-critical case, \begin{equation} \label{def of Phi}
\Phi(z):= \sqrt{ R F'(z)}\begin{pmatrix}
    1&\dfrac{\sqrt{\kappa R}}{RF(z)-  Rq}
    \smallskip 
    \\
    -\dfrac{\sqrt{\kappa R}}{RF(z)-  Rq} & 1
\end{pmatrix}
\end{equation}
where $F$ is given by \eqref{def of F}. 
\end{itemize}
\end{defi}

By construction, $\Phi(z)$ satisfies the following Riemann-Hilbert problem:
 \begin{equation}\label{global RHP}
\begin{cases}
\Phi(z)\mbox{\quad is holomorphic},&\quad z\in \C\setminus {\mathcal{B}},\\
\Phi_+(z)=  \Phi_-(z)\begin{pmatrix}
0&1\\-1&0
\end{pmatrix},&\quad z\in {\mathcal{B}},\\
\Phi(z)= I+{ O}(z^{-1}) ,&\quad z\to\infty.\end{cases}
 \end{equation}
For the remaining Riemann-Hilbert analysis, we should treat post- and pre-critical cases separately. 
We shall first discuss the pre-critical case, as it is the more difficult one.

\subsection{Pre-critical case}

We denote $D_w$ as a small disk centered at $w \in \C$ with a finite radius. Let us first define the local coordinates that will be used frequently.

\begin{defi}[Local coordinates]
Recall that $\phi$ is given by \eqref{def of y and phi}. 
The local coordinate $\zeta$ is defined by 
\begin{equation}
 \frac43 \zeta(z)^{3/2} := N(\phi(z)-\phi(\overline{\beta})), \quad z\in D_{\overline \beta},
\end{equation}
cf. \cite[Eq.(4.15)]{BBLM15}.  
Here $\zeta$ maps $\Gamma_{b\overline\beta}$ into $\R_+$, maps ${\mathcal B}$ into $\R_-$, maps $\Gamma_{+}$ into ray $\gamma_{\overline\beta}^+:=[0,e^{2\pi i/3\infty})$, and maps $\Gamma_{-}$ into ray $\gamma_{\overline\beta}^-:=[0,e^{-2\pi i/3\infty})$.
Similarly, we define the local coordinate $\xi$ by 
\begin{equation}
    \frac{4}{3}\xi(z)^{3/2}:=N\phi(z),\quad z\in D_\beta,
\end{equation} such that $\xi$ maps $\Gamma_{b\beta}$ into $\R_+$, maps ${\mathcal B}$ into $\R_-$, maps $\Gamma_{+}$ into ray $\gamma_{\beta}^+:=[0,e^{-2\pi i/3\infty})$, and maps $\Gamma_{-}$ into ray $\gamma_{\beta}^-:=[0,e^{2\pi i/3\infty})$.
\end{defi}

\begin{lem}[\textbf{Asymptotic of the local coordinates}] \label{Lem_local coord asymp}
We have 
\begin{align}
 \label{zeta(z) beta bar expansion}
\frac{ \zeta(z) }{ N^{2/3} }&=\gamma_{11}(z-\overline\beta)+\frac{\gamma_{12}}{2}(z-\overline\beta)^2+O(z-\overline\beta)^3, \qquad z \to \overline{\beta},
\\
 \label{xi(z) beta expansion}
\frac{ \xi(z) }{ N^{2/3} }&= \overline{\gamma_{11}}(z-\beta)+\frac{\overline{\gamma_{12}}}{2} (z-\beta)^2+O(z-\beta)^3, \qquad z \to \beta,
\end{align}
where 
\begin{align} \label{gamma 11 asym}
\gamma_{11} &=   \frac{1}{2^{2/3}}  \, \Big( \frac{ a(\overline{\beta}-R/q) \sqrt{ \overline{\beta}-\beta }  }{ (\overline{\beta}-a) \overline{\beta}  } \Big)^{2/3},
\\  \label{gamma 12 asym}
\gamma_{12}  &=  \frac45 \Big(  \frac{1}{\overline{\beta}-R/q}+\frac{1}{2(\overline{\beta}-\beta)}-\frac{1}{\overline{\beta}-a}-\frac{1}{ \overline{\beta} }  \Big) \gamma_{11}.  
\end{align}
\end{lem}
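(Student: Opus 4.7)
The plan is to extract the Taylor expansion of $\zeta$ around $\overline\beta$ from the defining relation $\frac{4}{3}\zeta(z)^{3/2}=N(\phi(z)-\phi(\overline\beta))$ by first computing $\phi(z)-\phi(\overline\beta)$ as a Puiseux series in $(z-\overline\beta)^{1/2}$, and then inverting. Since $\phi'(z)=y(z)$ and $y$ has a square-root branch point at $\overline\beta$, the integrand factorises cleanly, making the expansion straightforward.

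First I would isolate the square root singularity of $y$ at $\overline\beta$ by writing
\begin{equation}
y(z)= H(z)\,(z-\overline\beta)^{1/2}, \qquad H(z):=\frac{a(z-R/q)\sqrt{z-\beta}}{(z-a)\,z},
\end{equation}
where the branch of $\sqrt{z-\beta}$ is chosen so that the orientation conventions of Definition~\ref{def gfunction} are respected. Since $\overline\beta\notin\{0,a,R/q,\beta\}$, the function $H$ is holomorphic near $\overline\beta$ with $H(\overline\beta)\neq 0$. Taylor-expanding $H(s)=H(\overline\beta)+H'(\overline\beta)(s-\overline\beta)+O((s-\overline\beta)^2)$ and integrating termwise over $s\in[\overline\beta,z]$ yields
\begin{equation}
\phi(z)-\phi(\overline\beta)=\tfrac{2}{3}H(\overline\beta)(z-\overline\beta)^{3/2}+\tfrac{2}{5}H'(\overline\beta)(z-\overline\beta)^{5/2}+O\bigl((z-\overline\beta)^{7/2}\bigr).
\end{equation}

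Next I would solve $\frac{4}{3}\zeta^{3/2}=N(\phi(z)-\phi(\overline\beta))$ for $\zeta$. Rearranging and taking the $2/3$-power gives
\begin{equation}
\zeta(z)= \Bigl(\tfrac{N}{2}H(\overline\beta)\Bigr)^{2/3}(z-\overline\beta)\Bigl[1+\tfrac{3H'(\overline\beta)}{5H(\overline\beta)}(z-\overline\beta)+O\bigl((z-\overline\beta)^2\bigr)\Bigr]^{2/3},
\end{equation}
and a further binomial expansion of the bracket produces \eqref{zeta(z) beta bar expansion} with
\begin{equation}
\gamma_{11}=\Bigl(\tfrac{1}{2}H(\overline\beta)\Bigr)^{2/3}, \qquad \gamma_{12}=\tfrac{4}{5}\,\tfrac{H'(\overline\beta)}{H(\overline\beta)}\,\gamma_{11}.
\end{equation}
The first matches \eqref{gamma 11 asym} immediately upon substituting the definition of $H$. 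The formula for $\gamma_{12}$ follows from logarithmic differentiation
\begin{equation}
\frac{H'(z)}{H(z)}=\frac{1}{z-R/q}+\frac{1}{2(z-\beta)}-\frac{1}{z-a}-\frac{1}{z},
\end{equation}
evaluated at $z=\overline\beta$, which is exactly \eqref{gamma 12 asym}.

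The expansion \eqref{xi(z) beta expansion} for $\xi$ near $\beta$ follows by the same argument, or alternatively by noting that $\phi$ restricted to the upper/lower half-plane is interchanged under complex conjugation (since $y(\bar z)=\overline{y(z)}$), so the coefficients at $\beta=\overline{\overline\beta}$ are the complex conjugates of those at $\overline\beta$. The only genuine subtlety, rather than a real obstacle, is bookkeeping of the fractional-power branches so that the chosen $\zeta,\xi$ send $\Gamma_\pm$ and $\mathcal{B}$ to the prescribed rays; this is handled by selecting branches of $\sqrt{z-\beta}$ and $(\cdot)^{2/3}$ consistent with Definition~\ref{def gfunction}, which is already implicit in the formula for $\gamma_{11}$. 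All the remaining work is routine Taylor expansion and the substitution of the explicit values.
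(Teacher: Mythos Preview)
Your proposal is correct and follows essentially the same approach as the paper: both factor $y(z)=h(z)\sqrt{z-\overline\beta}$ with the identical $h$ (your $H$), expand $\phi(z)-\phi(\overline\beta)$ as a Puiseux series via the coefficients $h(\overline\beta),h'(\overline\beta)$, and finish with the same logarithmic derivative $h'/h$ to obtain $\gamma_{12}$. The only cosmetic difference is that the paper passes through $\zeta'$ and $\zeta''/\zeta'$ whereas you integrate first and take the $2/3$ power directly, which is marginally more streamlined but amounts to the same computation.
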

\begin{proof} 
By definition, we have 
\begin{align*}
\zeta'(z) &= \frac23 \Big(\frac{3N}4 \Big)^{2/3} \, \Big(    \int_{ \overline{\beta} }^z y(s)\,ds\Big)^{-1/3} \, y(z),   
\qquad 
\frac{\zeta''(z)}{ \zeta'(z) } = -\frac13 \Big(    \int_{ \overline{\beta} }^z y(s)\,ds\Big)^{-1} \, y(z) +   \frac{ y'(z) }{ y(z) } . 
\end{align*}
Let us write 
\begin{align*}
y(z) = h(z)  \sqrt{z-\overline{\beta}}, \qquad h(z):=\frac{ a(z-R/q) \sqrt{ z-\beta }  }{ (z-a)z  }. 
\end{align*}
Then we have 
\begin{align*}
y(z) 
& =  h( \overline{\beta} ) (z-\overline{\beta})^{1/2}  \Big( 1+ \frac{ h'( \overline{\beta} )  }{ h( \overline{\beta} )   } (z-\overline{\beta}) +O((z-\overline{\beta})^2) \Big),
\\
y'(z) 
& = \frac12 h( \overline{\beta} ) (z-\overline{\beta})^{-1/2} \Big( 1+ 3\frac{ h'( \overline{\beta} )  }{ h( \overline{\beta} )   } (z-\overline{\beta}) +O((z-\overline{\beta})^2) \Big). 
\end{align*}
Using these, direct computations give
\begin{align*}
\Big( \int_{ \overline{\beta} }^z y(s)\,ds \Big)^{-1} &= \frac{3}{2} \frac{1}{ h( \overline{\beta} ) } (z-\overline{\beta})^{-3/2} \Big( 1- \frac{3}{5} \frac{ h'( \overline{\beta} ) }{ h( \overline{\beta} ) } (z-\overline{\beta})  +O((z-\overline{\beta})^2) \Big),
\\
\frac{y'(z)}{y(z)} &= \frac12 (z- \overline{\beta} )^{-1} + \frac{ h'( \overline{\beta} )  }{ h( \overline{\beta} )   }  +O( z-\overline{\beta} ),
\end{align*}
which leads to 
$$
 -\frac13 \Big(    \int_{ \overline{\beta} }^z y(s)\,ds\Big)^{-1} \, y(z)  =   -\frac12 (z- \overline{\beta} )^{-1} -\frac15 \frac{ h'( \overline{\beta} )  }{ h( \overline{\beta} )   }  +O( z-\overline{\beta} ). 
$$
Then the lemma follows from 
 \begin{equation*}
\frac{h'(z)}{h(z)}= \frac{1}{z-R/q}+\frac{1}{2(z-\beta)}-\frac{1}{z-a}-\frac{1}{z}. 
\end{equation*}
\end{proof} 

We now define the local parametrices. 

\begin{defi}[Riemann-Hilbert problems for local parametrices] \label{Def_RHP}
We consider the Riemann-Hilbert problems.
\begin{itemize}
    \item 
Inside $D_{\overline\beta}$, we define the following Riemann-Hilbert problem:
\begin{equation}
    \begin{cases}
    [P_{\overline\beta}(\zeta)]_+=\begin{pmatrix}
0&-1\\1&0
\end{pmatrix}[P_{\overline\beta}(\zeta)]_-\begin{pmatrix}
0&1\\-1&0
\end{pmatrix},\quad &\zeta\in \R_-,
\smallskip 
\\
        [P_{\overline\beta}(\zeta)]_+=[P_{\overline\beta}(\zeta)]_-\begin{pmatrix}
1&0\\e^{\frac{4}{3}\zeta^{3/2}}&1
\end{pmatrix},\quad &z\in \gamma_{\overline \beta}^{\pm},
\smallskip 
\\
[P_{\overline\beta}(\zeta)]_+=[P_{\overline\beta}(\zeta)]_-\begin{pmatrix}
1&e^{-\frac{4}{3}\zeta^{3/2}}\\0&1
\end{pmatrix},\quad &\zeta\in \R_+,
\smallskip 
\\
P_{\overline\beta}(\zeta)=I+O(\zeta^{-3/2}),\quad & \zeta\to\infty,
\smallskip 
\\
P_{\overline\beta}(\zeta)=O(\zeta^{-1/4}),\quad & \zeta\to 0.
\end{cases}
\end{equation} 
Then $\Phi(z)P_{\overline\beta}(z)$ satisfies the same jump conditions of $A(z)$.
\smallskip 
\item Inside $D_{\beta}$, we define the following Riemann-Hilbert problem:
\begin{equation}
    \begin{cases}
    [\widehat{P}_\beta(\xi)]_+=\begin{pmatrix}
0&1\\-1&0
\end{pmatrix}[\widehat{P}_\beta(\xi)]_-\begin{pmatrix}
0&-1\\1&0
\end{pmatrix},\quad &\xi\in \R_-,
\smallskip 
\\
        [\widehat{P}_\beta(\xi)]_+=[\widehat{P}_\beta(\xi)]_-\begin{pmatrix}
1&0\\-e^{\frac{4}{3}\xi^{3/2}}&1
\end{pmatrix},\quad &\xi\in \gamma_{\beta}^{\pm},
\smallskip 
\\
[\widehat{P}_\beta(\xi)]_+=[\widehat{P}_\beta(\xi)]_-\begin{pmatrix}
1&-e^{-\frac{4}{3}\xi^{3/2}}\\0&1
\end{pmatrix},\quad &\xi\in \R_+,
\smallskip 
\\
\widehat{P}_\beta(\xi)=I+O(\xi^{-3/2}),\quad & \xi\to\infty,
\smallskip 
\\
\widehat{P}_\beta(\xi)=O(\xi^{-1/4}),\quad & \xi\to 0.\end{cases}
\end{equation} 
Then $\Phi(z)\widehat{P}_{\beta}(z)$ satisfies the same jump conditions of $A(z)$. 
\end{itemize}

The jump contours and the corresponding orientations can be seen from Figure \ref{Fig_jump contours}.
\end{defi}

\begin{figure}[t]  
\centering
\begin{tikzpicture}[scale=0.7]

\draw[very thick,black, postaction={decorate, decoration={markings, mark = at position 0.5 with {\arrow{>}}}} ] (-4,0) -- (0,0);
\draw[very thick,black,postaction={decorate, decoration={markings, mark = at position 0.5 with {\arrow{>}}}} ] (0,0) -- (4,0);

\draw[very thick,black,postaction={decorate, decoration={markings, mark = at position 0.5 with {\arrow{>}}}} ] (-2,3.464) -- (0,0);
\draw[very thick,black,postaction={decorate, decoration={markings, mark = at position 0.5 with {\arrow{>}}}} ] (-2,-3.464) -- (0,0);

\foreach \Point/\PointLabel in {(0,-0.1)/0, (-1,2.6)/\gamma_{\overline \beta}^+,(-1,-2)/\gamma_{\overline \beta}^-}
\draw[fill=black]  
 \Point node[below right] {$\PointLabel$};
 \end{tikzpicture}
 \qquad  \qquad 
 \begin{tikzpicture}[scale=0.7]

\draw[very thick,black, postaction={decorate, decoration={markings, mark = at position 0.5 with {\arrow{>}}}} ] (-4,0) -- (0,0);
\draw[very thick,black,postaction={decorate, decoration={markings, mark = at position 0.5 with {\arrow{>}}}} ] (0,0) -- (4,0);

\draw[very thick,black,postaction={decorate, decoration={markings, mark = at position 0.5 with {\arrow{>}}}} ] (-2,3.464) -- (0,0);
\draw[very thick,black,postaction={decorate, decoration={markings, mark = at position 0.5 with {\arrow{>}}}} ] (-2,-3.464) -- (0,0);

\foreach \Point/\PointLabel in {(0,-0.1)/0, (-1,2.6)/\gamma_{\beta}^-,(-1,-2)/\gamma_{ \beta}^+}
\draw[fill=black]  
 \Point node[below right] {$\PointLabel$};
 \end{tikzpicture}
 \caption{The jump contours of $P_{\overline\beta}(\zeta)$ in $D_{\overline\beta}$ and those of $\widehat{P}_{\beta}(\xi)$ in $D_\beta$. }\label{Fig_jump contours}
 \end{figure}

The solution to the above Riemann-Hilbert problems can be constructed using the Airy parametrices, which we now recall.

\begin{defi}[Airy parametrices] \label{Def_Airy parametrices}
Let 
\begin{equation}
y_j(\zeta)= \omega^j \Ai(\omega^j \zeta), \qquad \omega=e^{2\pi i/3}, \quad (j=0,1,2). 
\end{equation}
Then we define\footnote{Here, there is a minor typo in \cite[Eq.(4.23)]{BBLM15}, where $5\pi/3$ should be replaced with $4\pi/3$.}
\begin{equation} \label{def of Airy parametrix}
    {\mathcal A}(\zeta):=\sqrt{2\pi}e^{-\frac{i\pi}{4}}\begin{cases}
        \begin{pmatrix}
            y_0(\zeta)&-y_2(\zeta)
            \\
            y_0'(\zeta)&-y_2'(\zeta)
        \end{pmatrix} e^{\frac{2}{3}\zeta^{3/2}\sigma_3},\quad &\arg\zeta\in(0,2\pi/3),
        \smallskip 
        \\
     \begin{pmatrix}
            -y_1(\zeta)&-y_2(\zeta)\\
            -y_1'(\zeta)&-y_2'(\zeta)
        \end{pmatrix}e^{\frac{2}{3}\zeta^{3/2}\sigma_3}, \quad &\arg\zeta\in(2\pi/3,\pi),
        \smallskip 
        \\  
        \begin{pmatrix}
            -y_2(\zeta)&y_1(\zeta)\\
            -y_2'(\zeta)&y_1'(\zeta)
        \end{pmatrix}e^{\frac{2}{3}\zeta^{3/2}\sigma_3}, \quad &\arg\zeta\in(\pi,4\pi/3),
        \smallskip 
        \\ \begin{pmatrix}
            y_0(\zeta)&y_1(\zeta)\\
            y_0'(\zeta)&y_1'(\zeta)
        \end{pmatrix}e^{\frac{2}{3}\zeta^{3/2}\sigma_3}, \quad &\arg\zeta\in(4\pi/3,2\pi) 
    \end{cases}
\end{equation}
and 
\begin{equation}
    \widehat{{\mathcal A}}(\xi):=\sqrt{2\pi}e^{-\frac{i\pi}{4}}\begin{cases}
        \begin{pmatrix}
            y_0(\xi)&y_2(\xi)\\
            y_0'(\xi)&y_2'(\xi)
        \end{pmatrix}e^{\frac{2}{3}\xi^{3/2}\sigma_3},\quad &\arg\xi\in(0,2\pi/3),
        \smallskip 
        \\
     \begin{pmatrix}
            -y_1(\xi)&y_2(\xi)\\
            -y_1'(\xi)&y_2'(\xi)
        \end{pmatrix}e^{\frac{2}{3}\xi^{3/2}\sigma_3}, \quad &\arg\xi\in(2\pi/3,\pi),
        \smallskip 
        \\  
        \begin{pmatrix}
            -y_2(\xi)&-y_1(\xi)\\
            -y_2'(\xi)&-y_1'(\xi)
        \end{pmatrix}e^{\frac{2}{3}\xi^{3/2}\sigma_3}, \quad &\arg\xi\in(\pi,4\pi/3),
        \smallskip 
        \\ \begin{pmatrix}
            y_0(\xi)&-y_1(\xi)\\
            y_0'(\xi)&-y_1'(\xi)
        \end{pmatrix}e^{\frac{2}{3}\xi^{3/2}\sigma_3}, \quad &\arg\xi\in(4\pi/3,2\pi). 
    \end{cases}
\end{equation}
\end{defi}

Using the standard Airy Riemann-Hilbert problem, one can show that the solution to the above Riemann-Hilbert problems in Definition~\ref{Def_RHP} is given by 
\begin{align}  \label{def of P beta bar}
P_{\overline\beta}(\zeta)& =e^{\frac{i\pi}{4}\sigma_3}
\frac{1}{\sqrt 2}\begin{pmatrix}
        1&-1\\
        1&1
    \end{pmatrix}\zeta^{\frac{\sigma_3}{4}}{\mathcal A}(\zeta), 
\\  \label{def of P beta hat}
 \widehat{P}_\beta(\xi)& =\begin{pmatrix}
        e^{\frac{i\pi}{4}}&0\\
        0&-e^{-\frac{i\pi}{4}}
    \end{pmatrix}\frac{1}{\sqrt 2}\begin{pmatrix}
        1&-1\\
        1&1
    \end{pmatrix}\xi^{\frac{\sigma_3}{4}}\widehat{{\mathcal A}}(\xi).
\end{align} 
See also \cite[Eq.(4.22)]{BBLM15}.

\begin{lem}[\textbf{Asymptotics of local parametrices}] The solutions $P_{ \overline{\beta} }$ and $\widehat{P}_\beta$ in \eqref{def of P beta bar} and \eqref{def of P beta hat} satisfy the following. 
\begin{itemize}
    \item As $\zeta \to \infty$, we have 
\begin{equation} \label{P beta bar asymp}
    P_{\overline\beta}(\zeta)=I+\frac{\Pi_{11}}{\zeta^{3/2}}+\frac{\Pi_{12}}{\zeta^{3}}+O\Big(\frac{1}{\zeta^{9/2}}\Big),
\end{equation}
where
\begin{equation}
\Pi_{11}=\frac{1}{8}\begin{pmatrix}
1/6&i\\i&-1/6
\end{pmatrix},\qquad \Pi_{12}=\frac{35}{384}\begin{pmatrix}
-1/12&i\\-i&-1/12
\end{pmatrix}.
\end{equation}
\item  As $\xi \to \infty$, we have
\begin{equation} \label{P beta asymp}
    \widehat{P}_{\beta}(\xi)=I+\frac{\Pi_{21}}{\xi^{3/2}}+\frac{\Pi_{22}}{\xi^{3}}+O\Big(\frac{1}{\xi^{9/2}}\Big),
\end{equation}
where
\begin{equation}
\Pi_{21}=\frac{1}{8}\begin{pmatrix}
1/6&-i\\-i&-1/6
\end{pmatrix},\qquad \Pi_{22}=\frac{35}{384}\begin{pmatrix}
-1/12&-i\\i&-1/12
\end{pmatrix}.
\end{equation}
\end{itemize}
\end{lem}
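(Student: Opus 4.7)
The plan is a direct asymptotic expansion: both $P_{\overline\beta}$ and $\widehat{P}_\beta$ are explicit dressings of the Airy parametrix, so the lemma reduces to substituting the classical Poincaré expansions of $\mathrm{Ai}$ and $\mathrm{Ai}'$ into the formulas \eqref{def of P beta bar}--\eqref{def of P beta hat} and tracking terms up to $O(\zeta^{-9/2})$. Specifically, for $|\arg z|<\pi$ one has
\begin{equation*}
\mathrm{Ai}(z) \sim \frac{e^{-\frac{2}{3}z^{3/2}}}{2\sqrt{\pi}\,z^{1/4}}\sum_{k\ge 0}(-1)^k u_k \bigl(\tfrac{2}{3}z^{3/2}\bigr)^{-k},\qquad \mathrm{Ai}'(z)\sim -\frac{z^{1/4}e^{-\frac{2}{3}z^{3/2}}}{2\sqrt{\pi}}\sum_{k\ge 0}(-1)^k v_k \bigl(\tfrac{2}{3}z^{3/2}\bigr)^{-k},
\end{equation*}
with $u_0=v_0=1$, $u_1=\tfrac{5}{72}$, $v_1=-\tfrac{7}{72}$, $u_2=\tfrac{385}{10368}$, $v_2=-\tfrac{455}{10368}$. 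The analogous expansions for $y_1,y_2$ and their derivatives follow by substituting $z\mapsto \omega z,\omega^2 z$ and using $\omega=e^{2\pi i/3}$.

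First I would establish, in the sector $\arg\zeta\in(0,2\pi/3)$, that
\begin{equation*}
\sqrt{2\pi}\,e^{-\frac{i\pi}{4}}\begin{pmatrix}y_0(\zeta)&-y_2(\zeta)\\ y_0'(\zeta)&-y_2'(\zeta)\end{pmatrix}=\zeta^{-\sigma_3/4}\cdot\frac{1}{\sqrt 2}\begin{pmatrix}1&i\\ i&1\end{pmatrix}\Bigl[I+\tfrac{B_1}{\zeta^{3/2}}+\tfrac{B_2}{\zeta^{3}}+O(\zeta^{-9/2})\Bigr]e^{-\frac{2}{3}\zeta^{3/2}\sigma_3},
\end{equation*}
for universal coefficient matrices $B_1,B_2$ computable in closed form from $u_k,v_k$ (this is the standard Airy-parametrix expansion; one verifies it by direct algebra using $\omega^{1/2}=e^{i\pi/3}$). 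Next, conjugating by the exponential $e^{\frac{2}{3}\zeta^{3/2}\sigma_3}$ on the right and by the dressing matrix $E(\zeta):=e^{\frac{i\pi}{4}\sigma_3}\tfrac{1}{\sqrt 2}\bigl(\begin{smallmatrix}1&-1\\ 1&1\end{smallmatrix}\bigr)\zeta^{\sigma_3/4}$ on the left, the factors $\zeta^{\pm \sigma_3/4}$ together with the outer orthogonal matrices collapse to identity because $E(\zeta)\zeta^{-\sigma_3/4}N = I$ with $N=\tfrac{1}{\sqrt 2}\bigl(\begin{smallmatrix}1&i\\ i&1\end{smallmatrix}\bigr)$. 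This yields
\begin{equation*}
P_{\overline\beta}(\zeta)=I+\tfrac{\Pi_{11}}{\zeta^{3/2}}+\tfrac{\Pi_{12}}{\zeta^{3}}+O(\zeta^{-9/2}),
\end{equation*}
with $\Pi_{11},\Pi_{12}$ obtained as $N B_1 N^{-1}, N B_2 N^{-1}$. Plugging in the numerical values of $u_1,v_1,u_2,v_2$ produces precisely the matrices stated.

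For $\widehat{P}_\beta$ the argument is identical; the extra sign changes in Definition~\ref{Def_Airy parametrices} (columns swapped or negated) amount to replacing $N$ by $\overline N$, which exactly reverses the signs of the off-diagonal $i$'s in $\Pi_{11},\Pi_{12}$ and gives $\Pi_{21},\Pi_{22}$. The coefficients are sector-independent: although we derived them in one sector of $\arg\zeta$, the jump matrices in Definition~\ref{Def_RHP} are all unimodular and trivial at infinity, so the unique bounded expansion of $P_{\overline\beta}(\zeta)-I$ is the same in every sector.

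The main obstacle is purely bookkeeping: one must carry the $v_2$ and $u_2$ corrections through the $2\times 2$ matrix product $N(\cdots)N^{-1}$ without arithmetic error, since several off-diagonal cancellations occur at order $\zeta^{-3}$. Once $B_1,B_2$ are assembled, the identification with $\Pi_{11},\Pi_{12}$ reduces to checking $\tfrac{1}{48}\cdot\bigl(\text{trace and off-diagonal entries}\bigr)$ match $\tfrac{1}{8}\bigl(\tfrac{1}{6},i\bigr)$, and analogously for the $\tfrac{35}{384}$ coefficient; no deeper analytic input is needed.
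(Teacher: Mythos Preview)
Your proposal is correct and takes essentially the same approach as the paper: the paper simply recalls the Poincar\'e expansions of $\Ai$ and $\Ai'$ with coefficients $u_k,v_k$ (citing \cite[Eqs.(9.7.5),(9.7.6)]{NIST}) and states that the lemma follows from direct computation, which is exactly the substitution-and-bookkeeping argument you outline. Your write-up is in fact more detailed than the paper's, which relegates the explicit matrix manipulations to an unpublished calculation.
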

\begin{proof}

Recall the following behaviour of the Airy function: for $|\arg z|<\pi$, 
\begin{equation}
\Ai(z) \sim \frac{ e^{-\frac23 z^{3/2}  } }{ 2\sqrt{\pi} z^{1/4} } \sum_{k=0}^\infty (-1)^k \frac{u_k}{ ( \frac23 z^{3/2}  )^k}, \qquad \Ai'(z) \sim -\frac{z^{1/4} e^{- \frac23 z^{3/2}  } }{ 2\sqrt{\pi} } \sum_{k=0}^\infty (-1)^k \frac{v_k}{ ( \frac23 z^{3/2}  )^k },
\end{equation}
where $u_0=v_0=1$ and for $k=1,2,\dots,$
\begin{equation}
u_k= \frac{(2k+1)(2k+3)\dots (6k-1)}{ (216)^k \, k! }, \qquad v_k= \frac{6k+1}{1-6k} u_k, 
\end{equation}
see \cite[Eqs.(9.7.5), (9.7.6)]{NIST}. 
Using these, the lemma follows from direct computations. 
\end{proof}

Following the definition of the global parametrix in \eqref{def of Phi}
we define 
\begin{align}  \label{def of H hat beta}
\widehat{\textup{\textbf{\textup{H}}}}_\beta(z)&: =\Phi(z)\widehat{S}(z), \qquad 
\widehat{S}(z) : = \begin{pmatrix}
        e^{\frac{i\pi}{4}}&0\\
        0&-e^{-\frac{i\pi}{4}}
    \end{pmatrix}\frac{1}{\sqrt 2}\begin{pmatrix}
    1&-1\\
    1&1
\end{pmatrix}\xi(z)^{\frac{\sigma_3}{4}},
\\   \label{def of H beta bar}
 \textbf{\textup{H}}_{\overline\beta}(z)&:=\Phi(z) S(z), \qquad S(z):= e^{\frac{i\pi\sigma_3}{4}}\frac{1}{\sqrt 2}\begin{pmatrix}
    1&-1\\
    1&1
\end{pmatrix}\zeta(z)^{\frac{\sigma_3}{4}}. 
\end{align}
We show that they are holomorphic and find the $N$-dependence.

\begin{lem}[\textbf{Asymptotics of the global parametrices}] 
The global parametrices in Definition~\ref{Def_Global parametrix} satisfy the following.
\begin{itemize}
    \item As $z \to \overline{\beta}$, we have 
\begin{align}
\begin{split}
\textup{\textbf{H}}_{\overline\beta}(z) & =\frac{(\overline\beta-\beta)^{1/4}}{2}\begin{pmatrix}\displaystyle
(1+i)\gamma_{11}^{1/4}N^{1/6}&\displaystyle\frac{-i}{\sqrt2(R\kappa\gamma_{11})^{1/4}N^{1/6}}
    \smallskip 
    \\
    \displaystyle
    (1-i)\gamma_{11}^{1/4}N^{1/6}&\displaystyle\frac{1}{\sqrt2(R\kappa\gamma_{11})^{1/4}N^{1/6}}
\end{pmatrix}
\\
&\quad -\frac{\gamma_{11}-2i \sqrt{R \kappa}\gamma_{12}}{16(i\sqrt{R\kappa}\gamma_{11})^{3/4}}\begin{pmatrix}\displaystyle
  iN^{1/6} & \displaystyle\frac{1}{2(i\sqrt{R\kappa}\gamma_{11})^{1/2}N^{1/6}} 
  \\\displaystyle
  N^{1/6} &\displaystyle \frac{i}{2(i\sqrt{R\kappa}\gamma_{11})^{1/2}N^{1/6}} 
\end{pmatrix} (z-\overline{\beta}) +O\big((z-\overline{\beta})^2\big). 
\end{split}
\end{align}
\item As $z \to \beta,$ we have
\begin{align}
\begin{split}
\widehat{\textup{\textbf{H}}}_\beta(z) & = \frac{(\beta-\overline\beta)^{1/4}}{2}\begin{pmatrix}
\displaystyle    (1+i)\overline\gamma_{11}^{1/4}N^{1/6}& \displaystyle  \frac{-1}{\sqrt2(R\kappa\overline\gamma_{11})^{1/4}N^{1/6}}
    \smallskip 
    \\
\displaystyle      -(1-i)\overline\gamma_{11}^{1/4}N^{1/6}& 
\displaystyle  \frac{i}{\sqrt2(R\kappa\overline\gamma_{11})^{1/4}N^{1/6}}
\end{pmatrix}
\\
& \quad -   \frac{(i)^{1/4}(\overline\gamma_{12}+2i\sqrt{R\kappa}\overline\gamma_{11})}{16(R\kappa)^{3/8}\overline\gamma_{12}^{3/4}}\begin{pmatrix}\displaystyle N^{1/6} &\displaystyle \frac{i}{2(R\kappa)^{1/4}\overline\gamma_{12}^{1/2}}
  \\
   \displaystyle iN^{1/6} & \displaystyle\frac{1}{2(R\kappa)^{1/4}\overline\gamma_{12}^{1/2}}
\end{pmatrix}       (z-\beta) +O\big((z-\beta)^2\big). 
\end{split}
\end{align}
\end{itemize}
\end{lem}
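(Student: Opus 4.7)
The plan is to expand each of the two factors in $\textbf{H}_{\overline\beta}(z) = \Phi(z)S(z)$ as a Puiseux series in $(z-\overline\beta)^{1/2}$, verify that the half-integer powers cancel upon multiplication so that the result is holomorphic at $\overline\beta$, and then read off the constant and linear coefficients. The case of $\widehat{\textbf{H}}_\beta(z)$ is handled identically near $z=\beta$.

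The first preparatory step is to expand $F(z)$ from \eqref{def of F}. Using the branch of $\sqrt{(z-\beta)(z-\overline\beta)}$ specified there, one obtains
\begin{equation*}
F(z) = F(\overline\beta) + \frac{\sqrt{\overline\beta-\beta}}{2R}(z-\overline\beta)^{1/2} + \frac{1}{2R}(z-\overline\beta) + O\bigl((z-\overline\beta)^{3/2}\bigr),
\end{equation*}
whose derivative gives $F'(z)$ a leading singularity $\frac{\sqrt{\overline\beta-\beta}}{4R}(z-\overline\beta)^{-1/2}$. A crucial algebraic identity, which follows by combining \eqref{f conformal}, \eqref{def of beta} and \eqref{z pm}, is that
$F(\overline\beta) = z_- = q - i\sqrt{\kappa/R}$ (equivalently, $F$ interchanges $\beta,\overline\beta$ with $z_+,z_-$). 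Hence $RF(\overline\beta)-Rq = -i\sqrt{R\kappa}$, which makes the off-diagonal entry $m(z) := \sqrt{R\kappa}/(RF(z)-Rq)$ of $\Phi$ take the clean value $m(\overline\beta) = i$ at leading order, and yields its Puiseux expansion to the next order.

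Next, combining $\sqrt{RF'(z)}$ with $\zeta(z)^{\pm 1/4}$: by the local-coordinate asymptotics \eqref{zeta(z) beta bar expansion},
\begin{equation*}
\zeta(z)^{1/4} = \gamma_{11}^{1/4} N^{1/6}(z-\overline\beta)^{1/4}\Bigl[1+\tfrac{\gamma_{12}}{4\gamma_{11}}(z-\overline\beta)+\cdots\Bigr],
\end{equation*}
and $\zeta(z)^{-1/4}$ has the reciprocal expansion. The fractional powers $(z-\overline\beta)^{\pm 1/4}$ coming from $\sqrt{RF'(z)}$ and $\zeta(z)^{\pm 1/4}$ combine to integer powers, confirming holomorphy at $\overline\beta$. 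Extracting the common prefactor $(\overline\beta-\beta)^{1/4}/2$ from the leading term of $\sqrt{RF'(z)}$, the constant coefficient in the expansion of $\textbf{H}_{\overline\beta}$ is read off from the product of the leading pieces; in particular the $1\pm i = \sqrt{2}e^{\pm i\pi/4}$ entries displayed in the lemma come from the $e^{\pm i\pi/4}$ in $S(z)$ combined with the value $m(\overline\beta)=i$ in $\Phi$.

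The linear coefficient is then obtained by collecting the next-order corrections from each of the three expansions (of $F$, of $m(z)$, and of $\zeta(z)^{\pm 1/4}$). After performing the matrix multiplication, these contributions combine into the factor $\gamma_{11}-2i\sqrt{R\kappa}\,\gamma_{12}$ shown in the statement; the division by $(\mathrm{i}\sqrt{R\kappa}\gamma_{11})^{3/4}$ simply reflects the precise normalisation of $S(z)$. The main obstacle is the accurate bookkeeping of these three Puiseux series and ensuring the right branches of the various $1/4$-powers; however, because $m(\overline\beta)=i$ is exact, the corrections align cleanly, and the computation is routine. For $\widehat{\textbf{H}}_\beta(z)$ one repeats the argument near $\beta$, using $F(\beta)=z_+ = q+i\sqrt{\kappa/R}$, so that $m(\beta)=-i$, together with the conjugate coefficients $\overline\gamma_{11},\overline\gamma_{12}$ coming from \eqref{xi(z) beta expansion}, which yields the stated formula.
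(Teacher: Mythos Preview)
Your proposal is correct and follows essentially the same route as the paper, which simply states that the lemma follows from a long but straightforward computation using Lemma~\ref{Lem_local coord asymp}, \eqref{def of beta} and \eqref{def of F}. Your sketch makes the key algebraic simplification explicit, namely $F(\overline\beta)=z_-$ and hence $m(\overline\beta)=i$, which is exactly what drives the computation in the paper as well.
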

\begin{proof}
This follows from a long but straightforward computations using Lemma~\ref{Lem_local coord asymp},  \eqref{def of beta} and \eqref{def of F}. 
\end{proof}

We shall apply the partial Schlesinger transform to improve the local parametrix.
A similar feature in deriving the fine asymptotic behaviour in Riemann-Hilbert analysis appears in \cite{BB15}. Additionally, techniques to refine the challenging matching condition for local parametrices in the steepest descent analysis have been explored in \cite{Mo21}.

For our present purpose, we will implement the following steps:
\begin{itemize}
    \item Construct rational matrix functions $R_1(z)$ with the only pole at $\overline\beta$ and $R_2(z)$ with the only pole at $\beta$;
    \smallskip
    \item Construct holomorphic matrix functions $H_1(z)$ and $H_2(z)$ such that the modified global parametrix $R_1(z)R_2(z)\Phi(z)$ matches the local parametrix $R_2(z)H_1(z)\Phi(z)P_{\overline\beta}(\zeta(z))$ along $\partial D_{\overline\beta}$ and matches the local parametrix $R_1(z)H_2(z)\Phi(z)\widehat{P}_{\beta}(\xi(z))$ along $\partial D_{\beta}$.
\end{itemize}

We now define the rational functions. At first glance, the definition of these rational functions may appear quite complex, making it challenging to grasp the underlying motivation. However, this definition is crafted with a specific purpose: to ensure that in the proof of the theorem below, the error matrix satisfies a small norm Riemann-Hilbert problem. Notably, this formulation guarantees the elimination of all terms at the order of $O(N^{-1})$ within this norm, resulting in an error term of the order $O(N^{-2})$. 
This becomes clear in the proof of Theorem~\ref{Thm_OP fine asymp} below.

\begin{defi}[Rational functions] \label{Def_Rational functions}
Let 
\begin{align} \label{def of H1}
  H_1(z)& :=I-\frac{1}{48}\textbf{\textup{H}}_{\overline\beta}(z)\begin{pmatrix}
0&5/\zeta(z)^2\\-7/\zeta(z)&0
\end{pmatrix}\textbf{\textup{H}}_{\overline\beta}(z)^{-1}+\frac{h_{11}}{z-\overline\beta}+\frac{h_{12}}{(z-\overline\beta)^2}, 
\\
 H_2(z)&:=I-\frac{1}{48}\widehat{\textbf{\textup{H}}}_\beta(z)\begin{pmatrix}
0&5/\xi(z)^2\\-7/\xi(z)&0
\end{pmatrix}\widehat{\textbf{\textup{H}}}_\beta(z)^{-1}+\frac{h_{21}}{z-\beta}+\frac{h_{22}}{(z-\beta)^2},   \label{def of H2}
\end{align}
where 
\begin{align}
\begin{split}
h_{11}& =\frac{1+i}{128\sqrt2(R\kappa)^{1/4}\gamma_{11}^{5/2}N}\begin{pmatrix}\displaystyle
3\gamma_{11}-10i\sqrt{R\kappa}\gamma_{12}&\displaystyle\frac{19\gamma_{11}+ 30i\sqrt{R\kappa}\gamma_{12}}{3}\\ \displaystyle
\frac{19\gamma_{11}+30i\sqrt{R\kappa}\gamma_{12}}{3} &\displaystyle -3\gamma_{11}+ 10i\sqrt{R\kappa}\gamma_{12}
\end{pmatrix},
\\
h_{12}&= \frac{5(R\kappa)^{1/4}}{48\sqrt2\gamma_{11}^{3/2}N}\begin{pmatrix}
    -1+i&1+i\\ 1+i &  1-i
\end{pmatrix},
\qquad h_{21}=\overline{h_{11}},
\qquad 
h_{22}= \overline{h_{12}}.
\end{split}
\end{align}
These coefficient $h_{jk}$'s are defined in a way that the matrix functions $H_1$ and $H_2$ are holomorphic at $\overline\beta$ and at $\beta$ respectively. 

We further define 
\begin{equation}\label{def of R2}
  R_2(z):=I+\frac{h_{21}}{z-\beta}+\frac{h_{22}}{(z-\beta)^2}
\end{equation}
and 
\begin{equation}\label{def of R1}
  R_1(z) :=I+\frac{1}{z-\overline\beta}R_2(z)h_{11}R_2(z)^{-1}+\frac{1}{(z-\overline\beta)^2}R_2(z)h_{12}R_2(z)^{-1}.
\end{equation}
Here $R_1(z)$ is a rational matrix function with the only pole at $\overline \beta$, and $R_2(z)$ is a rational matrix function with the only pole at $\beta$.  
Let us also write
\begin{equation} \label{R1 R2 product}
R_1(z)R_2(z):=I+\begin{pmatrix}
    R_{11}(z)&R_{12}(z)\\
    R_{21}(z)&R_{22}(z)
\end{pmatrix}. 
\end{equation}
\end{defi}

\begin{defi}[Strong asymptotics for the pre-critical case] \label{Def_strong asymptotic A}
We define the strong asymptotics of $A(z)$ by 
\begin{equation} \label{A inft strong asymp pre}
    A^{\infty}(z)=\begin{cases}
        R_1(z)R_2(z)\Phi(z),  & z\notin D_{\overline\beta} \cup D_\beta,
        \smallskip 
        \\
        R_2(z)H_1(z)\Phi(z)P_{\overline\beta}(\zeta(z)), & z\in D_{\overline \beta},
        \smallskip 
        \\
        R_1(z)H_2(z)\Phi(z)\widehat{P}_{\beta}(\xi(z)), & z\in D_{\beta}.
        \end{cases}
\end{equation}  
\end{defi}

We note that the strong asymptotic in \cite{BBLM15} is stated in a similar manner but without the rational functions. Then, the error analysis in \cite{BBLM15} can only be applied to derive the leading-order asymptotic behaviour of the orthogonal polynomials. 
In other words, it is crucial to construct the rational functions in Definition~\ref{Def_Rational functions} to define \eqref{A inft strong asymp pre}, which yields the fine asymptotic behaviour.

\begin{proof}[Proof of Theorem~\ref{Thm_OP fine asymp} for the pre-critical case]

We aim to show that 
\begin{equation} \label{A A inft small}
A(z)= \Big(I+O(N^{-2})\Big)A^\infty(z), \qquad z \in \ext(\mathcal B).
\end{equation}
For this, let us define the error function
\begin{equation}
{\mathcal E}(z)=A^\infty(z)A^{-1}(z)
\end{equation}
and verify that $\mathcal{E}$ satisfies a small-norm Riemann-Hilbert problem with error $O(N^{-2})$. 

Note that when $z$ is away from $\partial D_{ \overline \beta }$ and $\partial D_\beta$, the jump of the error matrix is exponentially small in $N$, see \cite[Eq.(4.27)]{BBLM15}.
On the other hand, by \eqref{A inft strong asymp pre} and \eqref{RHP for A pre}, we have
\begin{equation}
{\mathcal E}_+(z){\mathcal E}_-(z)^{-1} = 
\begin{cases}
R_1(z)\Big(H_2(z)\Phi(z)\widehat{P}_{\beta}(\xi(z))\Phi(z)^{-1}R_2(z)^{-1}\Big)R_1(z)^{-1}, & z \in \partial D_\beta,
\smallskip 
\\
\Big(R_2(z)H_1(z)\Phi(z)P_{\overline\beta}(  \zeta(z) )\Phi(z )^{-1}R_2(z)^{-1}\Big)R_1(z)^{-1} , & z \in \partial D_{\overline \beta} .
\end{cases}
\end{equation}

We first discuss the case $z \to \partial D_\beta$. 
Note that by \eqref{def of H hat beta} and \eqref{P beta asymp}, we have
\begin{equation}
\begin{aligned} 
H_2(z)\Phi(z)\widehat{P}_{\beta}(\xi(z) )\Phi(z)^{-1}
&= H_2(z)\widehat{\textbf{H}}_\beta(z)\Big(S(z)^{-1}\widehat{P}_{\beta}(\xi(z))S(z)\Big)\widehat{\textbf{H}}_\beta(z)^{-1}
\\
&=H_2(z)\widehat{\textbf{H}}_\beta(z)\bigg(I+\frac{1}{48} \begin{pmatrix}
0&5/\xi(z)^2\\-7/\xi(z)&0
\end{pmatrix}+O\Big(\frac{1}{\xi(z)^{3}}\Big)\bigg) \widehat{\textbf{H}}_\beta(z)^{-1}
\\
&=H_2(z)\bigg (I+\frac{1}{48}\widehat{\textbf{H}}_\beta(z)\begin{pmatrix}
0&5/\xi(z)^2\\-7/\xi(z)&0
\end{pmatrix}\widehat{\textbf{H}}_\beta(z)^{-1}+O\Big(\frac{1}{\xi(z)^{3}}\Big)\bigg).    
\end{aligned}
\end{equation}
Then it follows from the definition \eqref{def of H2} of $H_2$ that 
\begin{equation}
    \begin{aligned} 
        H_2(z)\Phi(z)\widehat{P}_{\beta}(\xi(z))\Phi(z)^{-1}
        =I+\frac{h_{21}}{z-\beta}+\frac{h_{22}}{(z-\beta)^2}+O\Big(\frac{1}{\xi(z)^{3}},\frac{1}{N^2}\Big).    
    \end{aligned}
\end{equation}
Furthermore, by the definition \eqref{def of R2} of $R_2$, we have that as $z \to \partial D_\beta$, 
\begin{align} \label{small norm D beta}
R_1(z)\Big(H_2(z)\Phi(z)\widehat{P}_{\beta}(\xi(z))\Phi(z)^{-1}R_2(z)^{-1}\Big)R_1(z)^{-1} &= R_1(z)\Big( I+O(N^{-2})\Big)R_1(z)^{-1} = I+O(N^{-2}). 
\end{align}
Here, we have used the fact that $R_1(z)$ is analytic in $D_{\beta}$.

\medskip 

Next, we discuss the case $z \to \partial D_{ \overline \beta }$.
Note that by \eqref{def of H hat beta},  
\begin{align}
\begin{split}
\label{HP asymptotic beta bar}
& \quad  R_2(z)H_1(z)\Big(\Phi(z)P_{\overline\beta}(  \zeta(z) )\Phi(z)^{-1}\Big)R_2(z)^{-1}
\\
&=R_2(z)H_1(z) \textbf{H}_{\overline\beta}(z)\Big(\widehat S(z)^{-1} P_{\overline\beta}(  \zeta(z) )\widehat S(z)\Big)\textbf{H}_{\overline\beta}(z)^{-1}
R_2(z)^{-1}
\\
&=R_2(z)H_1(z)\bigg(I+\frac{1}{48}\textbf{H}_{\overline\beta}(z)\begin{pmatrix}
0&5/\zeta(z)^2\\-7/\zeta(z)&0
\end{pmatrix}\textbf{H}_{\overline\beta}(z)^{-1}+O\Big(\frac{1}{\zeta(z)^{3}}\Big)\bigg)R_2(z)^{-1} .    
\end{split}
\end{align} 
Furthermore, by \eqref{def of H1}, we have
\begin{equation}
    \begin{aligned} 
&\quad R_2(z)H_1(z)\Big(\Phi(z) P_{\overline\beta}(\zeta(z)) \Phi(z)^{-1}\Big)R_2(z)^{-1}
\\
&=R_2(z)\bigg(I+\frac{h_{11}}{z-\overline\beta}+\frac{h_{12}}{(z-\overline\beta)^2}+O\Big(\frac{1}{\zeta(z)^{3}}\Big)\bigg)R_2(z)^{-1}
\\
&=\bigg(I+\frac{1}{z-\overline\beta}R_2(z)h_{11}R_2(z)^{-1}+\frac{1}{(z-\overline\beta)^2}R_2(z)h_{12}R_2(z)^{-1}+O\Big(\frac{1}{\zeta(z)^{3}}\Big)\bigg)
.    
    \end{aligned}
\end{equation}
Then by the definition \eqref{def of R1} of $R_1$, one can see that as $ z \to D_{ \overline \beta }$, 
\begin{align}
\begin{split} \label{small norm D beta bar}
\Big(R_2(z)H_1(z)\Phi(z)P_{\overline\beta}(  \zeta(z) )\Phi(z )^{-1}R_2(z)^{-1}\Big)R_1(z)^{-1} = I+O(N^{-2}). 
\end{split}
\end{align} 

Combining \eqref{small norm D beta} and \eqref{small norm D beta bar}, we conclude \eqref{A A inft small}. 
As a consequence of \eqref{A A inft small}, for $z \in \C\setminus(\Omega_+\cup\Omega_-)$, the explicit transform \eqref{A transform} together with the definition \eqref{A inft strong asymp pre} gives rise to  
\begin{equation}
    \begin{aligned}
Y(z)&=e^{N\ell/2\sigma_3}A(z)e^{tN(g(z)-\ell/2)\sigma_3} =e^{N\ell/2\sigma_3}\Big(I+O(N^{-2})\Big)A^\infty(z)e^{N(g(z)-\ell/2)\sigma_3}\\
&=e^{ N\ell/2\sigma_3}\Big(I+O(N^{-2})\Big)R_1(z)R_2(z)\Phi(z)e^{ N(g(z)-\ell/2)\sigma_3}. 
    \end{aligned}
\end{equation}
Then by \eqref{def of Phi} and \eqref{R1 R2 product}, this in turn implies that 
\begin{align}
\begin{split}
Y(z)& =e^{N\ell/2\sigma_3}\Big(I+O(N^{-2})\Big)\begin{pmatrix}
    1+R_{11}(z)&R_{12}(z)\\
    R_{21}(z) &1+R_{22}(z)
\end{pmatrix}\begin{pmatrix}
    \sqrt{ R F'(z)}&\dfrac{\sqrt{\kappa F'(z)}}{F(z)-  q}
    \smallskip 
    \\
    -\dfrac{\sqrt{\kappa F'(z)}}{F(z)-  q}&\sqrt{ R F'(z)}
\end{pmatrix}e^{ N(g(z)-\ell/2)\sigma_3}.
\end{split}
\end{align}
In particular, since $ p_n(z)=[Y(z)]_{11}$, we obtain the desired asymptotic behaviour \eqref{pre asymptotics}. 
\end{proof}

\subsection{Post-critical case}

In this subsection, we now discuss the post-critical case, which is much simpler compared to the pre-critical case.

Recall that for the post-critical case, the global parametrix $\Phi$ is given by \eqref{def of Phi post}. 
Similar to the pre-critical case, we also apply the partial Schlesinger transform to improve the local parametrix. 
Namely, we derive a rational matrix functions $R(z)$ with the only pole at $\beta$,  and a holomorphic matrix function $H(z)$ such that the modified global parametrices 
\begin{equation}
\Phi(z)R(z), \qquad \Phi(z)\begin{pmatrix}
0&-1\\1&0
\end{pmatrix}R(z)\begin{pmatrix}
0&1\\-1&0
\end{pmatrix}
\end{equation}
match with the local parametrix $\Phi(z)P(\zeta(z))$ along $\partial D_{\beta}$.  The construction of such $H(z)$ and $R(z)$ with a simple pole at $\beta$ was described in \cite[Section 5]{BBLM15}. 

For the post-critical case, we define the local coordinate $\zeta$ inside $D_\beta$, 
 \begin{equation}
    \zeta(z)^2=2N \begin{cases}
         \phi(z), & z\in \ext(\mathcal B),
         \smallskip 
         \\
         -\phi(z),& z\in \mbox{int}(\mathcal B). 
     \end{cases}
 \end{equation}
Here $\phi$ is defined in Definition~\ref{def gfunction}. 
Here the sign is $+$ for $ z \in \ext(\mathcal{B})$ and $-$ for $z \in \rm{int}(\mathcal{B})$. 
Note that $\zeta$ maps ${\mathcal B}$ into rays $\gamma_+\cup \gamma_-$, where $\gamma_+:=[0,e^{\pi i/4\infty})$ and $\gamma_-:=[0,e^{-\pi i/4\infty})$, maps $\Gamma_{+}$ into imaginary axis $i\R$. 

It can be observed from the definition that as $z \to \beta$, we have the expansion
\begin{equation}
    \frac{\zeta(z)}{\sqrt N}=\frac{1}{\gamma_1}(z-\beta)(1+O(z-\beta)), \qquad  \gamma_1=\sqrt{\frac{\beta(\beta-a)}{a(b-\beta)}},
\end{equation} 
where $\beta$ and $b$ are given in \eqref{def beta and b}.
Inside $D_\beta$, we define the following Riemann-Hilbert problem
\begin{equation}
    \begin{cases}
    [P(\zeta)]_+=\begin{pmatrix}
0&-1\\1&0
\end{pmatrix}[P(\zeta)]_-\begin{pmatrix}
0&1\\-1&0
\end{pmatrix},\quad &\zeta\in \gamma_+\cup \gamma_-,
\smallskip 
\\
        [P(\zeta)]_+=[P(\zeta)]_-\begin{pmatrix}
1&0\\e^{\zeta^2}&1
\end{pmatrix},\quad &\zeta \in i\R,
\smallskip 
\\
P(\zeta)=I+O(\zeta^{-1}),\quad & \zeta\to\infty. \end{cases}
\end{equation} 
Then $\Phi(z)P(\zeta(z))$ satisfies the same jump conditions of $A(z)$.
Furthermore, the solution to this Riemann-Hilbert problem is given by
\begin{equation}
 P(\zeta(z))=   \begin{cases}
        H(z)F(\zeta(z)),& z\in \ext(\mathcal B),
        \smallskip 
        \\
        \begin{pmatrix}
0&-1\\1&0
\end{pmatrix}H(z)F(\zeta(z))\begin{pmatrix}
0&1\\-1&0
\end{pmatrix},& z\in \mbox{int}(\mathcal B),
    \end{cases}
\end{equation}
where 
\begin{equation}
    F(\zeta(z)):=\begin{pmatrix}
1&\displaystyle-\frac{1}{2\pi i}\int_{-i\infty}^{i\infty} \frac{e^{s^2/2}}{s-\zeta(z)} ds 
\smallskip 
\\
0&1
\end{pmatrix} 
\end{equation}
and 
\begin{equation}
    H(z):=R(z)\bigg(I-\frac{1}{\sqrt{2\pi}\zeta(z)}\begin{pmatrix}
0&1\\0&0
\end{pmatrix}\bigg),\qquad R(z):=I+\frac{\gamma_1}{\sqrt{2\pi N}}\frac{1}{z-\beta}\begin{pmatrix}
0&1\\0&0
\end{pmatrix}.
\end{equation}

\begin{figure}[t]  
\centering
 \begin{tikzpicture}[scale=0.7]

\draw[very thick,black, postaction={decorate, decoration={markings, mark = at position 0.5 with {\arrow{>}}}} ] (0,-4) -- (0,0);
\draw[very thick,black,postaction={decorate, decoration={markings, mark = at position 0.5 with {\arrow{>}}}} ] (0,0) -- (0,4);

\draw[very thick,black,postaction={decorate, decoration={markings, mark = at position 0.5 with {\arrow{>}}}} ] (0,0) -- (4,4);
\draw[very thick,black,postaction={decorate, decoration={markings, mark = at position 0.5 with {\arrow{>}}}} ] (0,0) -- (4,-4);

\foreach \Point/\PointLabel in {(-0.6,0)/0, (-1,2.6)/i\R_+,(-1,-2)/i\R_-,(3,2.6)/\gamma_+,(3,-2.3)/\gamma_- }
\draw[fill=black]  
 \Point node[below right] {$\PointLabel$};
 \end{tikzpicture}
 \caption{The jump contours of $P(\zeta)$ in $D_{\beta}$. }
 \end{figure}
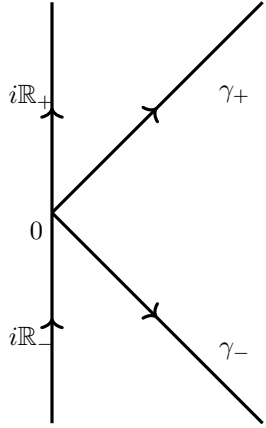

\begin{defi}[Strong asymptotics for the post-critical case]
We define the strong asymptotics of $A(z)$ by 
\begin{equation}
    A^{\infty}(z)=\begin{cases}
        \Phi(z)R(z),  & z\in \ext(\mathcal B)\setminus D_{\beta},
        \smallskip 
        \\
        \Phi(z)\begin{pmatrix}
0&-1\\1&0
\end{pmatrix}R(z)\begin{pmatrix}
0&1\\-1&0
\end{pmatrix}, & z\in \mbox{int}(\mathcal B)\setminus D_{\beta},
        \smallskip 
        \\
        \Phi(z)P(\zeta(z)), & z\in D_{\beta}.
        \end{cases}
\end{equation}\end{defi}

We are now ready to complete the proof of Theorem~\ref{Thm_OP fine asymp}. 

\begin{proof}[Proof of Theorem~\ref{Thm_OP fine asymp} for the post-critical case]
As before, we define the error function
\begin{equation}
{\mathcal E}(z)=A^\infty(z)A^{-1}(z).
\end{equation}
Note that as $z \to D_\beta$, 
\begin{equation}
   F(\zeta(z))= I+\frac{1}{\sqrt{2\pi}\zeta(z)}\begin{pmatrix}
0&1\\0&0
\end{pmatrix}+O\Big(\frac{1}{\zeta(z)^{3}}\Big). 
\end{equation}
Therefore, when $z$ is on the boundary of $\partial D_{\beta}\cap \ext(\mathcal B)$, we have
\begin{align*}
&\quad {\mathcal E}_+(z){\mathcal E}_-(z)^{-1} =\Phi(z) P(\zeta(z))R(z)^{-1} \Phi(z)^{-1} =H(z)F(\zeta(z))R(z)^{-1}
\\
&=R(z)\bigg(I-\frac{1}{\sqrt{2\pi}\zeta}\begin{pmatrix}
0&1\\0&0
\end{pmatrix}\bigg)\bigg(I+\frac{1}{\sqrt{2\pi}\zeta}\begin{pmatrix}
0&1\\0&0
\end{pmatrix}+O\Big(\frac{1}{\zeta^{3}}\Big)\bigg)R(z)^{-1} =I+O(N^{-3/2}). 
\end{align*} 
Similarly, one can check that the same error bound holds for $z\in \partial D_{\beta}\cap \mbox{int}(\mathcal B)$. One can also check that the error bounds are exponentially small for $z$ in other regions. Using the small norm theorem, we obtain $$A(z)=\Big(I+O(N^{-3/2})\Big)A^\infty(z).$$

Then for $z\in \ext(\mathcal B)\setminus D_{\beta}$, it follows that
\begin{align*}
\begin{split}
Y(z)&=e^{ N\ell/2\sigma_3}A(z)e^{ N(g(z)-\ell/2)\sigma_3}
=e^{ N\ell/2\sigma_3}\Big(I+O(N^{-3/2})\Big)A^\infty(z)e^{N(g(z)-\ell/2)\sigma_3}
\\
&=e^{ N\ell/2\sigma_3}\Big(I+O(N^{-3/2})\Big)\Phi(z)R(z)e^{ N(g(z)-\ell/2)\sigma_3}. 
\end{split}
\end{align*}
This gives 
\begin{equation}
Y(z) = e^{ N\ell/2\sigma_3}\Big(I+O(N^{-3/2})\Big)\begin{pmatrix}
1&\dfrac{1}{\sqrt{2\pi N}}\dfrac{\gamma_1}{z-\beta}
\smallskip 
\\0&1
\end{pmatrix}\begin{pmatrix}
\dfrac{z^{Nc+N}}{(z-a)^{Nc}}&0
\smallskip 
\\
0&\dfrac{(z-a)^{Nc}}{z^{Nc+N}}
\end{pmatrix}e^{-N\ell/2\sigma_3}, 
\end{equation}
which in particular yields 
$$p_{N}(z)=[Y(z)]_{11} = z^N \Big( \frac{z}{z-a} \Big)^{c N} \Big(1+O(N^{-3/2})\Big).$$

Note that the above partial Schlesinger transform only updates the $(1,2)$-entry in $Y(z)$, but not the $(1,1)$-entry. Thus, by continuing to use the partial Schlesinger transform, we can improve the error bound so that it becomes $O(N^{-m})$ for any $m > 0$. Therefore, we conclude \eqref{post asymptotics}.
\end{proof}

\bigskip

\section{Duality and critical case} \label{Subsec_critical}

In this section, we establish the free energy expansion for the critical case.
Let us first discuss the duality formula \eqref{equivalence btw three}. 

\begin{proof}[Proof of Proposition~\ref{Prop_equivalence}]
    
We consider the LUE of size $(cN)\times (cN)$ with joint probability distribution proportional to 
\begin{equation}
\prod_{j>k=1}^{cN} |\widetilde{\lambda}_j- \widetilde{\lambda}_k|^2 \prod_{j=1}^{cN} \widetilde{\lambda}_j^{N} e^{-N \widetilde{\lambda}_j}, \qquad (\widetilde{\lambda}_{cN} > \dots > \widetilde{\lambda}_1 >0).  
\end{equation}
Then by \cite[Proposition 3.1]{DS22} with $k=cN$, 
\begin{equation}
 \mathbb{P}\Big[ \widetilde{\lambda}_1 > x^2 \Big]= \mathbb{E} \Big| \det(\textbf{\textup{G}}_N - x ) \Big|^{2cN}  N^{ cN^2  }  e^{ -cN^2 x^2 }  \frac{ G(cN+1) G(N+1) }{ G(cN+N+1) } . 
\end{equation}
Furthermore by \eqref{ZN reference evaluations} and \eqref{rel char poly ZN}, it can be rewritten as 
\begin{align}
\begin{split}
\mathbb{P}\Big[ \widetilde{\lambda}_1 > x^2 \Big] &= \mathbb{E} \Big| \det(\textbf{\textup{G}}_N - x ) \Big|^{2cN} \, e^{ -cN^2 x^2 } \,  \frac{ Z_N^{ \rm Gin } }{ Z_N(0,c) } =  e^{ -cN^2 x^2 } \,  \frac{ Z_N(x,c) }{ Z_N(0,c) }.
\end{split}
\end{align}
We make a change of variable $\widetilde{\lambda}_j = c\, \widehat{\lambda}_j $. Then $\widehat{\lambda}_j$ follows the distribution proportional to  
\begin{equation}
\prod_{j>k=1}^{cN} |\widehat{\lambda}_j- \widehat{\lambda}_k|^2 \prod_{j=1}^{cN} \widehat{\lambda}_j^{N} e^{-cN \widehat{\lambda}_j}, \qquad (\widehat{\lambda}_{cN} > \dots > \widehat{\lambda}_1 >0).  
\end{equation}
If we relabelling $cN \mapsto N$, this follows the LUE \eqref{LUE} with $\alpha=1/c$. 
Then the duality formula \eqref{equivalence btw three} follows from 
\begin{equation}
\mathbb{P}\Big[ \lambda_1 > \frac{x^2}{c} \Big]= \mathbb{P}\Big[ \widehat{\lambda}_1 > \frac{x^2}{c} \Big] \bigg|_{N\to N/c}, \qquad 
 \mathbb{P}\Big[ \widehat{\lambda}_1 > \frac{x^2}{c} \Big]   = \mathbb{P}\Big[ \widetilde{\lambda}_1 > x^2 \Big] =    e^{ -cN^2 x^2 } \,  \frac{ Z_N(x,c) }{ Z_N(0,c) }.
\end{equation}
This completes the proof. 
\end{proof}

We now prove Proposition~\ref{Prop_critical expansion}. 

\begin{proof}[Proof of Proposition~\ref{Prop_critical expansion}]
Note that by \eqref{a cri asymp}, 
\begin{align*}
\frac{a^2}{c} 
&= \lambda_- - \frac{  ( \sqrt{c+1}-\sqrt{c})^{4/3} }{  c^{7/6} (c+1)^{1/6} } \,s\, N^{-2/3} +O(N^{-4/3}). 
\end{align*}
Notice here that the Marchenko-Pastur law \eqref{MP} satisfies the behaviour
\begin{equation}
\frac{1}{2\pi}\frac{\sqrt{(\lambda_{+}-x) (x-\lambda_{-})  }}{x} \sim \frac{\delta}{\pi} \sqrt{ x-\lambda_- }, \qquad x \to \lambda_-,
\end{equation}
where
\begin{equation} \label{def of delta}
\delta = \frac{1}{2} \frac{ \sqrt{ \lambda_+-\lambda_- } }{ \lambda_- } = \frac{ (\alpha+1)^{1/4}}{ (\sqrt{\alpha+1}-1)^2 } =   \frac{ c^{3/4} (c+1)^{1/4}}{  (\sqrt{c+1}- \sqrt{c} )^2 }.  
\end{equation}
Then by the edge universality of the Hermitian unitary ensembles \cite{DKMVZ99}, it follows that 
\begin{equation}
\mathbb{P}\Big[  (\lambda_1-\lambda_-) (\delta N)^{2/3}   > -y \Big] \to F_{ \rm TW }(y) . 
\end{equation}
(In our present case, this also follows from the classical Plancherel-Rotach asymptotic formula for the Laguerre polynomials.)
To be more precise, we have 
\begin{align*}
\mathbb{ P }\Big[ \lambda_1 > \frac{a^2}{c} \Big] & = \mathbb{ P }\Big[ \lambda_1 > \lambda_- - \frac{  ( \sqrt{c+1}-\sqrt{c})^{4/3} }{  c^{7/6} (c+1)^{1/6} } \,s\, N^{-2/3} +O(N^{-4/3}) \Big]
\\
& = \mathbb{P} \Big[ (\lambda_1-\lambda_-) (\delta N)^{2/3} >  -\delta^{2/3} \frac{  ( \sqrt{c+1}-\sqrt{c})^{4/3} }{  c^{7/6} (c+1)^{1/6} }  s   \Big] +O(N^{-2/3})
\\
&= \mathbb{P} \Big[ (\lambda_1-\lambda_-) (\delta N)^{2/3} >  -  \frac{ s }{  c^{2/3}  }    \Big] +O(N^{-2/3}).
\end{align*}
Therefore we obtain 
\begin{equation}
\mathbb{ P }\Big[ \lambda_1 > \frac{a^2}{c} \Big] = F_{\rm TW}(c^{-2/3} s) +O(N^{-2/3}).
\end{equation}
Then the proposition follows from Proposition~\ref{Prop_equivalence} and \eqref{Z ind Gin asy}. 
\end{proof}

While the proof of Proposition~\ref{Prop_critical expansion} follows easily from the duality relation as well as the well-established Hermitian random matrix theory, our overall strategy presented in Subsections~\ref{Subsec_Deform}, \ref{Subsec_tau} and \ref{Subsec_fine asymp} also works for the critical case.
For this case, again by using the partial Schlesinger transform, one can obtain the fine asymptotic behaviour of the orthogonal polynomial, which can be written in terms of the solution to the Painlev\'e II equation 
\begin{equation}
\mathfrak{q}''(s)=s\mathfrak{q}(s)+2\mathfrak{q}(s)^3
\end{equation} with the behaviour 
\begin{equation}
\mathfrak{q}(s)=\begin{cases}
 {\rm Ai}(s)+{ O}\Big(\dfrac{e^{-\frac{4}{3}s^{3/2}}}{s^{1/4}}\Big),
 & s\to +\infty ,
 \smallskip 
 \\
 \sqrt{-\frac{1}{2}s}\Big(1+{ O}\big(s^{-3}\big)\Big),& s\to -\infty.
\end{cases}
\end{equation}
An additional difficulty arises from the fact that, regardless of the deformation \eqref{ZN ref post} or \eqref{ZN ref pre} we employ, in the critical case, we also need to address either the post-critical or pre-critical regimes. Nonetheless, our approach leads to the free energy expansion expressed in terms of the Painlevé II solution, which should also coincide with the Tracy-Widom distribution.
This would lead to an alternative representation of the Tracy-Widom distribution, see also \cite{Bar24} for a recently found another representation.

\subsection*{Acknowledgements} 
Sung-Soo Byun was partially supported by the National Research Foundation of Korea grant (RS-2023-00301976, RS-2025-00516909).
Seong-Mi Seo was partially supported by the National Research Foundation of Korea (NRF-2022R1I1A1A01072052).
Meng Yang was partially supported by the Research Grant RCXMA23007 and the Start-up funding YJKY230037 at Great Bay University.

The authors are greatly indebted to Seung-Yeop Lee for several valuable suggestions on the partial Schlesinger transform during the Riemann-Hilbert analysis. 
Special thanks are extended to Nick Simm for bringing the duality formula to our attention, to Christian Webb for insightful discussions concerning Remark~\ref{Rem_small insertion}, and to Christophe Charlier for engaging in numerous extensive discussions, particularly regarding Remark~\ref{Rem_1D partition}.
In addition, we would like to express our gratitude to Gernot Akemann, Yacin Ameur, Paul Bourgade, Peter J. Forrester, Kohei Noda, Sylvia Serfaty, and Aron Wennman for their interest and stimulating discussions.

\end{document}